\documentclass[10pt]{article}
\usepackage{caption}
\usepackage{wrapfig}
\usepackage{fullpage}
\usepackage{float}
\usepackage[font=small]{caption}
\usepackage{longtable}
\usepackage{booktabs}
\usepackage{array}

\usepackage{latexsym}
\usepackage{amssymb}
\usepackage{amsmath}
\usepackage{enumerate}
\usepackage{mathtools}
\usepackage{verbatim}
\usepackage{braket}
\usepackage{xcolor}
\usepackage{url}	
\usepackage{graphicx} 
\usepackage{subcaption}
\usepackage{tikz-cd}

\usepackage{thmtools}

\usepackage[colorlinks=true, linkcolor=blue]{hyperref}

\newtheorem{theorem}{Theorem}
\newtheorem{theoremS}{Theorem}[section]
\newtheorem{lemma}[theoremS]{Lemma}
\newtheorem{definition}[theoremS]{Definition}
\newtheorem{corollary}[theoremS]{Corollary}
\newtheorem{conjecture}[theoremS]{Conjecture}

\newtheorem{remark}[theoremS]{Remark}

\def\finito{{\hspace*{\fill}  \mbox{$\blacksquare $}}}

\newcommand{\tabfigscale}[2]{%
  \includegraphics[scale=#2]{figures/#1}%
}

\newcommand{\tabfigscaletrim}[3][0mm]{%
  \includegraphics[
    scale=#2,
    trim=#1 0 #1 0,
    clip
  ]{figures/#3}%
}

\newcommand{\hfillih}{\hspace{0.3em}}
\newcommand{\hfilli}{\hspace{0.7em}}
\newsavebox{\galbox}
\newsavebox{\galboxii}
\newcommand{\galscale}{0.4}
\newcommand{\galmaxw}{0.92\textwidth}
\newcommand{\galmaxh}{0.8\textheight}
\newcommand{\drawfig}[2]{%
  \sbox{\galbox}{\includegraphics[scale=\galscale]{#1/#2}}%
  \ifdim\wd\galbox>\galmaxw \sbox{\galbox}{\includegraphics[width=\galmaxw]{#1/#2}}\fi
  \ifdim\ht\galbox>\galmaxh \sbox{\galbox}{\includegraphics[height=\galmaxh]{#1/#2}}\fi
  \usebox{\galbox}}
\newcommand{\subfig}[2]{\subcaptionbox{}{\drawfig{#1}{#2}}}
\newcommand{\sfg}[1]{\subfig{figures}{#1}}
\newcommand{\tabfig}[2]{%
  \sbox{\galbox}{\includegraphics[scale=\galscale]{figures/#1}}%
  \ifdim\wd\galbox>#2 \sbox{\galbox}{\includegraphics[width=#2]{figures/#1}}\fi
  \usebox{\galbox}}

\begin{document}

\title{
Hardness of approximation for minimum-weight decoding of two-dimensional topological quantum codes
}

\author{Louay Bazzi
\thanks{Department of Electrical and Computer Engineering, American University of Beirut,  Beirut, Lebanon.}~\thanks{Email: \texttt{lb13@aub.edu.lb}}~
and   Georges Khater\footnotemark[1]~\thanks{Email: \texttt{gak22@mail.aub.edu}}}

\maketitle
\begin{abstract}
Efficient decoding is essential for the practical realization of
fault-tolerant quantum computers. We study the computational complexity of
minimum-weight decoding for topological quantum codes. For surface codes
under the depolarizing channel, we consider Minimum-Weight decoding, which
seeks a minimum-weight Pauli error consistent with both the $X$- and
$Z$-syndromes. For color codes under independent $X$- and $Z$-error models,
we consider Separate Minimum-Weight decoding.

Assuming $P\neq NP$, we establish polynomial additive inapproximability gaps
for these problems.  
Specifically, for the toric code and the $4.8.8$ color
code on the torus,
there exists a constant $c>0$ such that no
polynomial-time algorithm can always produce a solution whose weight
is within $cN^{1/14}$ of the optimum, where $N$ is the number of qubits, unless $P=NP$. 
 For the planar surface code, we obtain an
$\Omega(N^{1/18})$ gap. Our inapproximability results use H{\aa}stad's
hardness of approximation for MAX-3SAT.

Our reduction develops a general, modular framework for embedding logical constraints into coupled primal--dual join problems on a lattice. A key ingredient is a localization argument that controls unintended interactions between different parts of the construction.
\end{abstract}

\tableofcontents

\section{Introduction}\label{intro}

Quantum error correction protects quantum information against noise and is a central ingredient in the development of fault-tolerant quantum computers~\cite{Shor1995,Shor1996,CalderbankShor1996,Steane1996,Gottesman1997}.  
Among the leading candidates for practical fault-tolerant quantum computation are \emph{topological quantum codes}, in which stabilizer checks act locally while logical information is encoded nonlocally \cite{Bombin2013}.
The most prominent family of topological quantum codes is that of  
\emph{surface codes}, with the toric code~\cite{Kitaev1997} as its canonical example. Its planar counterpart, the planar surface code~\cite{BravyiKitaev1998, Dennis2002}, is particularly well suited to physical implementations due to its  high error threshold and compatibility with two-dimensional architectures.
Alongside surface codes,  \emph{two-dimensional color codes}  constitute a  widely studied family of topological quantum codes  and support a broader set of transversal logical gates~\cite{BombinMartinDelgado2006}.

Efficient decoding is essential for the  realization of fault-tolerant
quantum computers. 
We assume perfect syndrome measurements. An   ideal decoding objective is
\emph{Maximum-Likelihood} decoding, which seeks, given an
observed syndrome, the most likely equivalence class of errors under the
assumed noise model.  A simpler decoding objective is to seek a most likely individual error
consistent with the observed syndrome, rather than the most likely
equivalence class. 
Under the depolarizing channel, this amounts to finding a minimum-weight
Pauli error consistent with both the $X$- and $Z$-syndromes. We refer to this
as \emph{Minimum-Weight} decoding.
In the more tractable independent $X$- and
$Z$-error model, the two types of errors can be decoded separately: one seeks
a minimum-weight $X$-error consistent with the $Z$-syndrome and a
minimum-weight $Z$-error consistent with the $X$-syndrome. We refer to this
as \emph{Separate Minimum-Weight} decoding. This paper studies the computational complexity of these two
minimum-weight decoding problems for topological codes.

For surface codes, Separate Minimum-Weight decoding reduces to minimum-weight matching and can
be performed in $O(N^3\log N)$ time~\cite{Higgott2022,HiggottGidney2023},
where $N$ is the number of qubits. 
Recently, this complexity was improved to
$O(N^{3/2}\log N)$~\cite{Bazzi26,Bazzi26C}. 
Beyond these theoretical worst-case bounds, several  efficient
implementations have been developed in practice; see the survey
in~\cite{deMartiOlius2024}.

The situation becomes more subtle for Separate Minimum-Weight decoding of
color codes and Minimum-Weight decoding of surface codes. Efficient suboptimal decoders with good empirical performance have been
developed for Separate Minimum-Weight decoding of color codes~\cite{DColor}
and Minimum-Weight decoding of surface codes~\cite{DTcor,YuanLu2022}.

This raises a natural complexity-theoretic question: to what extent can
Minimum-Weight decoding for surface codes and Separate Minimum-Weight decoding
for color codes be performed efficiently? 

Separate Minimum-Weight decoding for
general stabilizer codes is known to be NP-hard~\cite{HsiehLeGall2011}.
More recently, Fischer and Miyake~\cite{FischerMiyake}
showed that decoding surface codes is NP-hard under non-identically distributed
depolarizing noise, where the error probabilities depend both on the qubit and
on whether the error is of type $X$, $Y$, or $Z$. Their hardness result relies
on carefully constructed, spatially varying error probabilities and therefore
does not imply intractability of Minimum-Weight decoding under the identically
distributed depolarizing channel. Nevertheless, it points to the difficulty
introduced by correlated Pauli errors, since Separate Minimum-Weight decoding
remains tractable under non-identically distributed noise.

Very recently, two works established NP-hardness results for Separate
Minimum-Weight decoding of two-dimensional color codes and Minimum-Weight
decoding of surface codes. In March 2026, Walters and Turner~\cite{WT26} proved that Separate
Minimum-Weight decoding for two-dimensional color codes is NP-hard via a
reduction from 3SAT. Then, also in March 2026 and independently and concurrently with our work, Gu, Wang,
and Kubica~\cite{GWK26} established NP-hardness of Minimum-Weight decoding
for surface codes via a reduction from 3-Dimensional Matching. They also
recovered the NP-hardness result of Walters and Turner for color codes through
a different reduction. These results rule out efficient exact decoding unless
$P=NP$, and both works raised the question of how well minimum-weight
solutions can be efficiently approximated.

Even more recently, this question has been investigated from the algorithmic
side. In June 2026, Walters~\cite{WaltersApprox26} developed a polynomial-time approximation scheme for Separate Minimum-Weight decoding of color codes, while, independently and concurrently, Gu, Wang, and Kubica~\cite{GWKApprox26} developed one for Minimum-Weight decoding of surface codes and, more broadly, topological codes. 
In particular, these results show that, for every fixed $\varepsilon>0$, a
multiplicative $(1+\varepsilon)$-approximation is possible in polynomial time;
that is, one can produce a solution whose weight is at most
$(1+\varepsilon)$ times the optimum. 
This leaves open the question of whether
polynomial-time approximation is possible with additive error sublinear in
$N$, where $N$ is the number of qubits.

In this work, we partially answer this question in the negative. In
particular, assuming $P\neq NP$, we establish polynomial additive
inapproximability gaps for Minimum-Weight decoding of surface codes and
Separate Minimum-Weight decoding of the $4.8.8$ color code. Specifically, our results apply to the toric code, the planar surface code,
and the $4.8.8$ color code on the torus. 
The inapproximability gap is   
 $\Omega(N^{1/14})$ for the toric code and the $4.8.8$ color code
on the torus, and $\Omega(N^{1/18})$ for the planar surface code.  That is, unless $P=NP$, no polynomial-time algorithm can always
produce a solution whose weight is within the corresponding additive gap
of the optimum. 
Our inapproximability results are obtained using H{\aa}stad's hardness of
approximation of MAX-3SAT~\cite{Has01}.

We next compare our results and techniques with the closely related works discussed above.

The reduction of Walters and Turner encodes logical constraints through
geometrically arranged decoding gadgets and provides an important starting
point for the present work. Our work was motivated in part by their
construction, and, at a high level, our reduction follows the same standard
circuit-gadget paradigm for reductions from 3SAT. In its realization as a
decoding problem, the underlying ideas behind our wire, clause, and
garbage-collection gadgets are adapted from those of Walters and Turner.  
While their framework is developed directly for color codes, 
ours is developed for surface codes. 
We first prove that Minimum-Weight decoding 
of surface codes is NP-hard and then strengthen this result to
hardness of approximation. We subsequently transfer these hardness results to Separate Minimum-Weight
decoding of two-dimensional color codes arising from the $4.8.8$ tiling on
the torus, through a reduction from Minimum-Weight decoding of the toric code.

The soundness argument in our setting involves a difficulty that is absent
from the construction of Walters and Turner. 
 Their argument exploits separated syndromes for which
the number of defects provides a  direct  lower bound on the weight of any
correction. Requiring a correction to attain this lower bound strongly
constrains its behavior locally and thereby considerably simplifies the
soundness argument. 
In our setting, no analogous counting argument is available, and different parts of a correction can potentially interact across gadgets.  To overcome this issue, we establish a Localization Lemma,
which rules out such unintended interactions between distinct gadgets and
thereby recovers the local control needed for the reduction. Thus, although
the gadget vocabulary and some of the geometric ideas are inspired by
Walters and Turner, the mechanism underlying the soundness of our
construction is substantially different.   

The  work of Gu, Wang, and Kubica~\cite{GWK26} and the present work
independently establish NP-hardness of surface-code decoding through
different reductions. 
 Whereas their proof proceeds from 3-Dimensional
Matching, ours gives a direct reduction from 3SAT and, in addition,
establishes hardness of approximation. The two approaches thus provide
complementary perspectives on the computational complexity of optimal
surface-code decoding.

Our results show that the correlations introduced by the depolarizing channel
fundamentally alter the computational landscape of surface-code decoding:
while independent error models admit efficient matching-based decoders,
hardness under correlated Pauli errors persists for sufficiently
accurate additive approximations, despite the existence of polynomial-time approximation schemes.  Likewise, for two-dimensional color codes, Separate Minimum-Weight decoding remains computationally hard for sufficiently accurate additive approximations, despite the existence of polynomial-time approximation schemes.

Beyond the
hardness results themselves,  we believe that the gadget and localization framework developed
here provides a reusable toolkit for studying the computational complexity of
decoding problems on topological codes. Its modular structure separates the
encoding of logical constraints into atomic gadgets from the control of
nonlocal interactions through localization, potentially allowing the framework
to be adapted to other codes and complexity reductions.

\subsection{Preliminaries}\label{prel}

We begin with the graph-theoretic formulation of the Minimum-Weight Decoding problem for the toric and planar surface codes. We then introduce its graph-theoretic abstraction, the \emph{Minimum-Weight Join} problem, which will serve as the central object of study throughout the paper. The definition of the decoding problem for color codes is deferred to Section~\ref{color}.

\subsubsection{Minimum-Weight Decoding for the toric and planar surface codes}


{
\renewcommand{\galscale}{0.11}
\begin{figure}[!htbp]\centering
\sfg{toric.pdf}\hfill
\sfg{planar.pdf}\hfill
{
\renewcommand{\galscale}{0.3}
\sfg{example.pdf}
}
\caption{
(a) \textbf{Primal and dual lattices of the $[[2L^2,2,L]]$ toric code for $L=4$.}
The primal lattice is shown in blue and the dual lattice in red. Vertices and edges identified with those on the opposite side are shown as unfilled circles and dashed lines, respectively.
(b) \textbf{Primal and dual lattices of the $[[2L^2-2L+1,1,L]]$ planar surface code for $L=4$.}
The boundary vertices are shown as squares. 
(c) \textbf{Example of an error.}
The error edges and the corresponding defects are shown on the primal lattice.
}
\label{tpe}
\end{figure}
}

Consider the $[[2L^2,2,L]]$ \emph{toric code}~\cite{Kitaev1997}, whose primal and dual lattices are shown in Figure~\ref{tpe}(a). The \emph{dual} of a primal (dual) edge is the unique dual (primal) edge crossing it. Each qubit is associated with a primal--dual edge pair. An \emph{error} in the toric code is a collection of primal and dual edges. The \emph{weight} of an error is the total number of edges, counting each primal--dual edge pair only once. Equivalently, the weight is the number of distinct qubits associated with the edges of the error.
A primal or dual lattice vertex is called a \emph{defect} with respect to an error if it is incident to an odd number of error edges. The \emph{syndrome} of an error is the set of its primal and dual defects. 
See Figure~\ref{tpe}(c) for an example of an error and its corresponding syndrome. 
In the \emph{Minimum-Weight Decoding} problem for the toric code, we are given a set of primal and dual defects and seek a minimum-weight error whose syndrome is the prescribed set of defects.

The \emph{Minimum-Weight Decoding} problem for the $[[2L^2-2L+1,1,L]]$ \emph{planar surface code}~\cite{BravyiKitaev1998, Dennis2002} is defined analogously. The primal and dual lattices of the planar surface code are shown in Figure~\ref{tpe}(b). The primal and dual vertices shown as squares are called \emph{boundary vertices}. The only difference is that defects are restricted to nonboundary vertices, whereas the parity of the error at each boundary vertex is unconstrained.

\subsubsection{Joins and the Minimum-Weight Join Problem}

Rather than representing errors as sets of edges, it is more convenient for our purposes to represent them as collections of paths. Moreover, it is convenient to restrict attention to acyclic errors. This is not restrictive in the context of approximation algorithms for the decoding problems, since the cycles of an error can be removed in polynomial time while preserving its syndrome and without increasing its weight. Acyclic errors are captured by the notion of a \emph{join}, which we define next.

Before defining joins, we specify the domains in which they are defined. 
Except for Section~\ref{planarS}, we will not be concerned with the boundary vertices of the planar surface code. Indeed, the tools developed in this paper are most naturally formulated on either the infinite square lattice or the torus. Accordingly, throughout the paper we fix a \emph{domain}, consisting of either the infinite square lattice and its dual or the square lattice on the torus, obtained by identifying opposite sides of a $K\times K$ square lattice, and its dual. The former setting is used to model the planar surface code away from the boundary, whereas the latter corresponds to the toric code.

We refer to defects in the primal and dual lattices as \emph{primal defects} and \emph{dual defects}, respectively. 
 Similarly, we call paths in the primal and dual lattices \emph{primal paths} and \emph{dual paths}, respectively. 

A \emph{primal join} with respect to a prescribed set of primal defects of
even cardinality is a collection of edge-disjoint primal paths, each connecting
two defects, whose union forms an acyclic graph and induces odd degree at each
defect. 
\footnote{
Under this definition,  distinct joins may have the same underlying edge set because the paths may intersect at vertices, giving rise to different path decompositions of that edge set. }   The even-cardinality assumption is necessary, since every graph has an even
number of odd-degree vertices. It is also sufficient, since the domain is
connected. 
 A \emph{dual join} is defined analogously with respect to a prescribed set of dual defects of
even cardinality. A \emph{join}, with respect to both the primal and dual defects, consists of a primal join together with a dual join.
The \emph{weight} of a join is the total number of edges in its primal and dual joins, counting each primal--dual edge pair only once.

In the \emph{Minimum-Weight Join} problem, given even-cardinality sets of primal and dual defects in 
the domain, the goal is to find a minimum-weight join for those defects.  
See 
Figure~\ref{tpe}(c) for an example of a 
 minimum-weight join.

\subsection{Summary of results}

We start by  showing that Minimum-Weight Join is NP-hard.

\begin{restatable}{theorem}{NPHTheorem}\label{nphardnessMWJ} 
\textbf{\emph{(NP-hardness of Minimum-Weight Join)}}
The Minimum-Weight Join problem is NP-hard.
\end{restatable}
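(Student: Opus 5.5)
We prove Theorem~\ref{nphardnessMWJ} by a polynomial-time reduction from 3SAT in the circuit-gadget style. Given a 3CNF formula $\varphi$ with $n$ variables and $m$ clauses, we construct in polynomial time a set of primal and dual defects in the domain, placed in a torus of side $K=\mathrm{poly}(n,m)$, together with a threshold $W(\varphi)$. The goal is that $\varphi$ is satisfiable if and only if the minimum join weight is at most $W(\varphi)$.

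The only source of hardness is the weight function. A primal edge and its dual edge together cost $1$ rather than $2$. Without this discount, the primal and dual problems decouple into two independent minimum-weight perfect matching problems, each solvable in polynomial time. The gadgets must therefore be designed so that a cheap solution has to align primal paths with dual paths.

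\textbf{Gadgets.} We build the following gadgets, each a rigid local arrangement of primal and dual defects.
\begin{enumerate}[(i)]
\item A \emph{wire}: a long chain of primal defect pairs running parallel to a chain of dual defect pairs. It has exactly two minimum-cost joins, corresponding to the two matchings of consecutive pairs shifted by one, and these encode the values $0$ and $1$. The cost of each is obtained by counting shared qubits.
\item A \emph{splitter}, to fan out variable copies, and \emph{turns and crossings}, so that wires can be routed on the lattice. At a crossing, a primal wire crosses a dual wire without exchanging information.
\item A \emph{clause} gadget where three literal wires meet. Its local optimum costs $c$ when at least one incoming literal is true and at least $c+1$ when all three are false.
\item \emph{Garbage collection} that absorbs the parity left over at wire ends, so that the defect sets have even cardinality.
\end{enumerate}
Completeness is then straightforward. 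A satisfying assignment selects a state in each wire, and the sum of the local costs equals $W(\varphi)$.

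\textbf{Soundness.} This is where the main obstacle lies. An arbitrary join need not respect the gadget decomposition: paths may run between distant gadgets, or combine primal and dual savings across gadgets. We plan to prove a \emph{Localization Lemma}. Gadgets are separated by distance $D$ much larger than their internal cost scale. Any join can then be transformed, without increasing its weight, into one in which every path stays inside the neighbourhood of a single gadget or wire segment.

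The proof we would attempt works path by path. A path leaving a gadget neighbourhood pays at least roughly $D$ in edges. The primal--dual overlap can save at most half of that length, since a shared qubit is counted once instead of twice. Meanwhile, rerouting the path locally inside the gadget costs $O(1)$ more than the local optimum. Choosing $D$ large makes the exchange favourable. The delicate point is handling overlaps that the rerouted path destroys on the *other* lattice. For this we charge each lost shared edge to the long excursion, using the fact that primal and dual defects are placed on disjoint regions except along the designed parallel tracks.

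Once localization holds, the total weight is the sum of the local gadget costs. Each wire segment must sit in one of its two optimal states, or else pay at least $1$ extra. Consistency along wires and splitters forces a well-defined assignment. Each unsatisfied clause then adds at least $1$, so weight at most $W(\varphi)$ forces satisfiability.

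A final technical step is to verify the local optimal costs of each finite gadget by exhaustive case analysis, or by a lower bound that counts defects per region divided by the maximal saving. We also check that the whole construction embeds in a $K\times K$ torus, or in a region of the infinite lattice, in polynomial size.
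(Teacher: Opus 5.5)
Your overall architecture (3SAT $\to$ gadgets $\to$ localization $\to$ sum of local gadget costs) matches the paper's, but the proposal has two genuine gaps.

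\textbf{The gadgets are specified but never built.} This is where the hardness lives. You correctly note that without primal--dual coupling the problem splits into two polynomial matching problems. Yet no gadget in your proposal exhibits engineered coupling with verified costs. Compare the paper's crossing gadget: it is built around a dual cycle and has cost $45$ on its four intended signatures and $46$ on the other four.

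Some of your specifications are also problematic as stated:
\begin{itemize}
\item Your wire carries both a primal and a dual chain, yet your crossing is ``a primal wire meeting a dual wire''. If wires are typed, you need primal$\leftrightarrow$dual converter gadgets, which you do not describe.
\item A naive primal/dual crossing is not information-free anyway. A crossing primal path and dual path share a qubit only in the joint state where both crossing segments are covered. This adds a cost term that depends on both values.
\item A clause gadget attached only to three literal wires has a fixed signature parity, so only four of the eight literal patterns are feasible. Each clause needs a fourth, parity-absorbing pin routed to a garbage-collection network, not merely a global parity fix at wire ends.
\end{itemize}

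\textbf{The sketched proof of the Localization Lemma does not work.} This is the more serious gap.
\begin{itemize}
\item A path joining defects of two different gadgets cannot be ``rerouted locally''. Deleting it changes the signatures of both gadgets, and restoring consistency can force signature changes along entire wires. So the exchange does not cost $O(1)$.
\item The overlap charging fails. If a primal excursion of length $L$ shares $s\approx L$ edges with a dual path, deleting it saves only $L-s\approx 0$. The saving must then come from the accompanying dual path, whose own endpoints and overlaps must in turn be repaired, and you give no way to control this cascade.
\end{itemize}
In fact, localization as soon as the separation exceeds the per-gadget scale is essentially the paper's open Conjecture~\ref{conj2}.

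\textbf{What the paper does instead.} It never transforms a join into a local one. It inserts relay pins whose $L_1$ balls, of radius equal to their depth, are defect-free with disjoint interiors. Hence the primal paths alone, after discarding every dual path and every intra-nucleus primal path, already cost at least the relay cost $R$. It then shows that every non-local join weighs at least $R+2\Delta/5-3\delta-1$. This uses separate arguments for matching violations, for long primal paths (via the auxiliary multigraph $H$), and for dual paths leaving the red box (via the local-coupling lemma). It compares this with the global upper bound $W^*\le C^*<B+m$.

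Because the discarded nucleus costs are recovered only globally, $\Delta$ must exceed the \emph{total} nucleus cost $b=\Theta(m^2)$, not a per-gadget constant. This global accounting is the missing idea. It still yields a polynomial-size instance, with $K=\Theta(m^3)$.
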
 

The proof is based on a direct reduction from $3$SAT. At a high level, we
represent a Boolean formula by arranging primal and dual defects into local
structures that encode its variables, clauses, and their connections. 
These structures are instances of a small 
 library of gadgets whose minimum-cost
configurations encode the logical constraints of the formula. If the
optimization were restricted to respect the decomposition into gadgets,
satisfiability of the formula would  be directly reflected in the
minimum join weight.  
The main difficulty is that the Minimum-Weight Join problem is inherently
global. A general join may connect defects belonging to different gadgets,
and primal and dual paths associated with different parts of the construction
may interact, potentially bypassing the intended logical constraints.  
We address this difficulty through a localization framework. A separation scale $\Delta$ is
built into the geometry of the gadgets, and by choosing $\Delta$ sufficiently
large, we show that every minimum-weight join 
respects the
intended decomposition into gadgets. 
This establishes the correspondence between the logical behavior of the
gadgets and the global Minimum-Weight Join problem. More broadly, this construction provides a general framework for embedding and composing logical gadgets in coupled primal--dual lattice optimization problems while controlling unintended interactions between them.

We then strengthen the reduction to show  
that Minimum-Weight Join is  hard to approximate.

\begin{restatable}{theorem}{HAMWJ}\label{mainthMWJ} 
\textbf{\emph{(Inapproximability of Minimum-Weight Join)}}
Assume that $P\neq NP$. Then there exists a constant $c>0$ such that no polynomial-time algorithm that, given even-cardinality  sets of 
primal and dual defects contained in a $K\times K$ square or a torus of side length $K$, is guaranteed to return a join for these defects of weight at most
$W^*+cK^{1/7}$,
where $W^*$ denotes the minimum join weight.
\end{restatable}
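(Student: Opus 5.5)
We reduce from Håstad's gap version of MAX-3SAT, reusing the gadget reduction behind Theorem~\ref{nphardnessMWJ}. Håstad's theorem says that for every $\varepsilon>0$ it is NP-hard, given a 3CNF formula $\phi$ with $n$ variables and $m=\Theta(n)$ clauses (after standard bounded-occurrence preprocessing), to distinguish two cases: $\phi$ is satisfiable, or every assignment violates at least $\delta m$ clauses for some constant $\delta>0$.

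\textbf{Step 1: quantitative form of the NP-hardness reduction.} We first record the reduction in a quantitative form. From $\phi$ it builds a defect configuration in a $K\times K$ square (or torus) from variable, wire, clause and garbage-collection gadgets placed at separation scale $\Delta$. It also yields an explicit baseline weight $W_0$ with two properties:
\begin{itemize}
\item \emph{Completeness:} every assignment $\sigma$ gives a join of weight $W_0+\gamma\cdot u(\sigma)$, where $u(\sigma)$ is the number of clauses violated by $\sigma$ and $\gamma\ge 1$ is a constant penalty per unsatisfied clause gadget.
\item \emph{Soundness:} every join of weight $W$ can be decoded in polynomial time to an assignment $\sigma$ with $\gamma\, u(\sigma)\le W-W_0$.
\end{itemize}
So the reduction must be \emph{gap preserving}, not merely decision preserving. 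The Localization Lemma should already give soundness for minimum-weight joins. The new point is to apply it to \emph{arbitrary} joins, charging every deviation from the intended gadget decomposition at least the weight it saves.

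\textbf{Step 2: making localization robust.} This is the main obstacle. I would first show that an arbitrary join can be repaired in polynomial time into a gadget-respecting join without increasing its weight. The repair is local: any path crossing between gadgets, or any primal/dual interaction outside the designated regions, is rerouted. This uses the fact that such a crossing costs about $\Delta$ extra weight while the local rerouting saves only $O(1)$ per interacting gadget, once $\Delta$ exceeds a constant times the gadget's interaction budget. After the repair, each gadget is in one of finitely many local configurations. The variable and wire gadgets then give consistent truth values, with any inconsistency costing at least the penalty $\gamma$ that a clause violation would cost, and this yields $\sigma$.

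\textbf{Step 3: parameters and the exponent.} The planar embedding of the formula's incidence graph with crossing gadgets uses a grid of $O(m)\times O(m)$ cells, each scaled by $\Delta$. The key question is how large $\Delta$ must be relative to $m$. If the localization argument needs $\Delta$ to dominate the total weight of wires in a region, or $\Delta$ must grow polynomially with $m$ so that long wires never prefer shortcuts, we get $K=\mathrm{poly}(m)$. The target $K^{1/7}$ suggests $K=\Theta(m^{7})$. For instance, a layout of side $O(m^2)$ cells with $\Delta=\Theta(m^{5/2})$, or some similar split, would have to be justified by the localization bounds; I would tune the exponents so that the gap $\delta\gamma m$ equals $cK^{1/7}$.

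\textbf{Step 4: conclusion.} Suppose an algorithm always returned a join of weight at most $W^*+cK^{1/7}<W^*+\delta\gamma m$. Run it on the instance built from $\phi$:
\begin{itemize}
\item If $\phi$ is satisfiable, then $W^*\le W_0$, so the returned join has weight below $W_0+\delta\gamma m$.
\item If every assignment violates at least $\delta m$ clauses, then soundness gives $W\ge W^*\ge W_0+\gamma\delta m$ for every join.
\end{itemize}
Comparing the returned weight with $W_0+\delta\gamma m$ therefore decides Håstad's gap problem in polynomial time, contradicting $P\neq NP$. The torus case is identical, placing the construction inside a torus of side $K$.
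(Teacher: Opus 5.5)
\textbf{Gap in soundness.} Your Step~4 threshold test against a baseline, fed by H{\aa}stad's gap, is the same as the paper's. However, the gap-preserving soundness you assert in Steps~1--2 fails for the gadgets of the NP-hardness reduction, and supplying it is the main content of the paper's proof. In that reduction the atomic variable and crossing-wire gadgets have strength $1$. A single violation of relative cost $1$ decouples the values delivered to clause gadgets from any one assignment. Examples are an equality violation inside a chained variable gadget, or a crossing-wire gadget that fails to negate and so flips both wires through it. One such violation can let many clause gadgets become satisfied. So ``any inconsistency costs at least the penalty of a clause violation'' does not give $\gamma\,u(\sigma)\le W-W_0$, and without a bound on occurrences $W^*-B$ can be much smaller than $U^*$. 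What you need is that every inconsistency costs at least as much as the clause violations it can avoid. The paper gets this from strength-$m$ gadgets:
\begin{itemize}
\item \emph{Variable gadgets:} layers of $m$ atomic variable gadgets with $\Omega=\Theta(m)$. An equality violation then propagates along a whole diagonal, and a negation violation alone costs at least $m$.
\item \emph{Crossing-wire gadgets:} $m'$ atomic crossings tied together by two strength-$m$ equality gadgets.
\end{itemize}
The Strong Gadget Lemma then gives $C^*=B+U^*$ exactly. Your bounded-occurrence preprocessing could be an alternative fix. With at most $d$ occurrences per variable, each violated atomic variable or crossing-wire gadget corrupts at most $\max(2,d)$ clause inputs, so $W^*-B\ge U^*/\max(2,d)$. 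But you would have to state and prove this constant-factor accounting; it is not what Step~2 says.

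\textbf{Localization and parameters.} Step~2's repair of arbitrary joins is unnecessary, and its justification is not available. Step~4 only uses $W\ge W^*$, so you only need minimum-weight joins to be local, which is what the Localization Lemma provides. That lemma only shows a non-local join has weight at least $R+2\Delta/5-3\delta-1$, where $R$ is the relay cost and $\delta$ the maximum offset, because its proof discards all dual paths and nucleus-internal primal paths. To exceed $W^*<B+m=R+b+m$ you therefore need $\Delta\ge\frac52(b+m+3\delta+1)$, where $b$ is the total nucleus cost over \emph{all} atomic gadgets, not a per-gadget budget. A per-gadget version is essentially the paper's open stronger-localization conjecture. This global requirement is what fixes the exponent, so Step~3 cannot be tuned backwards from $K^{1/7}$. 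In the paper, $\Theta(m^2)$ strong crossing gadgets each contain $\Theta(m^2)$ atomic gadgets, mostly variable gadgets whose nuclei cost $\Theta(m)$. This gives $b=\Theta(m^5)$ and hence $\Delta=\Theta(m^5)$. The strong crossings also force a layout of side $O(m^2\Delta)$, so $K=\Theta(m^7)$ and $K^{1/7}=\Theta(m)$, matching the $\Theta(m)$ gap. If you take the bounded-occurrence route with the original gadgets, the NP-hardness parameters $\Delta=\Theta(m^2)$ and $K=\Theta(m^3)$ already suffice, since a gap of order $m$ exceeds $cK^{1/7}$.
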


The NP-hardness construction alone does not provide a sufficiently large gap:
violating the logical behavior of some gadgets may incur only a small
penalty, even when every assignment to the underlying formula leaves
many clauses unsatisfied. To overcome this issue, we construct strong gadgets
for which violating the intended logical behavior incurs a large penalty.
Together with the localization framework, this makes the minimum join weight
quantitatively reflect the minimum number of unsatisfied clauses.
Accordingly, H{\aa}stad's hardness of distinguishing satisfiable $3$SAT
instances from instances in which every assignment leaves a constant fraction
of clauses unsatisfied~\cite{Has01} transfers to the Minimum-Weight Join problem, yielding
Theorem~\ref{mainthMWJ}.

Then we  derive from Theorem~\ref{mainthMWJ}  
inapproximability results for Minimum-Weight  Decoding of the toric and planar surface codes, and for Separate Minimum-Weight Decoding of the $4.8.8$ color code.

For the toric code, the reduction is immediate since minimum-weight decoding is essentially the Minimum-Weight Join problem in the toroidal domain.

 \begin{restatable}{theorem}{HAToric}\label{toricha} 
\textbf{\emph{(Inapproximability of Minimum-Weight Decoding for the toric code)}}
Assume that $P\neq NP$. Consider the $N$-qubit toric code. Then there exists a constant $c>0$ such that no polynomial-time algorithm that, given primal and dual defects, is guaranteed to return an error whose syndrome matches the given defects and whose weight is at most $W^*+cN^{1/14}$,
where $W^*$ denotes the minimum weight of an error with the given syndrome.
\end{restatable}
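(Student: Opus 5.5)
We derive Theorem~\ref{toricha} directly from Theorem~\ref{mainthMWJ} in its toroidal form. Given defects on a torus of side length $K$ (the hard instances of Theorem~\ref{mainthMWJ}), we view them as a syndrome of the $[[2K^2,2,K]]$ toric code, taking $L=K$. Then $N=2K^2$, so $K=(N/2)^{1/2}$ and $K^{1/7}=\Theta(N^{1/14})$. The plan is to show that an algorithm achieving additive error $c'N^{1/14}$ for toric-code decoding yields one achieving additive error $cK^{1/7}$ for Minimum-Weight Join on the torus. Choosing $c'$ small enough that $c'(2K^2)^{1/14}\le cK^{1/7}$ then contradicts Theorem~\ref{mainthMWJ}.

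First, the optima coincide. Every join is an error with the prescribed syndrome, since edge-disjoint paths between defects give odd degree exactly at the defects, and its weight is the same. Conversely, consider a minimum-weight error with the prescribed syndrome. Removing cycles from its primal edge set and from its dual edge set separately preserves the syndrome and does not increase the weight, because the weight counts distinct qubits. This gives an acyclic primal forest whose odd-degree vertices are exactly the primal defects, and similarly on the dual side. Each such forest decomposes into edge-disjoint paths pairing up its odd-degree vertices, by repeatedly taking a path between two odd-degree vertices in the same tree and deleting its edges. Hence the two optima are both equal to $W^*$.

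Second, the conversion is efficient. Given an error $E$ returned by the hypothetical decoder with weight at most $W^*+c'N^{1/14}$, we compute in polynomial time a join of no larger weight. On each side we delete cycles by Gaussian elimination or a cycle-finding loop, leaving a forest with the same odd vertices. We then extract the path decomposition by the greedy procedure above, pairing odd-degree vertices within each tree component. Each tree has an even number of odd vertices, so the pairing is possible, and after peeling the remaining edges are discarded, which only lowers the weight. The resulting join has weight at most $W^*+c'N^{1/14}\le W^*+cK^{1/7}$, which is the contradiction we want.

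The only real subtlety is making sure Theorem~\ref{mainthMWJ} really supplies hard instances on the torus and not only in the planar $K\times K$ square. The statement allows either, and the paper states that the toroidal domain corresponds to the toric code. We will use the torus case of the theorem. If that case were unavailable, we would instead embed the $K\times K$ square instance into a torus of side $2K$. This would need a check that a minimum join cannot shortcut through the wraparound, which the bounded instance diameter should handle, since wrapping costs at least $K$ while paths inside cost less. Beyond this, the argument is a routine accounting of constants.
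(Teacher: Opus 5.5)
Your proposal is correct and follows the paper's argument: cycle removal turns any error into a join of no larger weight with the same syndrome, so the optimal error and join weights coincide and an approximate decoder yields an approximate join algorithm; then Theorem~\ref{mainthMWJ} in the torus domain with $N=2K^2$ gives the gap, since $K^{1/7}=\Theta(N^{1/14})$. Your explicit path-peeling step only fills in a detail the paper leaves implicit, and the fallback embedding is unnecessary because the proof of Theorem~\ref{mainthMWJ} handles the torus directly by setting its side length to $K$ (its informal justification would in any case need more care because of primal--dual coupling).
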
 

For the planar surface code, we work in the square-lattice domain, but additional care is required because decoding allows defects to be matched to boundary vertices of the planar surface-code lattice.

\begin{restatable}{theorem}{HAPlanar}\label{planarha} 
\textbf{\emph{(Inapproximability of Minimum-Weight Decoding for the planar surface code)}}
Assume that $P\neq NP$. Consider the $N$-qubit planar surface code. Then there exists a constant $c>0$ such that no polynomial-time algorithm that, given primal and dual defects, is guaranteed to return an error whose syndrome matches the given defects and whose weight is at most $W^*+cN^{1/18}$,
where $W^*$ denotes the minimum weight of an error with the given syndrome.
\end{restatable}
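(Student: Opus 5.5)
We derive Theorem~\ref{planarha} from Theorem~\ref{mainthMWJ} in its square-lattice form, by embedding the hard instance deep inside a planar surface-code lattice. The new issue is the boundary: decoding may match defects to boundary vertices, so the optimal error need not be a join in the sense of Section~\ref{prel}. We suppress this by placing the hard instance far from the boundary. Let the instance live in a $K\times K$ square. Place it in the center of a planar surface code of side $L=K+2M$, where $M$ is a margin chosen larger than any weight that could be saved by using the boundary. Then $N=\Theta(L^2)$, and the additive gap $cK^{1/7}$ must be re-expressed in terms of $N$.

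The key step compares optimal boundary-using errors with joins. Any error decomposes, after removing cycles, into paths, each joining two defects or a defect to a boundary vertex. A path that reaches the boundary from a defect inside the central square has length at least $M$. The plan is to show that when $M$ exceeds the total weight $W_0$ of some explicit join, an optimal error uses no boundary paths. The naive bound, that such an error would have weight at least $M>W_0\ge W^*$, is fine for exact optima. For approximation, however, we must show that any error of weight at most $W^*+g$ can be converted in polynomial time into a join of no larger weight. The conversion is direct: if the error has a boundary path, its weight is at least $M>W_0+g$, so output the explicit join instead. Otherwise the error is already a join, up to paths that leave the central square. Since the square lattice is infinite in the join domain, leaving the square causes no problem.

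This gives the reduction. A polynomial-time algorithm achieving $W^*+cN^{1/18}$ for the planar code would yield a join of weight at most $W^*_{\mathrm{join}}+cN^{1/18}$. The parameters then have to be balanced. The main obstacle is bounding $W_0$: a trivial join pairing defects arbitrarily can have weight of order (number of defects)$\cdot K$, potentially $\Theta(K^3)$. That would force $M\sim K^3$, hence $N\sim K^6$, and so $K^{1/7}=N^{1/42}$, which is worse than claimed. The stated exponent $1/18$ suggests $N\sim K^{18/7}$, that is, $M\sim K^{9/7}$. I would therefore avoid the worst-case $W_0$ and instead use the structure of the construction behind Theorem~\ref{mainthMWJ}. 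Its minimum join weight should be of order $K^{9/7}$: presumably $K\sim n^{\alpha}\Delta$ with $n$ clauses, and the gadget wires have total length comparable to that. The reduction would compute the canonical (intended) join from any assignment, giving an explicit $W_0=O(K^{9/7})$, and set $M$ to a constant multiple of this. Then $N=\Theta(K^{18/7})$ and $cK^{1/7}=\Theta(N^{1/18})$.

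A subtler point needs checking: the gap argument must hold for paths that go only partway toward the boundary. In particular, a path from a defect to the boundary is never cheaper than its internal counterpart, since each such path alone costs at least $M$. This covers both the primal and dual lattices. One more subtlety is that the primal and dual boundaries are of different types, but the same length argument applies to each. The remaining routine parts are the polynomiality of the construction and the cycle removal.
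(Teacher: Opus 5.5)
Your proposal follows the same route as the paper. You place the hard composite gadget $G$ deep inside the planar lattice, with a margin of the order of its baseline cost rather than of the worst-case join weight. The paper uses distance $B+2m$; you use a constant multiple of the weight $W_0\le B+m$ of the join obtained from any assignment. Any error touching the boundary then weighs more than $W^*$ plus the allowed slack, so after cycle removal every admissible output is an ordinary join, and the gap transfers. Your fallback to the explicit join and the paper's contradiction argument under $cN^{1/18}<m$ are interchangeable. As you partly acknowledge, this must be run as a direct reduction from Gap-$3$SAT on the instances $G(\phi)$, which is how the paper adjusts the proof of Theorem~\ref{mainthMWJ}. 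Using that theorem as a black box only gives an $O(K^2)$ bound on the optimum, and hence only $N^{1/28}$, as the paper remarks.

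The gap is the one estimate that fixes the exponent, $W_0=O(K^{9/7})$. You do not prove it; you back it out of the target exponent and the claimed $N^{1/18}$. The heuristic you offer for it is also not the right one. The $\Theta(m)$ wires of the layout, each of length $O(K)$, contribute only $O(mK)=O(K^{8/7})$.

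The correct source is the last clause of Lemma~\ref{gaplem}. With $\Delta=\Theta(m^5)$ one has $K=\Theta(m^7)$ and $B=b+R=\Theta(m^9)$. Here $B$ is dominated by the relay cost $R=\Theta(\Delta t)$ of the $t=\Theta(m^4)$ atomic gadgets, most of which sit inside the $\Theta(m^2)$ strong crossing-wire gadgets, while $W^*=B+U^*<B+m$. Citing this gives $L=\Theta(m^9)$, $N=\Theta(m^{18})$ and $N^{1/18}=\Theta(m)=\Theta(K^{1/7})$, which completes your argument.
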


For color codes, we show that the easier Separate Minimum-Weight
Decoding problem is hard to approximate.  
We establish a weight-scaling reduction from decoding the toric code over the depolarizing channel to decoding $X$ errors in the color code, thereby transferring the toric-code inapproximability result.

\begin{restatable}{theorem}{HAColor}\label{colorha} 
\textbf{\emph{(Inapproximability of Separate Minimum-Weight Decoding for the 4.8.8 2D color code)}}
Assume that $P\neq NP$. Consider the $N$-qubit  
$4.8.8$ color code defined on the torus.
Then there exists a constant $c>0$ such that no polynomial-time algorithm that, given  
$Z$-defects, is guaranteed to return an 
$X$-error whose $Z$-syndrome matches the given defects and whose weight is at most $W^*+cN^{1/14}$,
where $W^*$ denotes the minimum weight of an $X$-error with the given $Z$-syndrome. 
\end{restatable}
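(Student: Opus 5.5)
We will derive Theorem~\ref{colorha} from Theorem~\ref{toricha} (equivalently Theorem~\ref{mainthMWJ} on the torus) by a polynomial-time, weight-scaling reduction. The plan is to map an instance of Minimum-Weight Decoding of the toric code with $N_T$ qubits to an instance of $X$-error decoding of the $4.8.8$ color code on a torus with $N=\Theta(N_T)$ qubits. The reduction must satisfy two requirements. First, it must be \emph{complete}: every toric error $E$ of weight $w$ maps to a color-code $X$-error of weight $\alpha w+\beta$, where $\alpha$ is a fixed constant and $\beta$ depends only on the instance. Second, it must be \emph{sound}: from any color-code $X$-error $F$ with the image syndrome we can recover in polynomial time a toric error of weight at most $(|F|-\beta)/\alpha$.

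Given these two properties, the additive gap transfers directly. An algorithm returning $F$ with $|F|\le W^*_{\mathrm{col}}+cN^{1/14}$ would yield a toric error of weight at most $W^*_{\mathrm{tor}}+(c/\alpha)N^{1/14}=W^*_{\mathrm{tor}}+c'N_T^{1/14}$. That contradicts Theorem~\ref{toricha} once $c$ is chosen small.

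The construction uses the standard local picture of the $4.8.8$ lattice. The faces are octagons of two colors, say red and green, forming a checkerboard, together with blue squares. Red octagons then correspond to primal vertices of a square lattice, green octagons to dual vertices, and squares to edges, each square sitting at a primal--dual edge crossing. A $Z$-defect on a red or green octagon plays the role of a primal or dual toric defect. A single qubit flip toggles exactly one red, one green and one blue face. To move a red defect across a primal edge, one flips the two qubits of the corresponding square that are shared by the two adjacent red octagons and lie on a common green side; this also leaves the square's parity even. The key local fact is that moving a red defect and a green defect through the \emph{same} square (a toric $Y$ on that qubit) costs the same as, or is cheaper than, a single move in a controlled way. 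This mirrors the rule that a primal--dual edge pair is counted once.

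To make the weights match exactly (weight $1$ for $X$, $Z$ or $Y$ on a qubit), I would first rescale. Each toric edge is replaced by a chain of $s$ squares, with $s$ a fixed constant, and the costs are compared. Concretely, I would arrange that a primal-only or dual-only crossing of a gadget costs $2$ qubits and a simultaneous crossing also costs $2$, giving $\alpha=2$ and $\beta=0$. If needed, I would add blue-square defects to equalize the costs.

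The main obstacle is soundness. An arbitrary minimum-weight color-code correction need not decompose into local square-gadget moves: flipped qubits may toggle blue squares in pairs along unintended routes, or may route red strings through green regions. To handle this, I would use a local charging argument over each square gadget together with its incident octagon sides. I would show that any $X$-error restricted to the neighborhood of one toric qubit can be replaced, without increasing weight and without changing the syndrome, by one of the canonical patterns (none, primal, dual, or both). This can be done by a finite case check on the bounded neighborhood, using the parity constraints at the blue squares. After this normalization, the error reads off directly as a toric error with weight scaled by $\alpha$, and the normalization runs in polynomial time. Finally, on the torus $N=\Theta(K^2)=\Theta(N_T)$, which yields the stated $N^{1/14}$ gap.
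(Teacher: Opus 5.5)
Your plan is essentially the paper's proof. You place the toric defects on the two octagon classes with no square defects and identify each toric qubit with a square whose three complementary vertex pairs realize $X$, $Z$ and $Y$ at weight $2$; you then normalize an arbitrary $X$-error by deleting full squares, giving $\alpha=2$, $\beta=0$ and $N=2N_T$. The rescaling by chains of squares and the extra square defects are unnecessary, and soundness needs no charging argument: every qubit of the $4.8.8$ lattice lies in exactly one square, so the error splits exactly into per-square pieces, each of even size by the square parity constraint.
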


\subsection{Paper organization}
We develop the NP-hardness reduction in Section~\ref{nphs} and establish
hardness of approximation in Section~\ref{hardapx}.
In Section~\ref{ccodes}, we derive the consequences of these results for
the codes considered in this paper.
We conclude in Section~\ref{conc} with extensions and open questions.

\section{NP-hardness}\label{nphs}
We show that the Minimum-Weight Join problem is NP-hard by reducing $3$SAT
to the problem of deciding whether a prescribed set of primal and dual defects
admits a join of weight at most a given value.
Recall that in $3$SAT, a \emph{literal} is a Boolean variable or its negation,
a \emph{clause} is a disjunction of literals, and a $3$-CNF formula is a conjunction
of clauses, each containing three literals.
The $3$SAT problem asks whether there exists an assignment of the variables
that satisfies all clauses.

The reduction is built from the atomic gadgets introduced next; an outline
of the complete construction is given in Section~\ref{nphoverview}.

\subsection{Atomic gadget library}\label{glib}

This section describes the atomic gadgets that form the building blocks of our reductions.
 We first specify their logical behavior in terms of satisfying signatures and then describe the common geometric structure shared by all gadgets.

The reduction uses atomic gadgets of different types,  each enclosed in a box, whose symbols are shown in Figure~\ref{types}.

{
\renewcommand{\galscale}{0.2}
\begin{figure}[!htbp]\centering
\sfg{type-clause-labeled.pdf}\hfill
\sfg{type-var-labeled.pdf}\hfill 
\sfg{type-neg1.pdf}\hfill 
\sfg{type-neg2.pdf}\hfill  
\sfg{type-neg3.pdf}\hfill 
\sfg{type-neg4.pdf}\hfill 
\sfg{type-neg5.pdf}\hfill  
\sfg{type-neg6.pdf}\hfill  
\sfg{type-cross-labeled.pdf}\hfill 
\sfg{type-eq-labeled.pdf}\hfill
\sfg{type-gc-labeled.pdf}
\caption{\textbf{Symbols for atomic gadgets.} (a) Clause gadget; (b) Variable gadget; (c)--(h) Wire gadgets; (i) Crossing-wire gadget; (j) Equality gadget; (k) Garbage collection gadget.}
\label{types}
\end{figure}
}

 Each atomic  gadget is constructed from primal and dual defects so as to implement a logical function. The distinguished \emph{primal defects} through which a gadget communicates with other gadgets are called its \emph{outer pins}, shown in blue. All remaining defects are called \emph{internal gadget defects}.
Every gadget contains an even number of internal \emph{dual defects}.   We define the \emph{parity} of a gadget to be the parity of its number of primal defects.

	For each subset $S$ of outer pins whose parity is equal to that of the gadget, an \emph{$S$-join} of the gadget is a join with respect to the set of primal defects excluding the pins in $S$, together with the set of all dual defects of the gadget. Since the number of internal dual defects is even, an $S$-join always exists for every such $S$. 
We adopt the convention that the outer pins in $S$ are assigned the value \emph{True}, whereas those outside $S$ are assigned the value \emph{False}.

 A gadget with $p$ outer pins admits exactly $2^{p-1}$ valid assignments to its outer pins, called its \emph{signatures}
\footnote{The terminology of signatures is inspired by the signature
formalism for planar matchgates~\cite{Val02}, although the notion used here
is different.}.  
 Specifically, a signature assigns an even (respectively, odd) number of outer pins the value \emph{True} whenever the gadget has even (respectively, odd) parity.
 
 The \emph{cost} of a signature $S$ is defined to be the weight of a minimum-weight $S$-join. A signature is called \emph{satisfying} if its cost is minimum among all signatures. Thus, by definition, 
  all satisfying  signatures  have  the same cost, which  is strictly smaller than that of any nonsatisfying signature.
  The gadgets are designed so that their satisfying signatures realize precisely their intended logical behavior.

We use six types of atomic gadgets, whose specifications are given below.

\begin{itemize}
\item[$\bullet$] \textbf{Clause gadget}, 
whose symbol is shown in  Figure~\ref{types}(a). This gadget has odd parity and four outer pins: three \emph{literal pins}, shown as circles, and one \emph{garbage-collection pin}, shown as a small square. Its satisfying signatures have at least one literal pin assigned \emph{True}.   
The garbage-collection pin  ensures  that every assignment of the literal pins extends to a valid signature. Since the gadget has odd parity, the truth value assigned to the garbage-collection pin is the complement of the XOR of the truth values assigned to the literal pins.

\item[$\bullet$] \textbf{Variable gadget}, whose symbol is shown in Figure~\ref{types}(b). This gadget has even parity and  four outer pins.  
It represents a Boolean variable $x$ by duplicating the variable and its negation. In a  satisfying signature  the two pins labeled $x$ must receive the same   value, the two pins labeled $\bar{x}$ must receive the same value, and the values assigned to $x$ and $\bar{x}$ must be opposite.

\item[$\bullet$] \textbf{Wire gadgets}, 
whose symbols are 
shown in Figures~\ref{types}(c)--(h). A  wire gadget  has  odd parity and two outer pins. It  comes in six variants, depending on the locations of the pins.  
Its intended logic is determined by its odd parity: the Boolean variables assigned to its two outer pins are negations of one another. All signatures of the gadget have the same cost and are therefore satisfying signatures.

We refer to these gadgets as  \emph{ wire gadgets} because concatenating multiple copies yields a wire that propagates a Boolean value while negating it (see Section~\ref{WireSec}). We use the negating version for technical convenience. Whenever a literal appears in a clause, the corresponding wire is connected to the complementary literal in the variable gadget, thereby compensating for the negation.

\item[$\bullet$] \textbf{Crossing-wire gadget}, whose symbol is shown in Figure~\ref{types}(i). This gadget has even parity and  four outer pins. It implements the crossing of two wires.  In a satisfying signature, the right pin carries the negation $\bar{x}$ of the Boolean variable $x$ carried by the left pin, and the lower pin carries the negation $\bar{y}$ of the Boolean variable $y$ carried by the upper pin.

While a crossing-wire  gadget that preserves the values of $x$ and $y$ can also be constructed, 
we use the negating version for technical convenience, as it allows a wire to be intercepted by a crossing-wire  gadget without negating the value at the destination pin of the wire.

\item[$\bullet$] \textbf{Equality gadget}, whose symbol is shown in Figure~\ref{types}(j). This gadget has even parity and two outer pins. Its intended logic is determined by its even parity:  the Boolean variables assigned to the two pins are equal. All signatures of the gadget have the same cost and are therefore satisfying signatures.

\item[$\bullet$] \textbf{Garbage collection gadget}, whose symbol is shown in Figure~\ref{types}(k). This gadget has three outer pins and comes in two variants: even and odd, depending on its parity. Its intended logic is determined by its parity.  All signatures of the gadget have the same cost and are therefore satisfying signatures.

\end{itemize}

All gadget types except the equality gadget are used in the NP-hardness reduction, while all are used in the hardness-of-approximation reduction.   
The reduction is obtained by gluing together outer pins of different atomic gadgets until every outer pin is identified with another outer pin. Gluing two outer pins forces the corresponding Boolean variables of the two gadgets to be negations of one another. 
We next describe the common internal structure of the atomic gadgets.
 
All atomic gadgets have the same dimensions, namely $4\Delta\times4\Delta$, and their outer pins are located at distance $2\Delta$ from the corners of the box, where 
$\Delta$ is a global parameter called the  \emph{separation parameter}. 
 Consequently, atomic gadgets can be connected arbitrarily. This compatibility is achieved by the common structure shown in Figure~\ref{general}, which we describe next.

\begin{figure}[!htbp]\centering
\includegraphics[width=0.35\textwidth]{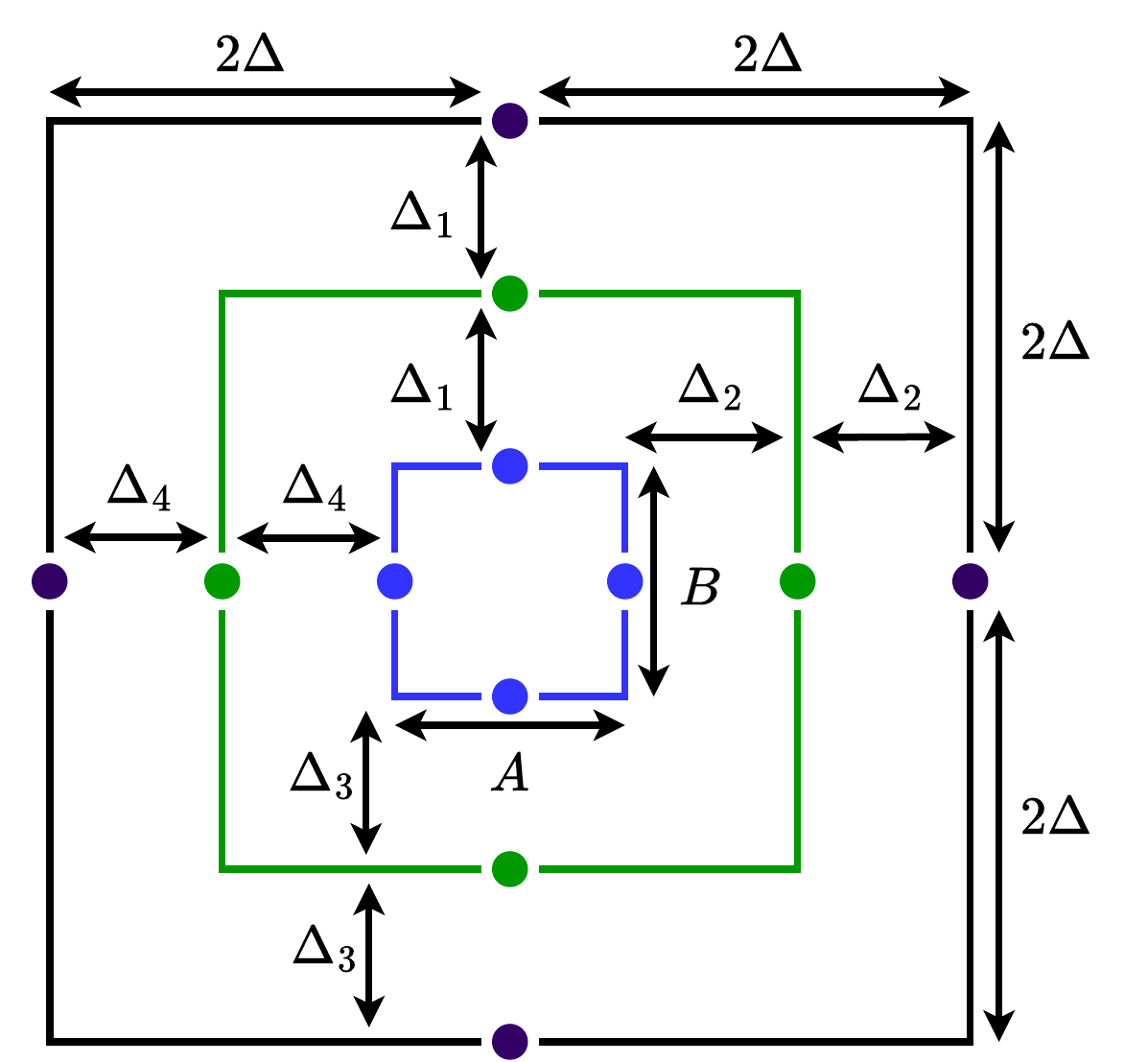}
\caption{\bf{General atomic gadget structure}.}
\label{general}
\end{figure}

Each gadget consists of three nested regions: the \emph{outer box}, shown in black; the \emph{relay box}, shown in green; and the \emph{nucleus box}, shown in blue.  
The outer box contains the gadget's \emph{outer pins}, shown in black, which correspond to the blue circles in the gadget symbols in Figure~\ref{types}; each gadget has $2$, $3$, or $4$ outer pins.  
 Each outer pin has a corresponding \emph{relay pin} on the relay box shown in green, which in turn has a corresponding \emph{nucleus pin}
shown in blue on the nucleus box whenever the nucleus is nonempty (the even garbage-collection gadget and the equality gadget have empty nuclei).
A relay pin is equidistant from its corresponding outer pin and its corresponding nucleus pin or, if the nucleus is empty, from the center of the gadget. We call the distance from a relay pin to its corresponding outer pin the \emph{depth} of the relay pin. The depth is at most $\Delta$. In Figure~\ref{general}, the depths of the relay pins are denoted by
$\Delta_1,\Delta_2,\Delta_3,$ and $\Delta_4$.

The logical behavior of the gadget is realized entirely within the nucleus box. The nucleus pins are distinguished primal defects through which the nucleus communicates with the rest of the gadget. All other primal and dual defects inside the nucleus are called \emph{internal nucleus defects}.

The parity of the nucleus defects is the same as that of the gadget, since
the remaining primal defects occur in outer-pin--relay-pin pairs.
We analogously define a \emph{nucleus signature} by replacing the outer
pins with the nucleus pins in the definition of a signature above and
restricting attention to the defects contained in the nucleus.
Thus, a nucleus signature specifies the set of nucleus pins assigned
the value \emph{True}, subject to the gadget parity constraint.
Analogously, we define the \emph{cost of a nucleus signature}.

The nucleus has dimensions $A\times B$, where $A,B=o(\Delta)$. 
The global separation parameter $\Delta$ is chosen sufficiently large to ensure that every optimal solution to the Minimum-Weight Join problem is \emph{local}; that is, it contains no unintended interactions between distinct gadgets.
 In particular, we set $\Delta=\Theta(m^2)$ for the NP-hardness reduction and $\Delta=\Theta(m^5)$ for the hardness-of-approximation reduction, where $m$ denotes the number of clauses in the input $3$-CNF formula. Moreover, $A=\Theta(1)$ for all atomic gadgets, while $B=\Theta(1)$ for all but the variable gadget in the hardness-of-approximation reduction, for which $B=\Theta(m)$.

Define the \emph{relay cost}  of a gadget  as the sum of depths of its relay pins.  
 Since each relay pin is equidistant from its corresponding outer pin and nucleus pin (or from the center of the gadget if the nucleus is empty), 
 the relay layer maps every 
nucleus signature   
 to an identical signature on the outer pins, increasing its cost by exactly the relay cost of the gadget, independent of the signature. Thus, the relay layer propagates the Boolean values from the nucleus pins to the outer pins without altering the logical behavior of the gadget or the cost differences between signatures.

A final remark is that, since the reduction involves more than one gadget, when the domain is a torus its side length $K$ necessarily satisfies $K\ge 8\Delta$. Henceforth, this condition will be understood without further mention. In particular, it justifies treating each gadget as isolated when computing the costs of its signatures.

\subsection{Outline  of the reduction}\label{nphoverview}

The constructions of atomic gadgets that meet the above specifications are given
in Sections~\ref{AdaptedGadgets}, \ref{VarGadget}, and~\ref{CrossGadget}.
We next give an overview of the reduction, which is based on these gadgets.

Consider a $3$-CNF formula $\phi$. For each clause, we use a clause gadget.
For each variable of $\phi$, we connect enough atomic variable gadgets
by gluing their pins to produce copies of the variable and its negation
for all occurrences of its literals in the clauses.
We connect these copies to the literal pins of the clause gadgets using
wires, with crossing-wire gadgets wherever wires cross.
Garbage-collection gadgets absorb the garbage-collection pins of the
clause gadgets, as well as unused pins of the variable gadgets.
The resulting composite gadget has no free outer pins and has even numbers
of primal and dual defects. 
Its constituent
atomic gadgets admit compatible satisfying signatures if and only if
$\phi$ is satisfiable.
Composite gadgets  are formalized in
Section~\ref{CompositeGadgets}, while the details of the construction
from $\phi$ are given in Section~\ref{redS}.

Since, by definition, the satisfying signatures of each atomic gadget
are precisely its minimum-cost signatures, this reduces the satisfiability
of $\phi$ to determining whether the composite gadget admits a collection of compatible  signatures for its  atomic gadgets, 
in which every atomic gadget has minimum signature cost.
The remaining issue is to relate this optimization over gadget signatures
to the Minimum-Weight Join problem on the defects of the composite gadget.
A general join need not respect the decomposition into atomic gadgets:
paths may connect defects belonging to different gadgets, and primal and
dual paths from different gadgets may interact.
We therefore introduce local joins in Section~\ref{LocalJoins}, for which
such interactions are excluded and the optimization decomposes over the
individual gadgets.
We then establish the Localization Lemma, which shows that, by choosing
the separation parameter $\Delta$ sufficiently large, every minimum-weight
join is local.
Consequently, $\phi$ is satisfiable if and only if the minimum join weight
for the defects of the composite gadget is equal to the sum, over all
atomic gadgets, of their minimum signature costs.
In Section~\ref{redS}, we use this correspondence to complete the reduction
from $3$SAT and establish NP-hardness of the Minimum-Weight Join problem.

\subsection{Gadgets adapted from the  Walters-Turner Reduction}\label{AdaptedGadgets}

The constructions of the wire, clause, and garbage-collection gadgets are adapted from the Walters--Turner reduction~\cite{WT26} by reworking its underlying ideas for our setup on the square lattice. In particular, the idea of using a garbage-collection pin for the clause gadget, together with an associated garbage-collection mechanism, is due to Walters and Turner~\cite{WT26}.

\subsubsection{Wire gadgets and  wires}\label{WireSec}

{
\renewcommand{\galscale}{0.2}
\begin{figure}[!htbp]\centering
\sfg{type-neg1.pdf}\hfill 
\sfg{type-neg2.pdf}\hfill  
\sfg{type-neg3.pdf}\hfill 
\sfg{type-neg4.pdf}\hfill 
\sfg{type-neg5.pdf}\hfill  
\sfg{type-neg6.pdf}\hfill
\\[1em]
\sfg{boxed-nw1.pdf}\hfilli\hfilli\hfilli
\sfg{nw1-1.pdf}\hfilli\hfilli\hfilli 
\sfg{nw1-2.pdf}\hfilli\hfilli%
\sfg{boxed-nw2.pdf}\hfilli\hfilli\hfilli 
\sfg{nw2-1.pdf}\hfilli\hfilli\hfilli 
\sfg{nw2-2.pdf}
\caption{\textbf{Wire gadgets}. Parts~(a)--(f) show the symbols of the six variants of the wire gadget.
The structure of the first variant is shown in Part~(g). Its nucleus consists of a single primal defect. The two signatures of this gadget, together with their corresponding minimum-weight joins, are shown in parts~(h) and~(i). Both have cost $2\Delta$. 
The structure of the variant shown in Part~(d) is illustrated in Part~(j), while its two signatures are shown in Parts~(k) and~(l). The remaining variants are constructed analogously. }
\label{negwf}
\end{figure}
}
The internal structure of the wire gadgets is shown in Figure~\ref{negwf}.
 Every signature of each of the six variants has cost $2\Delta$.

{
\renewcommand{\galscale}{0.2}
\begin{figure}[!htbp]\centering
\sfg{nwex.pdf}\hfill 
\sfg{nwex-format2.pdf}\hfill  
\sfg{nwex-format1.pdf}\hfill 
\caption{\bf{Wire example}.}
\label{negwex}
\end{figure}
}

By connecting  wire gadgets, we can construct  wires of arbitrary shape whose segments are horizontal or vertical, as shown in Figure~\ref{negwex}(a). Such wires propagate the Boolean value from one endpoint to the other while negating it, and their lengths are multiples of $4\Delta$.

Instead of drawing a wire explicitly as in Figure~\ref{negwex}(a), we shall represent it throughout the paper either by a green line, as in Figure~\ref{negwex}(b), or by its defects, as in Figure~\ref{negwex}(c).

An immediate consequence of the atomic wire construction is that any two parallel wire segments are automatically separated by a distance of at least $4\Delta$. Thus, wire spacing need not be enforced separately to prevent wire interactions.

\subsubsection{Clause gadget}\label{ClauseGadget}

The kernel of the clause gadget, which is the logical core of its nucleus,
is shown in Figure~\ref{clauseGadetKernel}.

\begin{figure}[!htbp]\centering
\sfg{Clause-gadget.pdf}\hfilli\hfilli\hfilli\hfilli\hfilli
\sfg{Clause-gadget-1.pdf}
\sfg{Clause-gadget-2.pdf}\hfill
\sfg{clause-gadget-3.pdf}\hfill\sfg{cl-g-4.pdf}\hfill
\sfg{cl-g-5.pdf}\\[1em]
\sfg{cl-g-6.pdf}\hfill
\sfg{cl-g-9.pdf}\hfill
\sfg{cl-g-7.pdf}
\sfg{cl-g-8.pdf}
\hfill
\sfg{cl-g-10.pdf}
\sfg{cl-g-11.pdf}
\sfg{cl-g-12.pdf}
\caption{\textbf{Clause gadget kernel}.
The kernel of the clause gadget is shown in (a).
It   has odd parity and four  pins: three \emph{literal pins}, shown as blue circles, and one \emph{garbage-collection pin}, shown as a black circle.
 Parts (b)–(m) show all eight signatures of the gadget together with their corresponding minimum-weight joins and their costs, denoted by $c$. Pins assigned the value \emph{True} are indicated by green-filled circles. 
Some signatures have multiple minimum-weight joins, all of which are shown in the figure. For every signature in which at least one literal pin is assigned \emph{True}, the cost is $c$ = 5. In contrast, the unique signature in which all three literal pins are assigned \emph{False} has cost $c = 6$.
}
\label{clauseGadetKernel}
\end{figure}

{
\renewcommand{\galscale}{0.3}
\begin{figure}[!htbp]\centering
\sfg{Boxed-clause-gadget.pdf}
\hfilli\hfilli\hfilli
\sfg{type-clause.pdf}
\caption{(a) \textbf{Boxed clause gadget}; (b) Clause gadget symbol.}
\label{fig:boxed-clause}
\end{figure}
}

The boxed clause gadget is shown in Figure~\ref{fig:boxed-clause}(a). Because the kernel pins are not symmetric, the kernel is wrapped in the nucleus and connected to its pins by internal wires, whose defects are shown in black.
By an \emph{internal wire}, we mean a sequence of an odd number of primal defects in which every two consecutive defects are at distance one. Like external wires, internal wires propagate the truth value from one end to the other, negated. 
The internal wires faithfully map the signatures of the kernel pins to signatures of the nucleus pins while preserving the cost differences between them. 

Note that for any signature of the outer pins, there exists a minimum-weight join in which every relay pin is matched to its corresponding outer or nucleus pin. In particular, we need not consider minimum-weight joins in which two relay pins are matched to each other. Indeed, any path connecting two relay pins can be replaced, without increasing the weight, by a path through their corresponding nucleus pins. This observation applies to all gadgets with nonempty nuclei considered in this paper, the only exceptions being the even garbage-collection gadget and the equality gadget.

 Thus,
the relay pins faithfully map the   signatures of the nucleus pins  
 to  signatures of the outer pins while preserving the cost differences between them.

In summary, all satisfying signatures of the clause gadget  have the same cost, which is one less than the cost of its unique nonsatisfying signature.

\subsubsection{Garbage collection gadgets}
\label{GCGadget}

The even and odd atomic garbage-collection gadgets are shown in Figure~\ref{gca-fig}.
The atomic gadgets have degree~$3$. Garbage-collection gadgets of degree~$k$, for $k \geq 4$, are constructed by combining atomic gadgets as shown in Figure~\ref{gcdegkf}.

{
\renewcommand{\galscale}{0.23}
\begin{figure}
[!htbp]\centering
\sfg{GC-Gadget-odd.pdf}
\hfill
\sfg{GC-odd-1.pdf}\hfill
\sfg{gc-odd-2.pdf}\hfill
\sfg{gc-odd-3.pdf}\hfill
\sfg{gc-odd-4.pdf}\\[1em]
\sfg{GC-Gadget-even.pdf}\hfill
\sfg{gc-even-1.pdf}\hfill
\sfg{gc-even-2.pdf}\hfill
\sfg{gc-even-4.pdf}\hfill
\sfg{gc-even-6.pdf}\hfill
\caption{\textbf{Garbage-collection gadget}. The odd garbage-collection gadget is shown in (a). Its nucleus consists of a single primal defect. Parts~(b)–(e) show its four signatures together with their corresponding minimum-weight joins. The even garbage-collection gadget is shown in (f). Its nucleus is empty. Parts~(g)–(j) show its four signatures together with their corresponding minimum-weight joins. Every signature of both gadgets has cost $3\Delta$. 
 }
\label{gca-fig}
\end{figure}
}

{
\renewcommand{\galscale}{0.2}
\begin{figure}
[!htbp]\centering
\sfg{Deg-k-GC-odd.pdf}\hfill
\sfg{deg-k-GC-even.pdf}\hfill
\sfg{Deg-k-GC.pdf}\hfill
\sfg{Deg-k-GC1.pdf}
\caption{
\textbf{Higher-degree garbage-collection gadgets}. Connecting two even atomic gadgets, as shown in (a), yields the odd degree-$4$ garbage-collection gadget. Adding another even atomic gadget produces the even degree-$5$ garbage-collection gadget. As additional even atomic gadgets are added, the parity alternates between odd and even. The gadget shown in Part~(c) has degree $k$ and is odd if $k$ is even, and even if $k$ is odd. To flip the parity of the gadget, we flip the parity of one of its atomic gadgets, as shown in Part~(d).
}
\label{gcdegkf}
\end{figure}
}

\subsection{Variable gadget}\label{VarGadget}

\begin{figure}
[!htbp]\centering
\sfg{var-gadget.pdf}\hfill\hfill\sfg{var-1.pdf}\hfill\sfg{var-2.pdf}
\\[1em]
\hspace{-2.5em}
\sfg{var-3.pdf}
\hspace{-2.3em}
\sfg{var-4.pdf}\hspace{-0em}\hfill\sfg{var-5.pdf}\hspace{-1em}\hfill\sfg{var-6.pdf}\hspace{-1em}\hfill\sfg{var-7.pdf}
\hspace{-0.5em}\sfg{var-8.pdf}
\caption{(a) \textbf{Variable gadget kernel};
(b), (c) \textbf{Satisfying signatures}, each with its corresponding minimum-weight join and cost $c$; (d)--(i) \textbf{Unsatisfying signatures},  up to symmetry.}  
\label{varkernel}
\end{figure}

The kernel of the atomic variable gadget is shown in Figure~\ref{varkernel}(a). It is parameterized by a positive integer $\Omega$ divisible by 4. The kernel has even parity and  four pins associated with the Boolean variables $x$, $y$, $x'$, and $y'$. Since the gadget has even parity, it has eight signatures corresponding to all assignments in which an even number of variables are assigned \emph{True}. Parts~(b)–(i) show all signatures, up to symmetry, together with their corresponding minimum-weight joins and costs.

The two signatures that capture the gadget's intended logic, shown in Parts~(b) and~(c), satisfy the constraints $x=y$, $x'=\bar{x}$, and $y'=\bar{y}$, and have cost $c = \Omega+7$. Every other signature violates at least one of these constraints and has higher cost.   There are two types of violating signatures. 
If only the constraint $x=y$ is violated while $x'=\bar{x}$ and $y'=\bar{y}$ are satisfied, the cost increases by $1$, as shown in Parts~(d) and~(e). If either $x'=\bar{x}$ or $y'=\bar{y}$ is violated, the cost increases by at least $\Omega-2$, as shown in Parts~(f)–(i).

In the NP-hardness reduction, we set $\Omega=4$, since the distinction among violating signatures is unnecessary; it suffices that every violating signature has cost at least one greater than that of the satisfying signatures. In the hardness-of-approximation reduction, we set $\Omega$ to a
larger value, ensuring that every signature violating
$x'=\bar x$ or $y'=\bar y$ incurs a sufficiently large cost penalty.

The boxed atomic variable gadget is shown in Figure~\ref{boxed-var}(a). As with the clause gadget, the kernel pins are connected to the nucleus pins by  internal wires.

{
\renewcommand{\galscale}{0.3}
\begin{figure}[!htbp]\centering
\sfg{boxed-var.pdf}\hfilli\hfilli\hfilli
{
\renewcommand{\galscale}{0.25}
\sfg{type-var-labeled.pdf}
}
\caption{(a) \textbf{Boxed variable gadget}; (b) Variable gadget symbol.}
\label{boxed-var}
\end{figure}
}

The atomic variable gadget has degree~2. Variable gadgets of degree~$k$, for $k \geq 3$, are constructed from atomic variable gadgets as shown in Figure~\ref{degkvarg}. For degree~1, the atomic wire gadget serves as a variable gadget.

\begin{figure}[!htbp]\centering
{
\renewcommand{\galscale}{0.21}
\sfg{deg-2-var-gadget.pdf}\hfill\sfg{deg-k-var-gadget.pdf}\hfill
}
{
\renewcommand{\galscale}{0.19}
\sfg{deg-k-var-wiring.pdf}
}
\caption{\textbf{Higher-degree variable gadgets.} (a) Degree-$3$ variable gadget; (b) degree-$k$ variable gadget, for $k\geq 3$; (c) wired degree-$k$ variable gadget   
with outer pins spaced by $4\Delta$.}
\label{degkvarg}
\end{figure}

\subsection{Crossing-wire  gadget}\label{CrossGadget}

{
\renewcommand{\galscale}{0.30}
\begin{figure}[!htbp]\centering
\sfg{Crossing-Gadget.pdf}\hfilli\hfill
\sfg{cross-justification-1.pdf}\hfill
\sfg{cross-justification.pdf}
\\[-21em]
\hspace{20 em}
\sfg{Crossing-fix.pdf}
\vspace{12 em}
\\[1em]
{
\renewcommand{\galscale}{0.28}
\sfg{cross-just.pdf}\hfill
\sfg{cross-just-1.pdf}\hfill
\sfg{cross-just-6.pdf}\hfill
\sfg{cross-just-3.pdf}\hfill
\sfg{cross-just-8.pdf}\hfill\hfilli
\hfillih
\sfg{cross-just-2.pdf}\hfill
\sfg{cross-just-5.pdf}\hfill
\sfg{cross-just-4.pdf}\hfill
\sfg{cross-just-7.pdf}
}
\caption{\textbf{Crossing-wire gadget nucleus}.}
\label{crossnuc}
\end{figure}
}

The nucleus of the crossing-wire gadget is shown in Figure~\ref{crossnuc}(a). The gadget has even parity and four pins associated with the Boolean variables $x$, $y$, $x'$, and $y'$, and hence admits eight signatures, corresponding to all assignments in which an even number of variables are assigned \emph{True}. The signatures realizing the gadget's intended logic satisfy the constraints $x'=\bar{x}$ and $y'=\bar{y}$. These four signatures all have the same cost, as shown in Figure~\ref{crossgood}, together with their corresponding minimum-weight joins. The remaining four signatures have cost exactly one greater, as shown in Figure~\ref{crossbad}. The corresponding boxed gadget is shown in Figure~\ref{boxedcorss}. 

We next explain the construction of the gadget and the mechanism by which it realizes these costs.
 The gadget consists of three ingredients: a dual cycle that couples the left and right halves, constant-cost parity subgadgets, and auxiliary dual defects to achieve the desired lengths.

A key component of the gadget is the red dual cycle, whose two configurations are shown in Parts~(b) and~(c) of Figure~\ref{crossnuc}. To minimize the join cost, the cycle couples the state of whether $u$ is joined to $v$ or $w$ with the state of whether $u'$ is joined to $v'$ or $w'$, respectively. We show next that this coupling enforces the constraint $x'=\bar{x}$. 
 The constraint $y'=\bar{y}$ then follows from the even parity of the gadget. 
 
 The encircled  structures containing $v$ and  $v'$, shown in Part~(d), are even- and odd-parity constant-cost gadgets, respectively. They double the distances from $r$ to $v$ and $v'$, while ensuring that the Boolean values represented by $v$ and $v'$ are $\bar{x}$ and $x'$, respectively. Hence, the coupling induced by the red cycle enforces $x'=\bar{x}$.
 
In every minimum-weight join, the defect $a$ is joined to either $y$ or $b$.
To make the cost of joining $a$ to $b$ independent of the configuration of the red cycle, the dual defects $\alpha$, $\beta$, $\gamma$, and $\delta$ shown in (b) are introduced. The distances between $a$ and $y$, and between $a$ and $b$, are chosen so that joining $a$ to $b$ has the same cost as joining $a$ to $y$.

The subgadget shown in Part~(e) realizes an even-parity gadget on the pins $w$, $b$, $w'$, and $c$. All of its signatures have the same cost, namely $5$, as shown in Parts~(f)--(m). The red defects $\epsilon$ and $\tau$ compensate for the fact that $b$ is farther from the center than $c$.

The encircled structure containing $c$, shown in Part~(d), is an even-parity constant-cost 
gadget that doubles the distance from $r$ to $c$ while ensuring that the Boolean value represented by $c$ is $\bar{y'}$.

{
\renewcommand{\galscale}{0.29}
\begin{figure}[!htbp]\centering
\sfg{cross-1.pdf}\hfill
\sfg{cross-2.pdf}\hfill
\sfg{cross-3.pdf}\hfill
\sfg{cross-4.pdf}
\caption{\textbf{Satisfying signatures of the  crossing-wire   gadget}.    The four signatures satisfying	
$x'=\bar{x}$ and $y'=\bar{y}$ are shown with their corresponding minimum-weight joins; each signature has cost $45$.}
\label{crossgood}
\end{figure}
}

{
\renewcommand{\galscale}{0.29}
\begin{figure}[!htbp]\centering
\centering
\sfg{cross-5.pdf}\hfill
\sfg{cross-6.pdf}\hfill
\sfg{cross-7.pdf}\hfill
\sfg{cross-8.pdf}\\[1em]
\sfg{cross-9.pdf}\hfill
\sfg{cross-10.pdf}\hfill
\sfg{cross-12.pdf}\hfill
\sfg{cross-11.pdf}
\caption{\textbf{Nonsatisfying signatures 
of the crossing-wire   gadget.} The four signatures that violate
$x'=\bar{x}$ or $y'=\bar{y}$ are shown with their corresponding minimum-weight joins; each signature has cost 
 $46$.}
\label{crossbad}
\end{figure}
}

\begin{figure}[!htbp]\centering
{
\renewcommand{\galscale}{0.2}
\sfg{Boxed-crossing-gadget.pdf}
}
\hfill
{
\renewcommand{\galscale}{0.2}
\sfg{type-cross-labeled.pdf}
}
\caption{(a) \textbf{Boxed crossing-wire gadget}; (b) Crossing-wire gadget symbol. }
\label{boxedcorss}
\end{figure}

\subsection{Composite gadgets}\label{CompositeGadgets} 
A \emph{composite gadget} is obtained by gluing together outer pins of atomic gadgets. The \emph{outer pins} of the composite gadget are precisely the outer pins of its constituent atomic gadgets that remain unglued. We call a composite gadget \emph{closed} if it has no outer pins.

When the domain is a torus, it is implicitly assumed that the torus is sufficiently large to accommodate the composite gadget. In particular, if a composite gadget is closed, the  side length of the torus must be at least  the minimum side length of a square containing the outer boxes of all its atomic gadgets. Note that, in the closed case, the boundary atomic gadgets have no outer pins. Hence, even if the composite gadget fits tightly in the torus, identifying opposite sides does not affect any of its atomic gadgets.

A \emph{configuration} of a composite gadget is a choice of one signature for each constituent atomic gadget such that, for every pair of glued outer pins, the Boolean values assigned by the corresponding atomic signatures are opposite.

Throughout the paper, every composite gadget is assumed to admit at least one configuration. Equivalently, if each atomic gadget is replaced by a parity constraint requiring the parity of its incident Boolean variables to equal the parity of the gadget, then the resulting system of equations over $\mathbb{F}_2$ is assumed to have a solution.

The \emph{cost} of a configuration    
is the sum of the costs of the signatures assigned to its constituent atomic gadgets.

 A configuration is called \emph{satisfying} if every constituent atomic gadget is assigned a satisfying signature. Since, by definition, a satisfying signature of an atomic gadget has minimum cost, all satisfying configurations have the same cost.

A composite gadget is called \emph{satisfiable} if it admits a satisfying configuration. 
The \emph{relative cost} of a configuration of a satisfiable composite gadget is the difference between its cost and the cost of a satisfying configuration.

\subsection{Local joins}\label{LocalJoins}

To eliminate interactions between distinct atomic gadgets, we introduce the notion of a \emph{local join}. 
In a local join, no path connects defects belonging to distinct gadgets. Moreover, no dual path in one gadget couples with a primal path in another gadget.  
This is the key property underlying our reductions: it decomposes the global optimization problem over joins into a global optimization over gadget signatures together with independent local optimization problems over joins within the individual gadgets.

\begin{definition}[Local  joins]
Consider a closed composite gadget $G$ and a join $J$ for the defects in $G$. 

The \emph{restriction} of $J$ to an atomic gadget $g$ in $G$ is the set of primal and dual paths of $J$ whose endpoints both lie in $g$.

We call $J$ \emph{local} with respect to $G$  if:
\begin{itemize}
\item[(a)]
it is the union of its restrictions to the atomic gadgets of $G$, and
\item[(b)]
for every dual path $d$ in the restriction of $J$ to an atomic gadget $g$, no primal path in the restriction of $J$ to any other atomic gadget contains an edge dual to an edge of $d$.
\end{itemize}

If $J$ is local, the \emph{$J$-signature} of an atomic gadget $g$ is the unique subset $S$ of its outer pins such that the restriction of $J$ to $g$ is an $S$-join.
The \emph{configuration} of $J$ is the collection of the $J$-signatures of all atomic gadgets of $G$.
\end{definition}
\begin{definition}[Locally optimal joins]\label{locoptj}
A local join $J$  
with respect to a closed composite gadget $G$
is \emph{locally optimal}  if, for every atomic gadget $g$ of $G$, the restriction of $J$ to $g$ is a minimum-weight $S$-join, where $S$ is the $J$-signature of $g$.
\end{definition}

Thus, the weight of a local join is at least the cost of its configuration, with equality if and only if the join is locally optimal.

Conversely, each configuration is associated with a locally optimal join whose weight is equal to the cost of the configuration.

The following Localization Lemma shows that every non-local join incurs an additive penalty that grows with~$\Delta$. Consequently, by choosing $\Delta$ sufficiently large, we can ensure that every minimum-weight join is local, and hence that its weight is equal to the minimum cost of a configuration. 
 This property underlies the soundness of both the NP-hardness and the hardness-of-approximation reductions.

The \emph{offset} of a gadget $g$ is the maximum difference between $\Delta$ and the depth of one of its relay pins.
Except for the garbage-collection gadgets and the equality gadget, all gadgets have strictly positive offset.

\begin{restatable}{lemma}{LOCLemma}\label{loclemma}
\textbf{\emph{(Localization Lemma)}}
Consider a closed composite gadget $G$ and a join $J$ for the defects in $G$. 
If $J$ is not local, then the weight of $J$ is at least
\[
R+2\Delta/5-3\delta-1,
\]
where $R$ is the total relay cost of all gadgets, and $\delta$ is the maximum offset among the gadgets, assuming $\delta<2\Delta/5$.
\end{restatable}
The proof of the Localization Lemma is in Appendix~\ref{appA}.

\subsection{NP-hardness 
of   Minimum-Weight Join}\label{redS}

In this section, we prove that the Minimum-Weight Join problem is NP-hard by constructing, from a given $3$-CNF formula $\phi$ with $m$ clauses, a closed composite gadget $G$ such that, for a sufficiently large choice of $\Delta$, $\phi$ is satisfiable if and only if the minimum weight of a join for the defects in $G$ equals the baseline cost, namely, the sum over all atomic gadgets $g$ of the minimum cost of a signature of $g$. We establish this equivalence in Lemma~\ref{soundnessnp}. The proof relies on the Localization Lemma to establish the soundness direction, which requires $\Delta=\Theta(m^2)$. Using this characterization, we conclude in Theorem~\ref{nphardnessMWJ} that the Minimum-Weight Join problem is NP-hard.
 
We construct a closed composite gadget $G$ from $\phi$ as shown in Figure~\ref{layoutf}. For each variable, we include a degree-$k$ variable gadget, as shown in Figure~\ref{degkvarg}(c), where $k$ is the maximum number of occurrences of this variable or its negation in the clauses. For each clause, we include an atomic clause gadget. 

We connect the pins of the variable gadgets to the literal pins of the clause gadgets by wires. Since the wires implement negation, a literal in a clause is connected to the pin of the variable gadget corresponding to its negation. 

Next, we connect the garbage-collection pins of the clause gadgets to a garbage-collection layer, as constructed in Figure~\ref{gcdegkf}. Since, for each variable gadget, the number of copies of 
the corresponding variable 
 equals the number of copies of its negation, some literal pins of the variable gadgets may remain unused. We connect each such pin to the garbage-collection layer. Finally, we choose the parity of the garbage-collection layer so that the total number of primal defects in the composite gadget is even. 

Each wire crossing is implemented using a  crossing-wire  gadget.
To minimize the dimensions of the construction, we route the wires as follows. 
We connect the first pin of the first variable gadget via a wire consisting of three segments (ignoring the additional horizontal segment needed when the  target literal pin is on the left or right side of the clause  gadget), as shown by the solid green line in Figure~\ref{layoutf}. 
For all other pins of the variable gadgets, we use five-segment wires, as
shown in the figure.  We place all crossing-wire gadgets along the two horizontal segments of each of the five-segment wires. Thus, each such wire increases the depth of the construction by  exactly $8\Delta$. The resulting composite gadget $G$ has dimensions $O(m\Delta) \times O(m\Delta)$, as shown in the figure.

\begin{figure}[!htbp]\centering
\includegraphics[width=\textwidth]{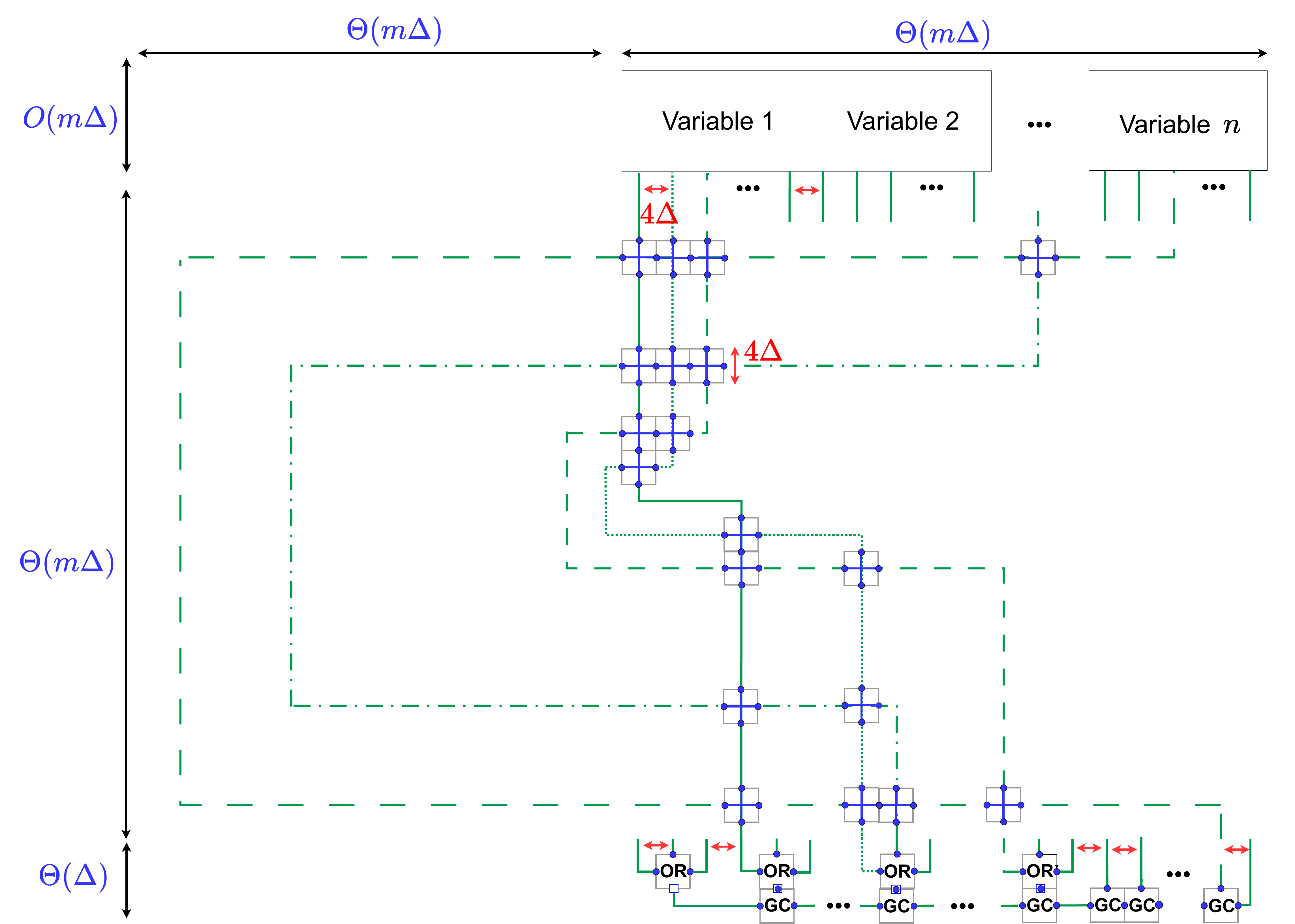}
\caption{\textbf{Closed composite gadget constructed from an instance of  3SAT}.}\label{layoutf}
\end{figure}

Since the parity of the garbage-collection layer is chosen so that the total number of primal defects in the composite gadget is even, there is a one-to-one correspondence between assignments of the variables of $\phi$ and configurations of $G$ in which every atomic gadget, except possibly the clause gadgets, is assigned a satisfying signature. 
Recall that, by definition, $G$ is satisfiable if it admits a configuration in which every atomic gadget is assigned a satisfying signature. It follows that $G$ is satisfiable if and only if $\phi$ is satisfiable.

The next lemma establishes the correctness of the reduction from 3SAT to the Minimum-Weight Join problem. It uses the Localization Lemma to prove the soundness direction for $\Delta=\Theta(m^2)$.

\begin{lemma}[Correctness of the Reduction from 3SAT to Minimum-Weight Join]\label{soundnessnp}
Let  $B$ be the  baseline cost, i.e., the sum, over all atomic gadgets $g$, of the minimum cost of a signature of $g$.
Let $W^*$ denote the minimum weight of a join for the defects in $G$.

For a sufficiently large choice of $\Delta=\Theta(m^2)$, we have
$W^* \ge B$,
with equality if and only if $\phi$ is satisfiable.
\end{lemma}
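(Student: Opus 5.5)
We split the argument into the local case and the non-local case. Each minimum-weight join is then compared with the baseline $B$.

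\emph{Local joins.} Let $J$ be a local join with configuration $C$. By the remark following Definition~\ref{locoptj}, the weight of $J$ is at least the cost of $C$. That cost is the sum of the signature costs of the atomic gadgets, and each signature cost is at least the gadget's minimum signature cost. Hence every local join has weight at least $B$.

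Equality holds only if $J$ is locally optimal and every atomic gadget receives a satisfying signature, i.e.\ only if $C$ is a satisfying configuration. Since $G$ is satisfiable if and only if $\phi$ is, a local join of weight exactly $B$ exists if and only if $\phi$ is satisfiable. When $\phi$ is satisfiable, we take a satisfying configuration and the locally optimal join associated with it. This join has weight exactly $B$, so $W^*\le B$. Conversely, if $\phi$ is unsatisfiable, every configuration contains an unsatisfying signature. All gadget costs are integers, so every local join has weight at least $B+1$.

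\emph{Non-local joins.} Here we invoke the Localization Lemma. Every non-local join has weight at least $R+2\Delta/5-3\delta-1$. It remains to show this exceeds $B$ once $\Delta=\Theta(m^2)$ is large enough.

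Write $B=R+\sum_g c_g$, where $c_g$ is the minimum nucleus-signature cost of gadget $g$. This decomposition holds because the relay layer adds exactly the gadget's relay cost to every signature. It remains true for the empty-nucleus gadgets, since their costs ($3\Delta$ for the even garbage-collection gadget) come entirely from relay depths measured to the center.

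The task therefore reduces to bounding $\sum_g c_g$ and $\delta$. All nuclei have $A,B=\Theta(1)$ in the NP-hardness reduction (with $\Omega=4$), so each $c_g$ and each offset is $O(1)$. The number of atomic gadgets is the main quantity to control. The construction has $O(m)$ clause, variable and garbage-collection gadgets, plus $O(m^2)$ crossing-wire gadgets, since there are $O(m)$ wires and each crosses $O(m)$ others.

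We must also account for the wire gadgets. Their nuclei are single defects, and their nucleus cost can be absorbed into the relay accounting: the signature cost $2\Delta$ equals the relay cost, so $c_g=0$. The wires have total length $O(m^2\Delta)$, but this contributes only to $R$, not to the excess term. This bookkeeping step, showing that the non-relay part of $B$ is $O(m^2)$ independently of $\Delta$ and that $\delta=O(1)$, is the main obstacle. It requires checking each gadget's nucleus cost and offset and counting crossings in the layout of Figure~\ref{layoutf}.

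Granting it, we get $\sum_g c_g\le Km^2$ for a constant $K$. Choosing $\Delta\ge 5(Km^2+3\delta+2)/2$ makes every non-local join have weight at least $B+1$. This also satisfies the hypothesis $\delta<2\Delta/5$ of the Localization Lemma.

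\emph{Combining the cases.} Every join has weight at least $B$, so $W^*\ge B$. Moreover $W^*=B$ if and only if some local join has weight $B$, which by the first paragraph happens if and only if $\phi$ is satisfiable.
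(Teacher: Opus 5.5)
Your proof is correct and follows the paper's route: the Localization Lemma combined with $b=B-R=O(m^2)$ (only variable, clause, and the $O(m^2)$ crossing-wire gadgets contribute) and $\delta=O(1)$. The one difference is that you only need non-local joins to weigh at least $B+1$ and then use integrality of configuration costs in the unsatisfiable case, whereas the paper compares non-local joins against the upper bound $W^*<B+m$; the paper's choice yields the stronger fact that every minimum-weight join is local, i.e.\ $W^*=C^*$, which is the form it reuses in the Gap Lemma.
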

{\em Proof.}   
Let $C^*$ denote the minimum cost of a configuration of $G$. 
Since, for every atomic gadget, the satisfying signatures are precisely the minimum-cost signatures, we have
$
C^* \ge B$, 
with equality if and only if $\phi$ is satisfiable.

We establish the lemma by showing that, for a sufficiently large choice of $\Delta$, we have $C^*=W^*$. First, note that $W^*\leq C^*$, since the locally optimal join associated with a minimum-cost configuration of $G$ has weight equal to $C^*$. Thus, it suffices to prove that $W^*\geq C^*$. To establish this inequality, it is enough to show that every minimum-weight join $J^*$ for $G$ is local. Indeed, if $J^*$ is local, then its weight $W^*$ is at least the cost of its  configuration, which is in turn at least 
$C^*$, since the latter is the minimum cost of a configuration. Therefore, the locality of every minimum-weight join implies that 
$W^*\geq C^*$. To establish locality, we first derive an upper bound on $W^*$ and then invoke the Localization Lemma.

If $\varphi$ is satisfiable, then $C^*=B$. Otherwise, 
$C^*<B+m$ since the relative cost of a nonsatisfying signature of a clause gadget is~$1$, and  there exists an assignment satisfying at least one clause. We could have used 
the stronger lower bound of $m/2$ on 
 the number of satisfied clauses, but this simple bound is enough for the proof. It follows that, in both cases,  
$W^*< B+m$ since 
$W^*\leq C^*$.
 
Therefore, to guarantee that every minimum-weight join for $G$ is local, it suffices by 
the Localization Lemma
to choose
\[
R + \frac{2\Delta}{5} - 3\delta - 1 \geq B+m,
\]
that is,
\[
\Delta \geq \frac{5}{2}(b+m+3\delta+1),
\]
where $b=B-R$ is the sum, over all atomic gadgets $g$, of the minimum cost of a signature of the nucleus of $g$. 

We exclude the wire and garbage-collection gadgets from this sum, since their contribution is zero.
In total, there are $\Theta(m)$ atomic variable and  clause gadgets, and $\Theta(m^2)$ atomic crossing-wire   gadgets. For each of these gadgets, the minimum cost of a signature of its nucleus is $\Theta(1)$. Thus, $b=\Theta(m^2)$. 
On the other hand, $\delta=\Theta(1)$. Hence, setting $\Delta=\Theta(m^2)$ is sufficient to ensure that every minimum-weight join for the  defects in  $G$ is local. 
\finito

Recall that the \emph{Minimum-Weight Join} problem asks for a minimum-weight join for prescribed primal and dual defects in the domain, which is either the square lattice or the square lattice on a torus.

\NPHTheorem*
{\em Proof.}  Given a $3$-CNF formula $\phi$ with $m$ clauses, construct the closed composite gadget $G$ from $\phi$ as described above, and compute its baseline cost $B$. 
Note that the outer boxes of all atomic gadgets in $G$ are contained in a
$K\times K$ square, where $K=\Theta(m^3)$.
If the domain is a torus, we choose its side length to be $K$. 
 By Lemma~\ref{soundnessnp}, satisfiability of $\phi$ reduces to computing the minimum join weight $W^*$ for the defects of $G$ and comparing it with $B$. In particular, $\phi$ is satisfiable if and only if $W^*=B$. \finito

\section{Hardness of approximation}\label{hardapx}

We establish hardness of approximation for the Minimum-Weight Join problem using 
 H{\aa}stad's Inapproximability Theorem~\cite{Has01}, which asserts that, for every constant $0<\varepsilon<1/8$, it is NP-hard to distinguish between a satisfiable $3$-CNF formula $\phi$ and one for which every assignment leaves at least $(1/8-\varepsilon)m$ clauses unsatisfied. We first give an overview of the reduction and explain the modifications
to the NP-completeness reduction needed to obtain hardness of approximation.

Recall that, in the NP-hardness reduction, if $\phi$ is satisfiable, then the minimum cost $C^*$ of a configuration of the constructed closed composite gadget is equal to its baseline cost $B$, defined as the sum, over all atomic gadgets, of the minimum cost of a signature. 
If  $\phi$ is unsatisfiable, then $C^*$ is larger than $B$ by at least one. Indeed, equality would require every variable, crossing-wire, and clause gadget to be assigned a satisfying signature, which is impossible when the formula is unsatisfiable.

 However, the gap between $C^*$ and $B$ need not grow with the minimum number $U^*$ of clauses left unsatisfied by an assignment. Indeed, a nonsatisfying signature of a variable or crossing-wire gadget can break the correspondence between the Boolean assignment represented by the variable gadgets and the values reaching the clause gadgets, potentially allowing many additional clause gadgets to have satisfying signatures. 
 This can occur while incurring only a small increase in $C^*$ over $B$, since a nonsatisfying signature of a variable or crossing-wire gadget has relative cost one.

To overcome this issue, we replace the variable and crossing-wire gadgets by \emph{strong} versions in which every nonsatisfying signature has relative cost at least $m$.  The wire and garbage-collection gadgets require no modification, since all their signatures are satisfying. 
Consequently, violating the strong versions of the  variable and crossing-wire gadgets 
is at least as costly as leaving all $m$ clause gadgets unsatisfied, 
and hence the gap between $C^*$ and $B$ is exactly $U^*$, i.e.,  
$C^*-B=U^*$.  

This relation allows us to derive hardness of approximation for the Minimum-Weight Join problem from H{\aa}stad's theorem. 
Roughly, if an algorithm could approximate the minimum join weight  
within an additive error of $cm$ for some  constant
$c>0$, then,   
by choosing $\Delta$ sufficiently large, the Localization Lemma would allow us to obtain a $cm$-additive approximation to $C^*$. Since $C^*-B=U^*$, such an approximation   would distinguish between the case $U^*=0$ and the case $U^*\geq(1/8-\varepsilon)m$, contradicting H{\aa}stad's theorem,  
provided that $c<1/8-\varepsilon$.

We formalize the notion of strong composite gadgets in Section~\ref{scgs}. We then construct the strong variable and crossing-wire gadgets in Sections~\ref{strongVar} and~\ref{strongCorss}, respectively; the latter construction relies on the strong equality gadget, which we construct  in Section~\ref{strongEq}. Finally, in Section~\ref{mwjha}, we use these strong gadgets to establish hardness of approximation for the Minimum-Weight Join problem.

\subsection{Strong composite gadgets}\label{scgs}

To define the strength of a composite gadget, which measures the minimum cost of violating its intended logical behavior, we first extend the notions of signatures and relative costs from atomic gadgets to composite gadgets.

The \emph{signature} of a configuration of a composite gadget 
is the assignment induced on the outer pins of the composite gadget. A \emph{signature} of the composite gadget is any signature of one of its configurations. 
The \emph{cost} of a signature is the minimum cost of a configuration inducing that signature.

If the composite gadget is satisfiable, a signature is called \emph{satisfying} if it is induced by a satisfying configuration. 
All satisfying signatures of a satisfiable composite gadget have the same cost, since every satisfying configuration has cost equal to the sum of the minimum signature costs of its constituent atomic gadgets.

The \emph{relative cost} of a signature 
of a satisfiable composite gadget 
 is defined to be the difference between its cost and the cost of a satisfying signature.

The \emph{strength} of a satisfiable composite gadget is the minimum relative cost among all nonsatisfying signatures.

Thus, for an atomic gadget, its strength  is the minimum relative cost of a signature that violates the intended behavior of the gadget. 
The atomic clause, variable, and crossing-wire gadgets each has strength $1$, 
whereas the garbage-collection and wire gadgets have infinite strength.

We will construct (satisfiable) composite variable  and crossing-wire   gadgets of strength~$m$ by combining atomic gadgets. In these constructions, we set the parameter $\Omega$ of the underlying atomic variable gadgets to $\Omega=m_0+2$, where $m_0$ is the smallest integer greater than or equal to $m$ such that $m_0+2$ is divisible by $4$. 
The atomic variable gadget enforces two types of constraints: the \emph{equality constraint} $x=y$ and the \emph{negation constraint} requiring $x'=\bar{x}$ and $y'=\bar{y}$, where the gadget pins are labeled as shown in the first row of Table~\ref{sigtable}. Note that the constraint $y'=\bar{y}$ follows from $x'=\bar{x}$ because the gadget has even parity. 
With this choice of $\Omega$, the signatures of the atomic variable gadget and their relative costs are given in Table~\ref{sigtable}. 
In Table~\ref{sigtable} and throughout what follows, green
(respectively, blue) defects are associated with Boolean variables
assigned the value \emph{True} (respectively, \emph{False}).

{\renewcommand{\arraystretch}{1.4}
\begin{longtable}{|l|p{0.22\textwidth}|p{0.22\textwidth}|c|}
\caption{\textbf{Atomic variable gadget: signatures and their relative costs.}}
\label{sigtable}\\

\hline
Signatures &
Negation constraint &
Equality constraint &
Relative cost \\ \hline
\endfirsthead

\hline
Configurations &
Negation constraint &
Equality constraint &
Relative cost \\ \hline
\endhead

\tabfig{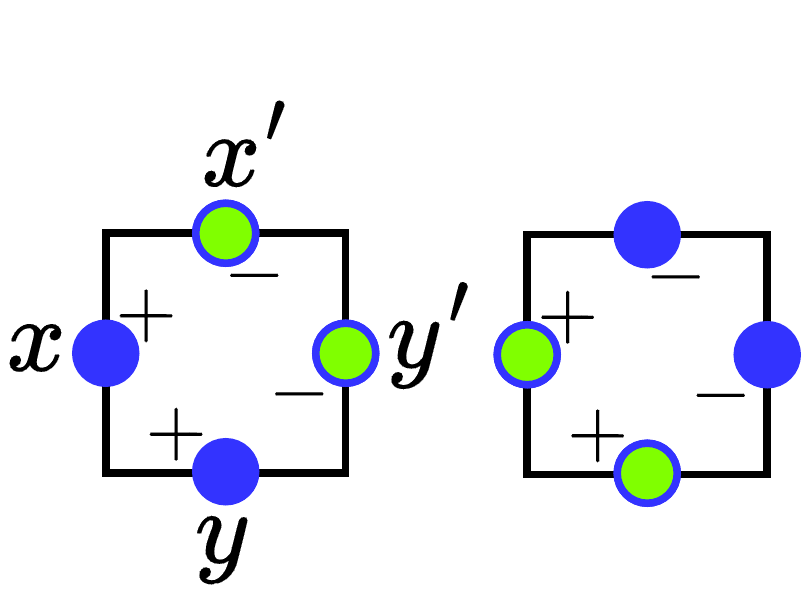}{0.14\textwidth}
&
Satisfied
&
Satisfied
&
0
\\ \hline

\tabfig{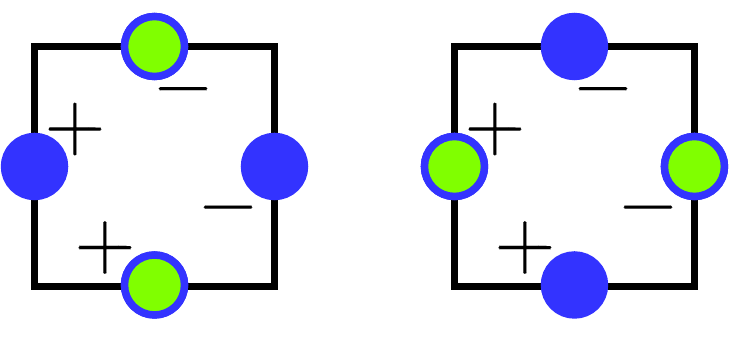}{0.14\textwidth}
&
Satisfied
&
Violated 
&
1
\\ \hline

\tabfig{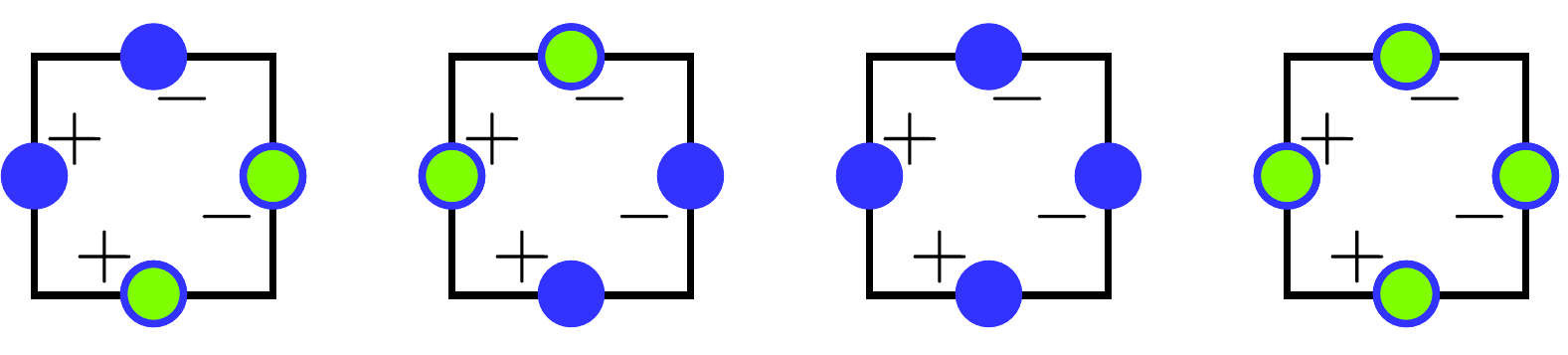}{0.32\textwidth}
&
Violated 
&

&
at least $m$
\\ \hline

\end{longtable}
}

\subsection{Strong variable gadget}\label{strongVar}

{
\renewcommand{\galscale}{0.2	}
\begin{figure}[!htbp]\centering
\sfg{Strong-var-gadget.pdf}\hfill
\sfg{Strong-var-gadget-just.pdf}\hfill
\sfg{strong-var+m.pdf}\hfill
\sfg{strong-var+m-just.pdf}

\caption{\textbf{Strong degree-$2$ variable gadgets.}
(a) Strength-$2$; (c) strength-$m$; (b) and (d) show the propagation of a signature violating the equality constraint.}
\label{strvg}
\end{figure}
}

To illustrate the idea, we begin with the degree-$2$ and strength-$2$ case. 
 A strength-$2$ variable gadget is obtained by connecting two atomic variable gadgets via wires  as shown in Figure~\ref{strvg}(a).

Consider any configuration of the gadget. If the signature of at least one atomic gadget violates the negation constraint, then its relative cost is at least $m$, and hence the total relative cost is at least $m$. Otherwise, all atomic gadgets satisfy the negation constraint. In this case, if one atomic gadget violates the equality constraint, then the connecting wires propagate the same violating signature to the other atomic gadget, as shown in Figure~\ref{strvg}(b). Thus, both atomic gadgets incur a relative cost of~$1$, giving a total relative cost of~$2$. Therefore, the strength of the gadget is~$2$.

The same construction extends to 
strength~$m$ for degree $2$. By connecting $m$ atomic variable gadgets as shown in Figure~\ref{strvg}(c), any violation of the equality constraint propagates to every atomic gadget, as illustrated in Figure~\ref{strvg}(d). Consequently, every nonsatisfying signature has relative cost at least~$m$, and the gadget has strength~$m$.

{
\renewcommand{\galscale}{0.16}
\begin{figure}[!htbp]\centering
\sfg{degree-k-strong-var+m.pdf}
\hspace{-2.5em}
\sfg{degree-k-strong-var+m-just.pdf}
\hspace{-0.5em}
\sfg{degree-k-strong-var+m-boxed.pdf}
\caption{\textbf{Degree-$k$ strong variable gadget.}
(a) Degree-$k$ variable gadget of strength~$m$;
(b) diagonal propagation of a signature violating the equality constraint;
(c) wired  gadget with outer pins spaced by a parameter~$\Gamma$.}
\label{strvgk}
\end{figure}
}

To construct a degree-$k$ variable gadget of strength~$m$, for $k\ge3$, we arrange the atomic gadgets in $k-1$ layers, as shown in Figure~\ref{strvgk}(a).

Consider any configuration of the gadget. If at least one atomic gadget violates the negation constraint, then the total relative cost is at least~$m$. Otherwise, every atomic gadget satisfies the negation constraint. If one atomic gadget violates the equality constraint, then this violation propagates along the diagonal containing that gadget, as shown in Figure~\ref{strvgk}(b). The propagation follows from the boundary wires and from the atomic gadgets on the upper and lower diagonals, which themselves act as wires because they satisfy the negation constraint. Consequently, every atomic gadget on that diagonal violates the equality constraint, giving a total relative cost of at least~$m$. Therefore, the degree-$k$ gadget has strength~$m$.

\subsection{Strong equality gadget}\label{strongEq}

The construction of the strong crossing-wire   gadget uses a degree-$k$ equality gadget of strength~$m$. In this section, we show how to construct such a gadget for any even $k\geq 2$.

We begin with the atomic case, shown in Figure~\ref{streq2}. It has degree~$2$ and infinite strength.

{
\renewcommand{\galscale}{0.23}
\begin{figure}[!htbp]\centering
\sfg{boxed-eq2.pdf}\hfilli\hfilli\hfilli\hfilli
\sfg{eq2-1.pdf}\hfilli\hfilli\hfilli\hfilli
\sfg{eq2-2.pdf}\hfilli\hfilli\hfilli\hfilli
\sfg{type-eq-sized.pdf}
\caption{
\textbf{Degree-$2$ equality gadget.}
The atomic equality gadget is shown in~(a). Its nucleus is empty.
Parts~(b) and~(c) show its two signatures together with their corresponding minimum-weight joins.
Both signatures have cost $2\Delta$.
Part~(d) shows the gadget's symbol.
}\label{streq2}
\end{figure}
}

To construct a degree-$k$ equality gadget of strength~$m$ for even $k\geq 4$, we use a degree-$k/2$ variable gadget of strength~$m$ and connect each of its $k/2$ outer pins corresponding to the negated Boolean variable to an atomic equality gadget, as shown in Figure~\ref{streqm}.

The resulting gadget has two satisfying signatures: one in which all $k$ outer pins are assigned \emph{True}, and one in which they are all assigned \emph{False}. Since the atomic equality gadgets have infinite strength, the resulting gadget has the same strength as the underlying variable gadget.

\begin{figure}[!htbp]\centering
{
\renewcommand{\galscale}{0.3}
\sfg{strong-eq-gadget-diag.pdf}
}\hfilli\hfilli\hfilli
{
\renewcommand{\galscale}{0.2}
\sfg{degree-k-strong-eq.pdf}
}
\caption{\textbf{Degree-$k$ strong equality gadget, for even $k\geq 4$.} 
(a) Schematic diagram; (b) construction.  
}\label{streqm}
\end{figure}

\subsection{Strong  crossing-wire  gadget}
\label{strongCorss}

In this section, we construct a strong crossing-wire gadget of strength~$m$ from $m'$ atomic crossing-wire gadgets and two degree-$(m'+1)$ equality gadgets of strength~$m$, where
$m'=m$ if $m$ is odd and $m'=m+1$ if $m$ is even.

We begin with the case $m=3$. Consider the composite gadget obtained by  connecting three atomic crossing-wire gadgets and two degree-$4$ equality gadgets of strength~$3$, as shown in Figure~\ref{scf1}(a).

{
\renewcommand{\galscale}{0.2}
\begin{figure}[!htbp]\centering
\sfg{Strong-crossing-gadget.pdf}\hfilli\hfilli\hfilli
\sfg{Strong-crossing-gadget-m.pdf}
\caption{\textbf{Schematic diagrams of a strong crossing-wire gadget.} (a) strength-$3$ gadget; (b) strength-$m$ gadget.}
\label{scf1}
\end{figure}
}

Table~\ref{crstr3table} shows all configurations of the composite gadget that satisfy the equality gadgets together with their signatures.
The first four configurations satisfy the composite gadget, whereas the last four do not. In each of the latter, all three atomic crossing-wire gadgets are unsatisfied, each contributing a relative cost of~$1$, for a total relative cost of~$3$.

Since the composite gadget is an even-parity gadget with $4$ pins,
the resulting signatures are precisely its $8$ signatures. While 
each  satisfying signature of the composite gadget  is induced by a unique configuration, this is not the case  for the nonsatisfying signatures, as they are also  induced by configurations violating one of the equality gadgets. However, every such configuration has relative cost at least~$3$, because the equality gadgets have strength~$3$. Hence, every nonsatisfying signature of the  composite gadget has relative cost~$3$, and therefore the resulting gadget operates as a  crossing-wire gadget and has strength~$3$.

\begin{table}[!htbp]
\centering
\hspace{-2em}
\begin{minipage}{0.49\textwidth}
\centering
{\setlength{\tabcolsep}{8pt}\renewcommand{\arraystretch}{0.5}
\begin{tabular}{|
>{\arraybackslash}m{0.73\textwidth}|
}
\hline
Satisfying  configurations and \\ their signatures\\~\\~ \\ \hline
 \noalign{\vskip 1pt}
\tabfig{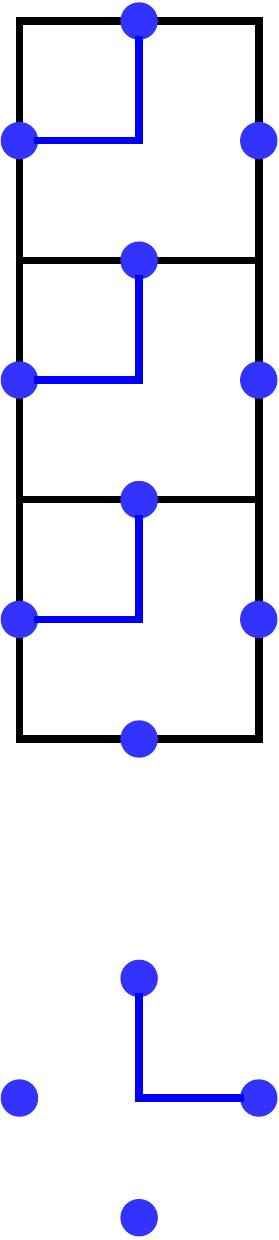}{0.13\textwidth}\hspace{8pt}
\tabfig{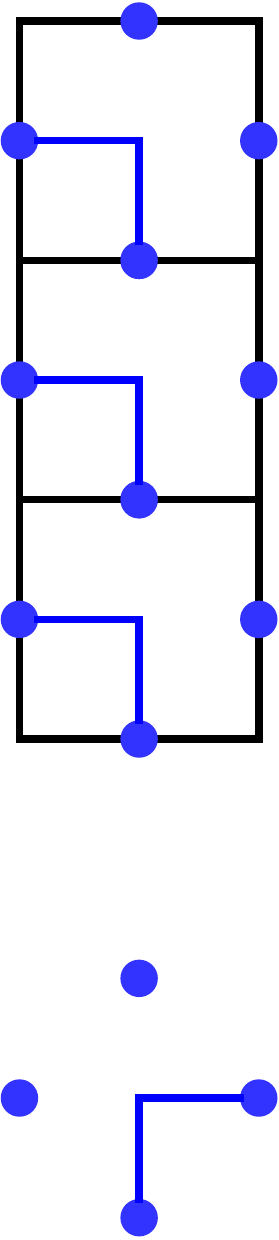}{0.13\textwidth}\hspace{8pt}
\tabfig{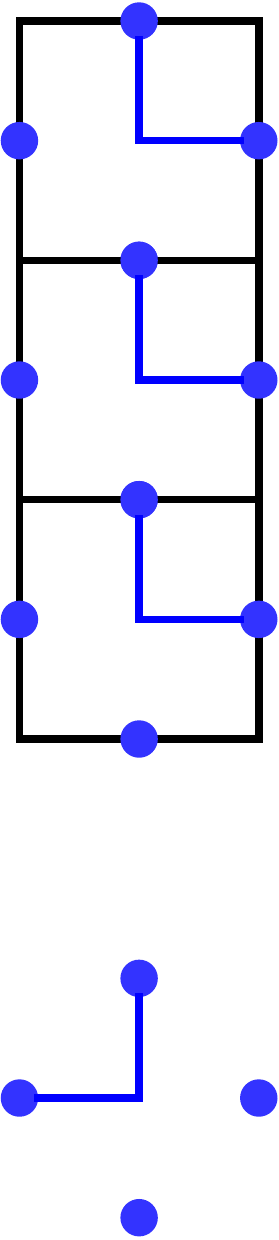}{0.13\textwidth}\hspace{8pt}
\tabfig{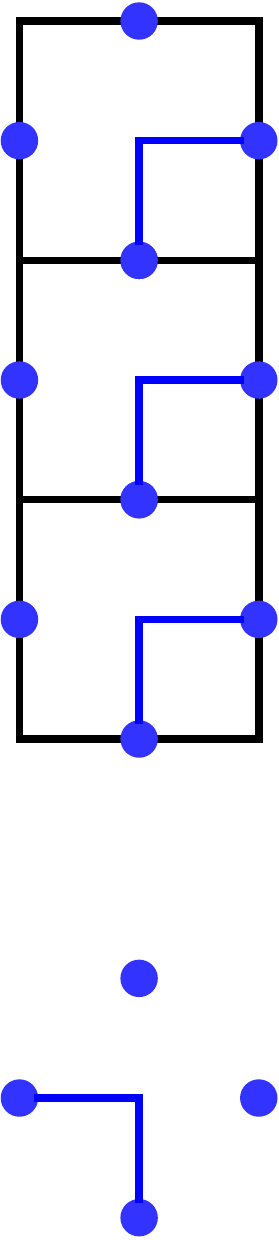}{0.13\textwidth}
\\ \hline
\end{tabular}}
\end{minipage}
\hfill
\begin{minipage}{0.49\textwidth}
\centering
\hspace{-2em}
{\setlength{\tabcolsep}{8pt}\renewcommand{\arraystretch}{0.5}
\begin{tabular}{|
>{\arraybackslash}m{0.73\textwidth}|
>{\centering\arraybackslash}m{0.13\textwidth}|}
\hline
Nonsatisfying configurations that satisfy the equality gadgets, and their signatures
& Relative cost \\ \hline
 \noalign{\vskip 1pt}
\tabfig{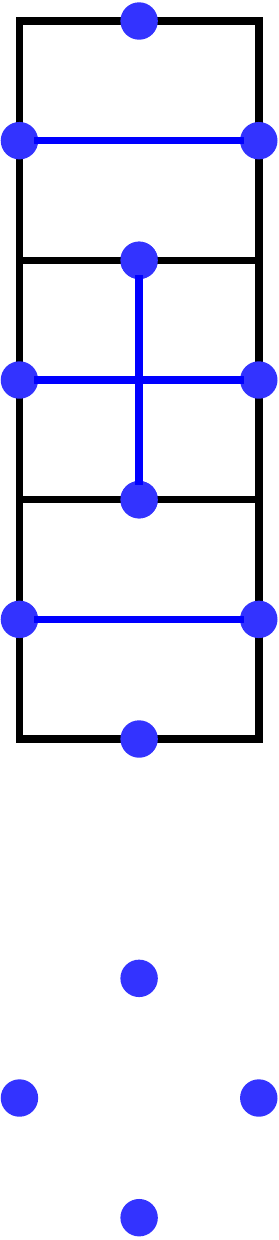}{0.13\textwidth}\hspace{8pt}
\tabfig{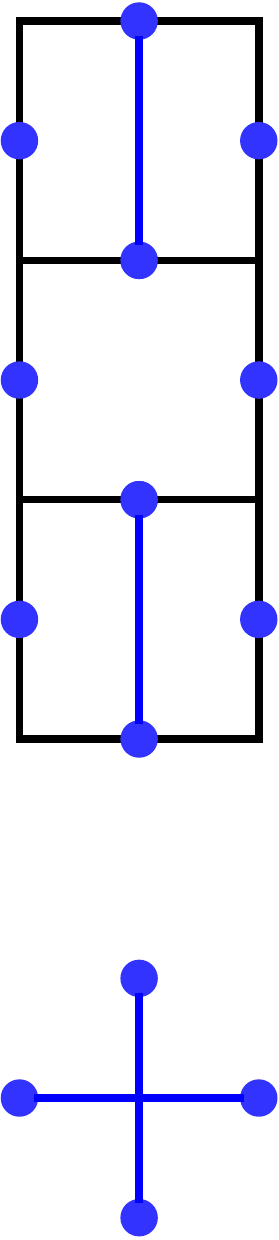}{0.13\textwidth}\hspace{8pt}
\tabfig{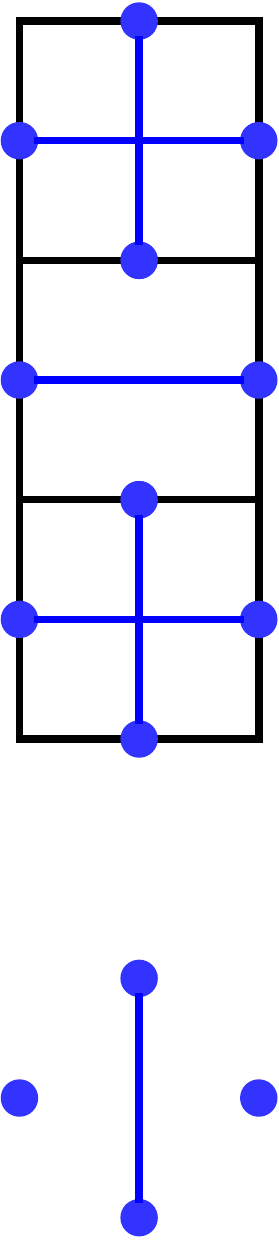}{0.13\textwidth}\hspace{8pt}
\tabfig{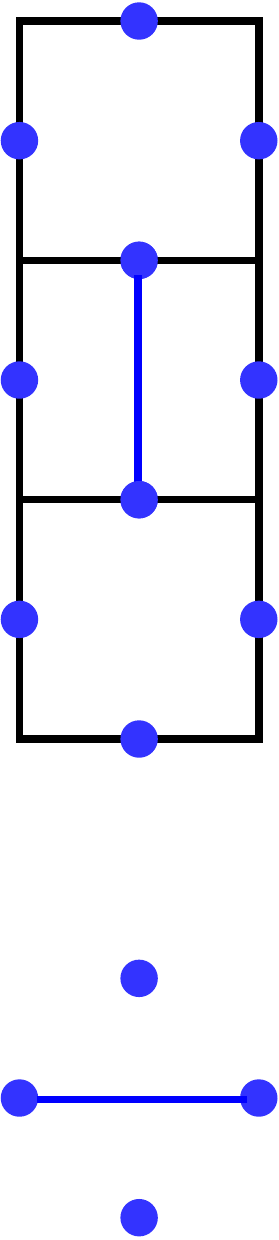}{0.13\textwidth}
& $3$\\ \hline
\end{tabular}}
\end{minipage}
\caption{\textbf{Configurations satisfying the equality gadgets and their signatures.}
The first row shows all configurations of the composite gadget satisfying the equality gadgets, and the second row shows their corresponding signatures.
For each atomic crossing-wire gadget and each of its outer pins, we indicate that the associated Boolean variable is assigned \emph{False} by drawing a line from the pin to the center  of the gadget. The same convention is used for the signatures of the composite gadget.
}
\label{crstr3table}
\end{table}

The same construction extends to strength~$m$, as shown in Figure~\ref{scf1}(b). To see why the resulting composite gadget operates as a crossing-wire   gadget of strength~$m$, consider an arbitrary configuration.

If the configuration violates one of the equality gadgets, then its relative cost is at least~$m$, since each equality gadget has strength~$m$. Therefore, it suffices to consider configurations satisfying both equality gadgets. There are exactly $8$ such configurations, determined by the two signatures of each of the equality gadgets together with the values assigned to the two  free outer pins of the first and last atomic crossing-wire   gadgets.

If the left and right pins of the atomic  crossing-wire   gadgets are assigned opposite truth values, then the two free pins must also be assigned opposite truth values, since the composite gadget has even parity (each constituent gadget has even parity, and an even number of outer pins are glued together). These are precisely the four satisfying configurations of the composite gadget, and their signatures realize the intended logic of a  crossing-wire    gadget. Moreover, each of these signatures uniquely determines its corresponding configuration.

If, on the other hand, the left and right pins of the atomic crossing-wire   gadgets are assigned the same truth value, then every atomic crossing-wire   gadget is unsatisfied and therefore incurs a relative cost of~$1$. Hence, the total relative cost is~$m'\geq m$. 

It follows that the constructed composite gadget is a crossing-wire   gadget of strength~$m$.

Figure~\ref{sclayout} shows the layout, dimensions, and symbol of the gadget.

\begin{figure}[!htbp]\centering
\sfg{Strong-crossing-gadget-construction.pdf}
\hspace{-3em}
{
\renewcommand{\galscale}{0.4}
\sfg{type-crossing-strong.pdf}
}
\caption{\textbf{Strong crossing-wire   gadget.}
The layout of the gadget is shown in~(a). It has dimensions
$\Lambda \times \Gamma$, where
$\Lambda=\Theta(m\Delta)$ and
$\Gamma=\Theta(m\Delta)$.
The gadget's symbol is shown in~(b). 
}
\label{sclayout}
\end{figure}

\subsection{Hardness of approximation of Minimum-Weight Join}\label{mwjha}

In this section, we prove that the Minimum-Weight Join problem is hard to approximate within an additive error of $\Omega(K^{1/7})$, where $K$ denotes the lattice side length, assuming that $P\neq NP$. As in the NP-hardness reduction, we begin by constructing a closed composite gadget $G$ from a given $3$-CNF formula $\phi$ with $m$ clauses. The construction is identical to that of Section~\ref{redS}, except that the variable and crossing-wire gadgets are replaced by their strong counterparts.  We then prove two key lemmas. The Strong Gadget Lemma (Lemma~\ref{strlem}) shows that every sufficiently cheap configuration must satisfy all strong composite gadgets, while the Gap Lemma (Lemma~\ref{gaplem}) combines the Strong Gadget Lemma with the Localization Lemma to show that, for $\Delta = \Theta(m^5)$, the gap between the minimum weight of a join and the baseline cost is exactly the minimum number of unsatisfied clauses.
Finally, in Theorem~\ref{mainthMWJ}, we use the Gap Lemma to reduce Gap-$3$SAT  to the Minimum-Weight Join problem and conclude from H{\aa}stad's Inapproximability Theorem that the latter problem is hard to approximate. The soundness of this reduction relies on the equality established in the Gap Lemma, which in turn follows from the Localization Lemma.

We construct from $\phi$ a closed composite gadget $G$ using strength-$m$ composite variable gadgets and strength-$m$ crossing-wire gadgets, as shown in Figure~\ref{layoutfS}. 
The layout is identical to that of Section~\ref{redS}, except that each atomic crossing-wire gadget is replaced by its strength-$m$ counterpart, and the high-degree variable gadget in Figure~\ref{degkvarg}(c) is replaced by its strength-$m$ counterpart in Figure~\ref{strvgk}(c). The wires of the latter gadget are spaced by the width 
$
\Gamma=\Theta(m\Delta)
$ 
of the strong crossing-wire gadget. Instead of increasing the depth of the construction by $8\Delta$, each wire now increases it by exactly $2\Lambda$, where
$
\Lambda=\Theta(m\Delta)
$ 
is the depth of the strong crossing-wire gadget. 
The resulting composite gadget $G$ has dimensions
$
O(m^2\Delta)\times O(m^2\Delta), 
$
as shown in the figure.

\begin{figure}[!htbp]\centering
\includegraphics[width=\textwidth]{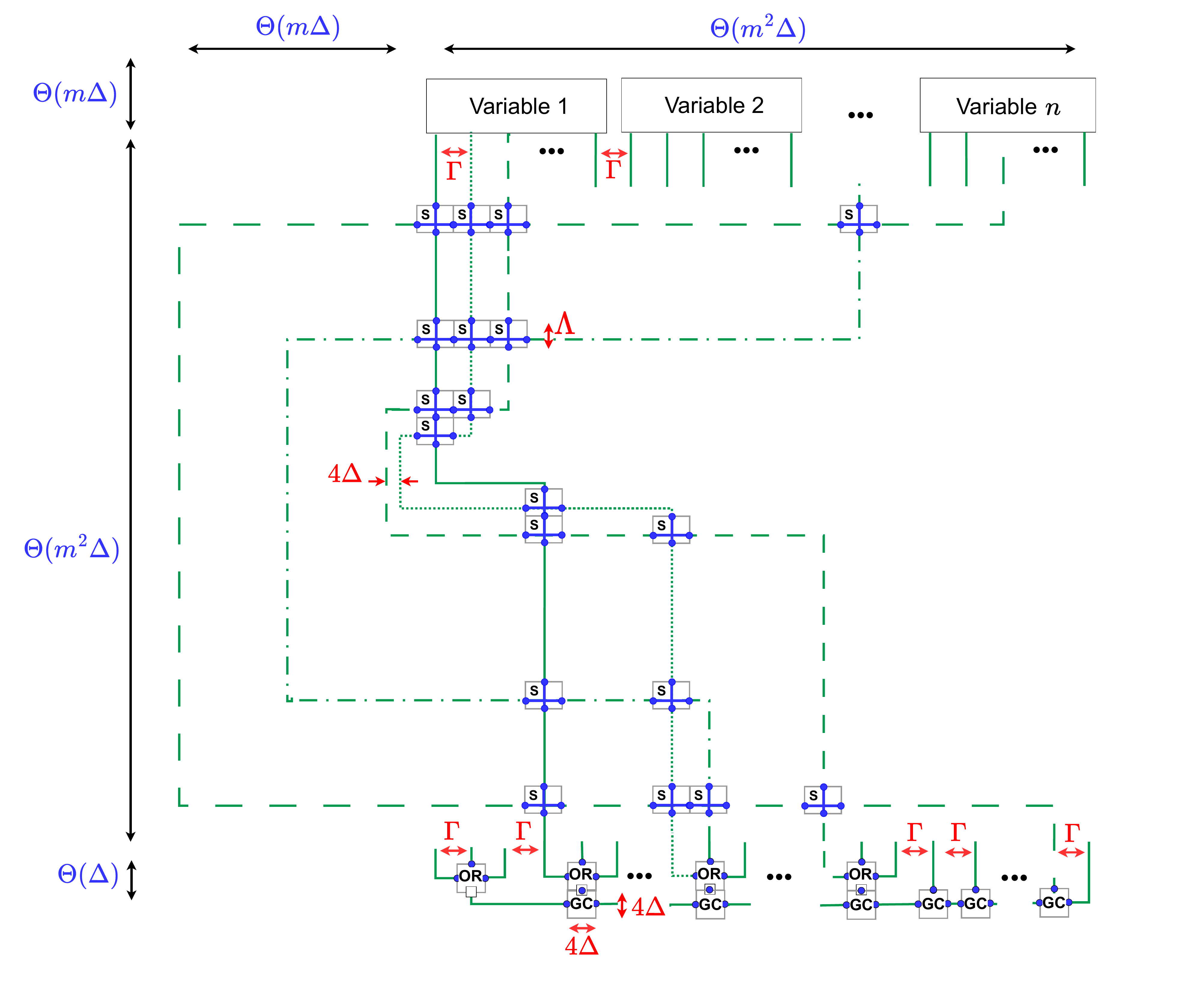}
\caption{\textbf{Closed composite gadget constructed from an instance of Gap-3SAT}.}
\label{layoutfS}
\end{figure}

As in the NP-hardness reduction, the satisfiability of $G$ is equivalent to that of $\phi$.

The following lemma shows that cheap configurations are forced to satisfy all strong composite gadgets. 
\begin{lemma}[Strong Gadget Lemma]\label{strlem}
Let $B$ be the baseline cost, i.e., the sum, over all atomic gadgets $g$ in $G$, of the minimum cost of a signature of $g$.
Consider any configuration $\chi$ of $G$. If the cost of $\chi$ is less than $B+m$, then every atomic gadget in $G$, except possibly some clause gadgets, is assigned a satisfying signature in $\chi$.
\end{lemma}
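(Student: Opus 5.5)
We argue by contrapositive. Suppose some atomic gadget $g$ of $G$ that is not a clause gadget receives a nonsatisfying signature in $\chi$; we show that the cost of $\chi$ is at least $B+m$. Since every atomic signature cost is at least the minimum for that gadget, the cost of $\chi$ minus $B$ is a sum of nonnegative relative costs, one per atomic gadget. So it suffices to show that the atomic gadgets inside a single strong composite gadget already contribute at least $m$.

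First we locate $g$. Wire and garbage-collection gadgets have only satisfying signatures, so they can be discarded. The remaining non-clause atomic gadgets are the atomic variable gadgets, the atomic crossing-wire gadgets, and the atomic equality gadgets (infinite strength, hence always satisfied). Each atomic variable or crossing-wire gadget lies inside exactly one strong composite gadget: a strength-$m$ variable gadget, a strength-$m$ crossing-wire gadget, or a strength-$m$ equality gadget inside the latter. We fix this composite gadget $H$ and consider the restriction $\chi|_H$ of $\chi$ to the atomic gadgets of $H$. This restriction is a configuration of $H$, since glued pins internal to $H$ still receive opposite values.

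The key step is to show that $\chi|_H$ has relative cost at least $m$ whenever some atomic gadget of $H$ is unsatisfied. Note that this is not literally the definition of strength, which concerns nonsatisfying signatures rather than nonsatisfying configurations. So we rerun the case analyses of Sections~\ref{strongVar}, \ref{strongEq}, and~\ref{strongCorss} at the level of configurations.

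For the strong variable gadget there are two cases. If some atomic variable gadget violates a negation constraint, that gadget alone costs at least $\Omega-2\ge m$. Otherwise the violation is of the equality constraint, and it propagates along a diagonal of $m$ atomic gadgets, each costing $1$.

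For the strong equality gadget, any unsatisfied atomic gadget lies in its underlying strong variable gadget, so the previous case applies.

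For the strong crossing-wire gadget there are also two cases. If an embedded equality gadget has an unsatisfied atomic gadget, the previous case gives cost at least $m$. Otherwise both equality gadgets are satisfied, so their pins are constant. Then either all $m'\ge m$ atomic crossing gadgets are unsatisfied, or, by the even parity argument, all of them are satisfied, contradicting the assumption on $g$.

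Summing, the cost of $\chi$ is at least $B$ plus the relative cost of $\chi|_H$, which is at least $B+m$. This contradicts the hypothesis.

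The main obstacle is the step of upgrading from signatures to configurations. In particular, for the variable gadget we must make sure the diagonal propagation argument applies to any configuration in which all negation constraints hold, including configurations whose induced outer signature is itself satisfying. I expect the propagation argument to go through because it only uses the internal gluing, but this needs to be checked rather than cited.
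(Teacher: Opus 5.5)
Your argument is correct and follows the same outline as the paper's. You discard wire, garbage-collection, and atomic equality gadgets, all of whose signatures are satisfying. You then locate the strong composite gadget containing the offending atomic gadget and charge at least $m$ to it.

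The difference is the level at which strength is used. The paper writes the cost of $\chi$ as $B$ plus the relative costs of the \emph{induced signatures} $\chi_g$ of the clause and strong gadgets. It then asserts that an unsatisfied atomic gadget inside a strong gadget $s$ makes $\chi_s$ nonsatisfying, and applies the signature-level definition of strength. Read literally against the definitions, there are two loose points:
\begin{itemize}
\item The identity is only an inequality $\ge$, because the cost of a signature is a minimum over the configurations inducing it. This is harmless.
\item The inference ``some atomic gadget of $s$ is unsatisfied, hence $\chi_s$ is nonsatisfying'' does not follow from the definitions. A nonsatisfying configuration may a priori induce a signature that some satisfying configuration also induces.
\end{itemize}
Your route avoids both points: you restrict $\chi$ to $H$ and show that every configuration of $H$ with an unsatisfied atomic gadget has relative cost at least $m$. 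What it costs is that strength can no longer be used as a black box.

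The obstacle you flag is already resolved in the paper. The case analyses in Sections~\ref{strongVar} and~\ref{strongCorss} begin with ``consider any configuration of the gadget''. They therefore prove exactly the configuration-level statement you need, so you can cite them rather than re-verify the diagonal propagation.

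Your crossing-wire dichotomy is also right as stated. Once both equality gadgets are internally satisfied, all left pins agree and all right pins agree. So either every atomic crossing gadget has $x'=\bar x$ (and then $y'=\bar y$ by even parity), or all $m'\ge m$ of them are violated.
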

{\em Proof.}
Let $\mathcal L$ denote the set of clause gadgets in $G$, and let $\mathcal S$ denote the set of strong variable and crossing-wire gadgets. For each gadget $g\in\mathcal L\cup\mathcal S$, let $\chi_g$ denote the
signature of $g$ induced by $\chi$.

Since all signatures of the wire and garbage-collection gadgets are satisfying, the cost of $\chi$ decomposes into the baseline cost and the relative costs of the clause and strong gadgets:
\[
\operatorname{cost}(\chi)
=
B+
\sum_{g\in\mathcal L\cup\mathcal S}
\operatorname{relative\mbox{-}cost}(\chi_g).
\]
Therefore,
\begin{equation}\label{abeq}
\sum_{g\in\mathcal L\cup\mathcal S}
\operatorname{relative\mbox{-}cost}(\chi_g)
<m.
\end{equation}
Assume that some atomic gadget $a$ is assigned a nonsatisfying signature in $\chi$. Suppose that $a$ belongs to a strong composite variable or crossing-wire gadget $s$. Then $\chi_s$ is nonsatisfying. Since $s$ has strength $m$, the relative cost of $\chi_s$ is at least $m$, contradicting \eqref{abeq}. Therefore, $a$ must be a clause gadget.
\finito

The next lemma uses the Strong Gadget Lemma and the Localization Lemma to show that, for $\Delta=\Theta(m^5)$, the gap between the minimum weight of a join  and the baseline cost is exactly the minimum number of unsatisfied clauses.  
Compared to the NP-hardness reduction, the Localization Lemma now requires a larger value of $\Delta$, namely 
$\Theta(m^5)$ 
 instead of $\Theta(m^2)$, because the strong variable and crossing-wire gadgets substantially increase the total number of atomic gadgets in the construction.
\begin{lemma}[Gap Lemma]\label{gaplem} 
Let: 
\begin{itemize}
\item  $B$ be the  baseline cost, i.e., the sum, over all atomic gadgets $g$, of the minimum cost of a signature of $g$;
\item  $W^*$ be  the minimum weight of a join for the defects in $G$; and 
\item $U^*$ be the minimum number of unsatisfied clauses of $\phi$ over all assignments of $\phi$.
\end{itemize}
For a sufficiently large choice of $\Delta=\Theta(m^5)$, we have  
$$
W^*-B=U^*.
$$
Moreover, for this choice of $\Delta$, 
 $B=\Theta(m^9)$ and 
 all outer boxes of atomic
gadgets in $G$ 
are contained in a
$K\times K$ square, where $K=\Theta(m^7)$.
\end{lemma}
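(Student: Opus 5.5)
The plan mirrors the proof of Lemma~\ref{soundnessnp}. Let $C^*$ be the minimum cost of a configuration of $G$. I first show $C^*-B=U^*$, then show $W^*=C^*$ for $\Delta=\Theta(m^5)$ via the Localization Lemma, and finally do the size bookkeeping.

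\textbf{Step 1: $C^*-B=U^*$.} For the upper bound, take an assignment leaving exactly $U^*$ clauses unsatisfied. It induces a configuration in which every variable, crossing-wire, wire and garbage-collection gadget (including all atomic constituents of the strong gadgets) gets a satisfying signature. Exactly the $U^*$ clause gadgets whose literal pins are all \emph{False} are nonsatisfying, each with relative cost $1$. This configuration has cost $B+U^*$.

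For the lower bound, let $\chi$ be a minimum-cost configuration. Since $U^*\le m-1$ (some clause can always be satisfied), $\operatorname{cost}(\chi)\le B+U^*<B+m$. The Strong Gadget Lemma then says every atomic gadget other than clause gadgets is assigned a satisfying signature. The satisfying signatures of the strong variable gadgets, wires and strong crossing-wire gadgets transmit a consistent Boolean assignment to the literal pins of the clause gadgets; the garbage-collection pins absorb parity. Hence the set of nonsatisfying clause gadgets is exactly the set of clauses unsatisfied by that assignment, so $\operatorname{cost}(\chi)-B\ge U^*$.

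\textbf{Step 2: $W^*=C^*$.} We have $W^*\le C^*$, using the locally optimal join of an optimal configuration. Thus $W^*\le B+U^*<B+m$. By the Localization Lemma, every non-local join has weight at least $R+2\Delta/5-3\delta-1$. It therefore suffices to choose
\[
\Delta\ge \tfrac52\,(b+m+3\delta+1),\qquad b=B-R,
\]
so that every minimum-weight join is local. Its weight is then at least the cost of its configuration, which is at least $C^*$.

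\textbf{Step 3: estimating $b$ and $\delta$.} This is the main obstacle. The offset $\delta$ is $O(1)$, since all nuclei have side $A=\Theta(1)$ and $B_{\rm nucleus}=O(m)$ for the variable kernel. For $b$, count the atomic gadgets with nonzero nucleus cost:
\begin{itemize}
\item There are $O(m^2)$ strong crossing-wire gadgets. Each contains $m'$ atomic crossings of nucleus cost $\Theta(1)$, plus two strong equality gadgets. Each equality gadget is built from a strong variable gadget of degree $\Theta(m)$, which has $\Theta(m)$ layers of $m$ atomic variable gadgets, i.e.\ $\Theta(m^2)$ atomic variable gadgets. Each atomic variable gadget has nucleus cost $\Theta(\Omega)=\Theta(m)$.
\item A strong crossing-wire gadget therefore contributes $\Theta(m^3)$, and all of them together contribute $b=\Theta(m^5)$. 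The $O(m)$ strong variable gadgets add only $O(m^4)$.
\end{itemize}
Hence $\Delta=\Theta(m^5)$ suffices.

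\textbf{Step 4: sizes.} The layout has dimensions $O(m^2\Delta)$, which gives $K=\Theta(m^7)$. For the baseline cost $B=R+b$, the relay cost dominates. Each atomic gadget contributes relay cost $\Theta(\Delta)$, and there are $\Theta(m^4)$ atomic gadgets: the $\Theta(m^2)$ strong crossings each contain $\Theta(m^2)$ atomic gadgets, and the wires add $O(m^4)$ atomic wire gadgets. This gives $B=\Theta(m^4\Delta)=\Theta(m^9)$. I would verify these counts carefully against the layouts in Figures~\ref{layoutfS} and~\ref{sclayout}, since they fix the exponents.
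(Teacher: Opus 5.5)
Your proposal is correct and follows the paper's proof essentially step for step. It obtains $C^*=B+U^*$ from the Strong Gadget Lemma and $W^*=C^*$ from the Localization Lemma with $\Delta\ge\frac52(b+m+3\delta+1)$, and it uses the same count of $\Theta(m^2)$ strong crossings, each with $\Theta(m^2)$ atomic gadgets of nucleus cost up to $\Theta(m)$, to get $b=\Theta(m^5)$, $K=\Theta(m^7)$ and $B=\Theta(m^9)$.

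One small slip: the offset is $\delta=\Theta(\Omega)=\Theta(m)$, not $O(1)$. The variable-gadget nucleus has a side of length $\Theta(m)$, as your own justification notes. This is harmless, because $b=\Theta(m^5)$ still dominates and $\delta<2\Delta/5$ still holds.
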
 
{\em Proof.}
Let $C^*$ denote the minimum cost of a configuration of $G$. First, we use Lemma~\ref{strlem} to show that
\begin{equation}\label{lock}
C^*=B+U^*.
\end{equation}
We then argue, using the Localization Lemma and following the proof of Lemma~\ref{soundnessnp},  that choosing $\Delta$ sufficiently large guarantees that $C^*=W^*$.
Note that  we know that 
$W^*\leq C^*$ since  the locally optimal join associated with a minimum-cost configuration has weight $C^*$. 
It therefore suffices to show that 
\begin{equation}\label{eineq} 
W^*\ge C^*.
\end{equation}
Each assignment of the variables of $\phi$ gives rise to a configuration of $G$ 
in which every atomic gadget, except possibly some of the  clause gadgets, is assigned a satisfying signature.

Since every nonsatisfying signature of a clause gadget has relative cost $1$, the cost of the resulting configuration is 
$B+U$, 
where $U$ is the number of clauses left unsatisfied by the assignment.

Since every $3$-CNF formula admits an assignment satisfying at least one clause, there exists a configuration $\chi$ with 
$\operatorname{cost}(\chi)<B+m$.

 Let $\chi^*$ be a minimum cost 
 configuration. Thus   
$\operatorname{cost}(\chi^*) \le
\operatorname{cost}(\chi) < B+m$. 
  It follows from Lemma~\ref{strlem} that 
every atomic gadget in $G$, except possibly some clause gadgets, is assigned a satisfying
signature in $\chi^*$. 
The signatures of the strong variable gadgets in $\chi^*$ therefore determine an assignment $x$ of the variables of $\phi$.
Hence  
$C^*
=
\operatorname{cost}(\chi^*)
=
B+U$, 
where $U$ is the number of clauses left unsatisfied by $x$.
Since $\chi^*$ has minimum cost, we must have
$U=U^*$,
proving~\eqref{lock}.

To establish \eqref{eineq},  
let $J^*$ be a minimum-weight join for the defects in $G$. As in the proof of Lemma~\ref{soundnessnp}, it suffices to show that $J^*$ is local. Indeed, if $J^*$ is local, then its weight is at least the cost of its  configuration, which is in turn at least $C^*$, since $C^*$ is the minimum cost of a configuration.

By the Localization Lemma, to guarantee that  
$J^*$  is local 
it is enough  
to choose
\[
R+\frac{2\Delta}{5}-3\delta-1\ge B+m,
\]
since the weight of $J^*$ is 
$$W^*\leq C^* = B+U^*< B+m,$$
where  the last inequality follows from $	U^*< m$.

That is, we need
\begin{equation}\label{plank}
\Delta\ge\frac{5}{2}(b+m+3\delta+1),
\end{equation}
where $b=B-R$ is the sum, over all atomic gadgets $g$, of the minimum cost of a signature of the nucleus of $g$. We exclude the wire and garbage-collection gadgets from this sum, since their contribution is zero.

The total number of atomic gadgets in the composite variable gadgets is
$\Theta(m^2)$. Each composite crossing-wire   gadget contains $\Theta(m^2)$ atomic gadgets, and there are $\Theta(m^2)$ composite crossing-wire   gadgets in total. Additionally, there are $\Theta(m)$ atomic clause gadgets. 
For each atomic gadget, the minimum cost of a nucleus signature is $\Theta(1)$, except for  atomic variable gadgets, for which it is $\Theta(\Omega)=\Theta(m)$. 
Hence, the contribution of the composite crossing-wire   gadgets to $b$ is $\Theta(m^2\times m^2\times m)=\Theta(m^5)$. This dominates all remaining contributions. Therefore, $b=\Theta(m^5)$. 
On the other hand, $\delta=\Theta(\Omega)=\Theta(m)$, due to the atomic variable gadget. 
Hence, setting $\Delta=\Theta(m^5)$ is sufficient to guarantee~\eqref{plank}. 

Accordingly, since the dimensions of $G$ are
$O(m^2\Delta)\times O(m^2\Delta)$, all outer
boxes of the atomic gadgets are contained in a $K\times K$ square for
some $K=\Theta(m^7)$.

Finally, the relay cost 
 $R=\Theta(\Delta t)$, where $t=\Theta(m^4)$ is the total number of atomic gadgets. Hence, 
$B=b+R=
\Theta(m^9)$. 
\finito

Using H{\aa}stad's Inapproximability Theorem (Theorem~\ref{hastadinapxthm}) together with the Gap Lemma, we show in Theorem~\ref{mainthMWJ}  that
Minimum-Weight Join is hard to approximate within an additive error
of $cK^{1/7}$ for some constant $c>0$.

\begin{definition}[Gap-3SAT]\label{gap3sat}
For a constant $0< \rho < 1$, the problem
Gap-$3$SAT$[1,\rho]$ is the following promise problem.

\begin{itemize}
\item \emph{Input:} A $3$-CNF formula $\phi$ with $m$ clauses.
\item \emph{YES instance:} $\phi$ is satisfiable.
\item \emph{NO instance:} Every assignment satisfies at most $\rho m$ clauses.
\end{itemize}
That is, given the input formula, the goal is to distinguish between the YES and NO instances. No guarantee is made for instances that satisfy neither condition.
\end{definition}

\begin{theorem}[H{\aa}stad's Inapproximability Theorem~\cite{Has01}]\label{hastadinapxthm}
For every constant $0< \varepsilon<1/8$, 
 Gap-$3$SAT $[1,7/8+\varepsilon]$
is NP-hard.
\end{theorem}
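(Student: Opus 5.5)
We do not reprove this result; we invoke it from~\cite{Has01}, where it appears as the hardness of Max-E3-SAT with perfect completeness. For completeness, we outline the route by which one would establish it. We work with the PCP/Label Cover paradigm and design a $3$-query test whose acceptance predicate is a disjunction of three literals.

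\textbf{Step 1: Label Cover.} We start from the PCP theorem and amplify it with Raz's parallel repetition theorem. This gives, for every $\eta>0$, NP-hardness of distinguishing bipartite Label Cover instances (with alphabet sizes depending only on $\eta$) that are perfectly satisfiable from those in which every labeling satisfies at most an $\eta$ fraction of the edges.

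\textbf{Step 2: Long-code proof.} Each vertex of the Label Cover instance is encoded by a folded long code, i.e.\ a table of a Boolean function on $\{\pm1\}^{[L]}$. Folding over true lets us assume the tables are odd functions. The verifier picks a random edge $(u,v)$ with projection $\pi$, random $f$ on the $u$-side, random $g$ on the $v$-side, and a noise-perturbed $h$ correlated with $f$ and $g$ so that honest long codes always satisfy the clause $A_u(f)\vee A_v(g)\vee A_v(h)$. Håstad's test for E3-SAT does exactly this, choosing $h$ according to whether $f$ is true or false on a coordinate, plus $\mu$-noise. This gives perfect completeness. Because we fold and use random negations, every query pattern is an exactly-three-literal clause over the proof bits.

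\textbf{Step 3: soundness.} This is the main obstacle. One must show that if the acceptance probability exceeds $7/8+\varepsilon$, then the Label Cover instance has value greater than $\eta$. Arithmetizing the acceptance probability, it equals $7/8$ plus a sum of low-order and cubic Fourier terms. Folding and the independence of $f$ kill the low-order terms. The cubic term is bounded by
\[
\mathbb{E}\sum_{\beta}\hat A_v(\beta)^2\,|\hat A_u(\pi_2(\beta))|\,(1-\mu)^{|\beta|},
\]
up to the special handling Håstad uses for the non-linear $h$-dependence; this is the delicate part of his argument. A large value of this quantity then yields the randomized decoding: assign to $v$ a label from $\beta$ with probability $\hat A_v(\beta)^2$, and to $u$ a label from $\alpha$ with probability $\hat A_u(\alpha)^2$. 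This satisfies an $\Omega(\varepsilon^{O(1)}\mu)$ fraction of edges in expectation. Choosing $\mu$ and then $\eta$ small enough in terms of $\varepsilon$ gives the contradiction.

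\textbf{Step 4: output.} Finally, we write out the resulting weighted collection of $3$-clauses. The weights are rational with polynomially bounded denominators, so we replicate clauses to obtain an unweighted $3$-CNF formula with $m$ clauses. Completeness yields a satisfiable formula, while soundness yields that every assignment satisfies at most $(7/8+\varepsilon)m$ clauses. This is exactly NP-hardness of Gap-$3$SAT$[1,7/8+\varepsilon]$.

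A simpler alternative, which we do not use, is to reduce from Håstad's $3$LIN hardness via the standard four-clause gadget. That route gives the same $7/8$ threshold but loses perfect completeness, which the YES case of our Gap Lemma application requires.
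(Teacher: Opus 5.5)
Your proposal matches the paper: the paper does not prove this statement either, but invokes it directly from~\cite{Has01} as H{\aa}stad's hardness of Max-E3-SAT with perfect completeness, so the citation alone suffices. Your optional outline is not accurate as a sketch of H{\aa}stad's proof, however. In Step~3, folding does not kill the pair term $\mathbb{E}[A_v(g)A_v(h)]$, because $g$ and $h$ are correlated (the term is nonzero even for a correct long code). The bound you display is the one from the $3$LIN analysis, and H{\aa}stad's E3-SAT soundness argument needs further ingredients, such as a randomly chosen noise rate and control of $|\pi(\beta)|$ relative to $|\beta|$.
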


\HAMWJ*

{\em Proof.} 
Fix any constant  $0< \varepsilon<1/8$.
Assume, for the sake of contradiction, that there is a polynomial-time algorithm $\mathcal A$ for the Minimum-Weight Join problem that is guaranteed to return a join of weight at most
$W^*+cK^{1/7}$.
The constant $c$ will be chosen later.

Using $\mathcal A$, we construct the following polynomial-time algorithm $\mathcal B$ for Gap-$3$SAT$[1,7/8+\varepsilon]$. Given a $3$-CNF formula $\phi$ with $m$ clauses: 
\begin{enumerate}
\item Construct the closed composite gadget $G$ from $\phi$, as described above, and compute its baseline cost $B$.
\item Let $K=\Theta(m^7)$ be an integer  such that the outer boxes of all atomic gadgets in $G$ are contained in a $K\times K$ square. If the domain is a torus, set its side length to $K$. 
\item Run $\mathcal A$ on the defects of $G$ to obtain a join $J$.
\item If the weight of $J$ is at most $B+cK^{1/7}$, return YES; otherwise return NO.
\end{enumerate} 
Let  $U^*$ be the 
 minimum number of unsatisfied clauses of $\phi$.
 
 If $\phi$ is satisfiable, then $U^*=0$. Hence, by the Gap Lemma, $W^*=B$.   
Therefore, $\mathcal A$ returns a join of weight at most 
$B+cK^{1/7}$,
and so $\mathcal B$ outputs YES.

Suppose now that every assignment satisfies at most $(7/8+\varepsilon)m$ clauses.
Then $U^*\geq (1/8-\varepsilon)m$.  By the Gap Lemma, $W^*=B+U^*$,  and therefore 
 $W^*\ge B+(1/8-\varepsilon)m$. 
 Consequently, every join returned by $\mathcal A$ has weight at least $W^*$, and hence greater than $B+cK^{1/7}$,  provided that 
\begin{equation}\label{conde}
cK^{1/7}
<
(1/8-\varepsilon)m.
\end{equation}
Under this condition, $\mathcal B$  returns NO. 

Since $K=\Theta(m^7)$, we have $K^{1/7}=\Theta(m)$.  Therefore, choosing $c>0$ sufficiently small guarantees~\eqref{conde}.

Thus $\mathcal B$ solves Gap-$3$SAT$[1,7/8+\varepsilon]$ in polynomial time, contradicting H{\aa}stad's Inapproximability Theorem unless $P=NP$. 
\finito

\begin{remark}
\emph{
Our hardness result is formulated in terms of additive error rather than the
more common multiplicative approximation ratio. Additive error is particularly
natural here when approximation guarantees are expressed as functions of the
lattice side length. Indeed, the same decoding instance can be embedded into
arbitrarily larger lattices without changing the optimization problem or its
optimum value. Such padding can artificially strengthen a lattice-side-length-dependent
multiplicative inapproximability bound, whereas it weakens a lattice-side-length-dependent
additive bound.}
\end{remark}

\section{Consequences for topological-code decoding}\label{ccodes}

This section derives inapproximability results for Minimum-Weight Decoding of the toric and planar surface codes, and for Separate Minimum-Weight Decoding of the $4.8.8$ color code on the torus 
 from the corresponding inapproximability result for the Minimum-Weight Join problem established in the previous section.

\subsection{Toric code}
The Minimum-Weight Decoding problem for the $[[2L^2,2,L]]$ toric code is essentially the Minimum-Weight Join problem on the torus of side length $L$. The only difference is that, while errors in the toric code may contain cycles, we require joins to be acyclic for technical convenience. 
This is not restrictive, since cycles can be repeatedly removed from
an error in polynomial time.  Doing so preserves its syndrome, does not increase its weight, and yields a join.  This also shows that the minimum error weight is equal to the minimum join weight. 
Hence, any approximation algorithm for the Minimum-Weight Decoding problem yields an approximation algorithm for the Minimum-Weight Join problem. Since the toric code has $N=2L^2$ qubits, Theorem~\ref{mainthMWJ} immediately implies the following.

\HAToric*

\subsection{Planar surface code}\label{planarS}

In the $[[2L^2-2L+1,1,L]]$ planar surface code, acyclic errors are captured by \emph{relative joins}, which are defined analogously to joins except that paths are allowed to connect a defect to a boundary vertex. In addition, the sets of primal and dual defects need not have even cardinality.  More precisely, the \emph{domain} is now the lattice of the planar surface code and its dual, as shown in Figure~\ref{tpe}(b). The \emph{boundary vertices} are shown as squares in the figure. Defects and paths now lie in this domain.

A \emph{relative primal join} with respect to a prescribed set of primal defects is a collection of edge-disjoint primal paths, each connecting either two defects or a defect and a primal boundary vertex, whose union forms an acyclic graph and induces odd degree at each defect. A \emph{relative dual join} is defined analogously with respect to a prescribed set of dual defects. A \emph{relative join}, with respect to both the primal and dual defects, consists of a relative primal join together with a relative dual join. The \emph{weight} of a relative join is the total number of edges in its primal and dual joins, counting each primal--dual edge pair only once.

To show that this problem is hard to approximate, we adjust the proof of Theorem~\ref{mainthMWJ} by embedding the closed composite gadget $G$ constructed in Section~\ref{mwjha} into the lattice of the planar surface code while ensuring that all defects of $G$ are sufficiently far from the boundary vertices of the lattice. 
This guarantees that neither a minimum-weight relative join nor an approximately minimum-weight relative join uses paths connecting defects to boundary vertices. Consequently, such relative joins are joins  in the square-lattice domain, and hence the Gap Lemma applies.

\HAPlanar*
{\em Proof.} 
Fix any constant $0<\varepsilon<1/8$. Assume, for the sake of contradiction, that there exists a polynomial-time algorithm $\mathcal A$ that is guaranteed to return  an error of weight at most
$W^*+cN^{1/18}$.
The constant $c$ will be chosen later.

Using $\mathcal A$, we construct the following polynomial-time algorithm $\mathcal B$ for Gap-$3$SAT$[1,7/8+\varepsilon]$. Given a $3$-CNF formula $\phi$ with $m$ clauses: 
\begin{enumerate}
\item Construct the closed composite gadget $G$ from $\phi$, as described in Section~\ref{mwjha}, and compute its baseline cost $B$. 

\item Embed $G$ in the lattice of the $[[2L^2-2L+1,1,L]]$ planar surface code and its dual, where $L$ is chosen sufficiently large so that every primal and dual defect of $G$ is at distance at least $B+2m$ from every boundary vertex.  Thus, the number of qubits 
$N=2L^2-2L+1$.
\item Run $\mathcal A$ on the defects of $G$ to obtain an error $E$.
\item Remove all cycles from $E$ and let $J$ be the resulting relative join for the defects in $G$. 
\item If the weight of $J$ is at most 
$B+cN^{1/18}$, return YES; otherwise return NO.
\end{enumerate} 
First note that removing cycles from $E$ does not increase its weight. Since $E$ has weight at most $W^*+cN^{1/18}$, the resulting relative join $J$ also has weight at most $W^*+cN^{1/18}$. 
Since cycles can be removed from any error without increasing its weight,
$W^*$ is also the minimum weight of a \emph{relative join} for the defects in $G$.

Let  $\widehat{W}$ denote the minimum weight of a \emph{join} for the defects in $G$, and 
let $U^*$ denote the minimum number of unsatisfied clauses of $\phi$.

We claim that
$
W^*=\widehat{W}.
$ 
Indeed, every join is also a relative join, so $W^*\le\widehat{W}$. Conversely, by construction, every defect of $G$ is at distance at least $B+2m$ from the boundary. Hence any relative join containing a path from a defect to a boundary vertex has weight at least $B+2m$. Since, by the Gap Lemma,
$
\widehat{W}=B+U^*< B+m,
$
no minimum-weight relative join can contain such a path. Therefore every minimum-weight relative join is in fact a join, proving $W^*=\widehat{W}$.

The same argument shows that every relative join $J$ whose weight is guaranteed to be at most $W^*+cN^{1/18}$ is in fact a join, provided that  
\begin{equation}\label{condep}
cN^{1/18}< m. 
\end{equation}
Indeed,
if $J$ has  a 
path from a defect to a boundary vertex,  then its weight is at least $$B+2m > \widehat{W} + m = 
W^* + m > W^*+cN^{1/18},$$ 
contradicting the guarantee on $J$. Consequently, assuming~\eqref{condep}, the output of $\mathcal A$ results in a join.

The remainder of the argument is similar  to that of Theorem~\ref{mainthMWJ}.

 If $\phi$ is satisfiable, then, by the Gap Lemma, $\widehat{W}=B$.   
Therefore,  the output of $\mathcal A$ results in a join of weight at most 
$B+cN^{1/18}$,
and so $\mathcal B$ outputs YES.

If every assignment satisfies at most $(7/8+\varepsilon)m$ clauses, then 
$U^*\geq (1/8-\varepsilon)m$.  By the Gap Lemma, $\widehat{W}=B+U^*$,  and therefore 
 $W^* = \widehat{W} \ge B+(1/8-\varepsilon)m$. 
 The output of $\mathcal A$ results in a join of 
 weight at least $W^*$, and hence greater than $B+cN^{1/18}$, provided that  
\begin{equation}\label{condeps}
cN^{1/18}< (1/8-\varepsilon)m. 
\end{equation}
Therefore, $\mathcal B$ outputs NO.

To guarantee the embedding in Step~2, it suffices to choose
$L>K+2(B+2m)$, where $K$ is 
the minimum side length of a square containing the outer boxes of all its atomic gadgets in $G$.  
By the Gap Lemma, $K=\Theta(m^7)$ and $B=\Theta(m^9)$, so $L=\Theta(m^9)$ suffices. Consequently, $N=\Theta(L^2)=\Theta(m^{18})$,
and therefore choosing $c>0$ sufficiently small guarantees~\eqref{condep} and 
\eqref{condeps}.

Thus $\mathcal B$ solves Gap-$3$SAT$[1,7/8+\varepsilon]$ in polynomial time, contradicting H{\aa}stad's Inapproximability Theorem unless $P=NP$. 
\finito 

\paragraph{Discussion and open direction.} 
Instead of adjusting the proof of Theorem~\ref{mainthMWJ}, we could have used the theorem as a black box in the square-lattice domain and reduced the Minimum-Weight Join problem to the Minimum-Weight Decoding problem for the planar surface code by embedding the given defects into the lattice of the code and its dual. As above, the defects must be placed sufficiently far from the boundary vertices. However, using only the bound $O(K^2)$ on the minimum join weight for defects contained in a $K\times K$ square yields an inapproximability bound of $\Theta(N^{1/28})$, which is weaker than the $\Theta(N^{1/18})$ bound established in Theorem~\ref{planarha}. By instead adjusting the proof, we were able to exploit the stronger bound $B=\Theta(m^9)$ established in the Gap Lemma.

We believe, however, that the reduction from the Minimum-Weight Join problem to the Minimum-Weight Decoding problem for the planar surface code can be refined to recover the stronger $\Theta(N^{1/14})$ inapproximability bound of Theorem~\ref{toricha}. This would follow from the following conjecture.

\begin{conjecture}[Relative joins versus joins]\label{conj1}
There exists a sufficiently small constant $c>0$ such that the following holds. Consider any even-cardinality  sets of 
 primal and dual defects in the lattice of the $[[2L^2-2L+1,1,L]]$ planar surface code and its dual lattice, such that each defect is at distance at least $(1/2-c)L$ from every boundary vertex. Let $W^*$ denote the minimum weight of a relative join for these defects. Then, for  sufficiently large $L$, every relative join of weight at most $W^*+cL$ is in fact a join; that is, it contains no path connecting a defect to a boundary vertex.
\end{conjecture}

The  difficulty in establishing this conjecture lies in the coupling between the primal and dual paths.

\subsection{4.8.8 2D color code}\label{color}

In this section, we establish hardness of approximation for Separate Minimum-Weight Decoding of the $4.8.8$ color code on the torus. Our main observation is that, in the absence of syndrome defects associated with the square faces, decoding $X$ errors in the color code is equivalent to decoding an associated toric code over the depolarizing channel. This correspondence transfers the inapproximability result of Theorem~\ref{toricha} directly to the color code.

Consider  the
$[[4L^2,4,2L]]$ $4.8.8$ color code defined on the torus~\cite{BombinMartinDelgado2006},
whose lattice is shown in Figure~\ref{colorf}(a) for $L=4$. The CSS code associated with the lattice satisfies $C_X=C_Z$. 
Consider the Separate Minimum-Weight decoding problem, in which
$X$- and $Z$-errors occur independently. Accordingly, it suffices to consider
decoding $X$-errors from the given $Z$-syndrome.

{
\renewcommand{\galscale}{0.11}
\begin{figure}[!htbp]\centering
\sfg{color.pdf}\hfill
\sfg{toric-in-color.pdf}\hfill
\sfg{toric-in-color-corresp.pdf}
\caption{
 (a) \textbf{Lattice of the 
$[[4L^2, 4, 2L]]$  $4.8.8$ color code defined on the torus for $L = 4$}. 
The upper and lower boundaries are identified, as are the left and right boundaries.
The faces are colored red, green, and blue. Unlike the standard coloring convention in the literature, we interchange the red and green colors to match the primal--dual coloring convention adopted for the toric code. The faces correspond to stabilizer checks and the vertices to qubits. 
 (b) \textbf{Shrunk lattices}. The \emph{shrunk blue lattice} has a  blue vertex for each  blue face, with two vertices adjacent whenever they are incident to the same green face. 
 The \emph{shrunk red lattice} is defined analogously on the red faces 
  and is the dual of the shrunk blue lattice.  Vertices and edges identified with those on the opposite side  are shown as unfilled circles and dashed lines, respectively.  
(c) \textbf{Correspondence between color-code and toric-code errors in the absence of green defects.} The blue and red defects are highlighted by black circles. The color-code errors are shown as yellow triangles, and the corresponding toric-code errors are shown as thick blue and red edges.
}
\label{colorf}
\end{figure}
}

 In this decoding problem, the \emph{syndrome defects} are the faces corresponding to unsatisfied checks, and the \emph{errors} are sets of lattice vertices. 
 Given the syndrome defects, we seek a minimum-weight error consistent with the syndrome; that is, a minimum-cardinality set of vertices whose incident checks are unsatisfied exactly at the syndrome defects.

Consider the \emph{shrunk blue lattice} and the \emph{shrunk red lattice} associated with the color code~\cite{BombinMartinDelgado2006}, as defined in Figure~\ref{colorf}(b). These two lattices are dual to each other and define
the $[[2L^2,2,L]]$ \emph{toric code}.

 We establish hardness of approximation for the special case in which all syndrome defects are red or blue; that is, no syndrome defect is associated with a green face. 
 We show that, under this restriction, 
 decoding the color code is equivalent to decoding the associated toric code over the depolarizing channel.  In particular,  
the following lemma establishes a weight-scaling correspondence between the errors in the two decoding problems.

 \begin{lemma}\label{colorcorresp} \textbf{\emph{(Error correspondence between the color code and the toric code  in the absence of green defects)}}   
 Consider a set $S$ of syndrome defects in the  color code consisting only of red and blue faces. 
 Let $T$ be the corresponding set of primal and dual defects in the shrunk blue lattice and its dual, viewed as the syndrome of the associated toric code.

Let $\mathcal E$ denote the set of color-code errors $E$ whose syndrome is $S$ such that, for every green face, $E$ does not contain all four vertices of that face.
Define an equivalence relation on $\mathcal E$ by declaring two errors equivalent if their XOR is the XOR of the vertex sets of a collection of green faces. Errors in the same equivalence class have the same weight.

 Then there is a one-to-one correspondence between the equivalence classes in $\mathcal E$ and toric-code errors whose syndrome is $T$. Under this correspondence, the weight of each error in an equivalence class is twice the weight of the corresponding toric-code error.
 \end{lemma}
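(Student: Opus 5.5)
The plan is to work locally, one green face at a time. Each green face of the $4.8.8$ lattice is one qubit of the associated toric code. The first step is to record the local geometry. Every vertex of the color lattice lies on exactly three faces: one green square, one red octagon, and one blue octagon. The four edges of a green square $f$ alternate between edges shared with blue octagons and edges shared with red octagons, so two opposite sides of $f$ border the blue faces $b_1,b_2$ and the other two sides border the red faces $r_1,r_2$. By the definition of the shrunk lattices in Figure~\ref{colorf}(b), $f$ gives the primal edge $b_1b_2$ of the shrunk blue lattice, and its dual is the edge $r_1r_2$ of the shrunk red lattice. Every vertex of the color lattice belongs to exactly one green face, so an error $E$ splits uniquely into its restrictions $E_f=E\cap f$ over the green faces $f$.

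Next I would classify the restrictions $E_f$. Since $S$ contains no green face, each green check is satisfied, so $|E_f|$ is even, that is, $0$, $2$ or $4$; the case $|E_f|=4$ is excluded by the definition of $\mathcal E$. Two errors are equivalent exactly when, face by face, one restriction equals the other or its complement in $f$. Within $\mathcal E$ this means a $2$-subset may be swapped for its complementary pair, while $\emptyset$ stays alone because its complement would have size $4$. Hence each class is a product of local classes, and the local classes are $\{\emptyset\}$ together with three pair-classes. For the weight claim, every representative of a class then has weight $2$ times the number of faces with $E_f\neq\emptyset$.

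I would then compute which red and blue checks each local class flips.
\begin{itemize}
\item A pair along a side shared with $b_i$ lies entirely in $b_i$. It meets $r_1$ and $r_2$ once each, so it flips exactly $r_1,r_2$: a red-lattice (dual) edge error. Its complement, along the opposite side, does the same.
\item Symmetrically, the pairs along the sides shared with $r_1$ or $r_2$ flip exactly $b_1,b_2$: a primal edge error.
\item The two diagonal pairs flip all of $b_1,b_2,r_1,r_2$: the primal--dual pair, that is, a $Y$ error on the qubit.
\end{itemize}
This yields a bijection, face by face, between the four local classes and the four possible restrictions of a toric error to the qubit of $f$: nothing, primal edge, dual edge, or both. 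Since syndromes add mod $2$ over green faces, the red and blue syndrome of $E$ equals the toric syndrome of the image error. So classes with syndrome $S$ correspond exactly to toric errors with syndrome $T$. Surjectivity follows by choosing any local representative on each face, and injectivity holds because the local classes are distinct.

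The main obstacle is the bookkeeping of the coloring convention. I must check that a vertex's blue and red octagons are indeed the ones bordering the two sides of $f$ at that vertex, so that the side pairs and the diagonal pairs flip the faces claimed above. I would verify this directly on Figure~\ref{colorf}(c) rather than argue abstractly. Everything else is a routine parity count. The weight statement follows: each nontrivial local class has weight $2$, and the corresponding toric error has weight $1$ on that qubit, because a primal--dual pair counts once.
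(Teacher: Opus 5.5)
Your proposal is correct and takes essentially the same route as the paper: a face-by-face correspondence at each green face, in which the six two-vertex restrictions fall into three complementary classes that map to a primal edge, a dual edge, or a primal--dual pair on that toric qubit. The paper phrases this in the dual triangle formulation and reads the local correspondence off a table, whereas you work with vertex pairs directly and spell out the complement structure of the classes and the check-flip computation, which the paper leaves implicit.
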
  
{\em Proof.}  
First, we review the formulation of color-code decoding in terms of triangles \cite{BombinMartinDelgado2006,Bombin2013}.  
Let $\mathcal T$ denote the  color code lattice, and let $\mathcal T^*$ denote its dual lattice, obtained by placing a vertex at the center of each face of $\mathcal T$ and connecting two vertices if the primal faces share an edge. 
Each vertex of $\mathcal T$ is incident to exactly three faces  and therefore corresponds to a triangle in $\mathcal T^*$ whose vertices correspond to the three  faces. 
Consequently, there is a one-to-one correspondence between errors $E$ whose syndrome is $S$ and collections of such triangles whose vertex-wise XOR is  $S$. This is the standard formulation of color-code decoding.

Since $S$ contains no green defects, every green vertex   of $\mathcal T^*$ must be  incident to an even number of triangles. 
If furthermore $E$ belongs to $\mathcal E$,
then   it does not contain all four vertices of any green face. 
Thus, for each green face,  $E$ either contains no triangle incident to that face or exactly two such triangles. Table~\ref{nt25}   lists all possible local configurations in the latter case together with the corresponding toric-code errors.

\begin{table}[!htbp]
\centering
\begin{minipage}{0.47\textwidth}
\centering
{\setlength{\tabcolsep}{1pt}\renewcommand{\arraystretch}{0.47}
\begin{tabular}
{|
>{\centering\arraybackslash}m{0.45\linewidth}|
>{\centering\arraybackslash}m{0.15\linewidth}
||
>{\arraybackslash}m{0.18\linewidth}
|
>{\centering\arraybackslash}m{0.15\linewidth}
|
}
\hline
 \noalign{\vskip 1pt}
 Color code errors  &Weight~& 
 Toric  code errors &Weight~ 
 \\ \hline
 \noalign{\vskip 1pt}
 \tabfig{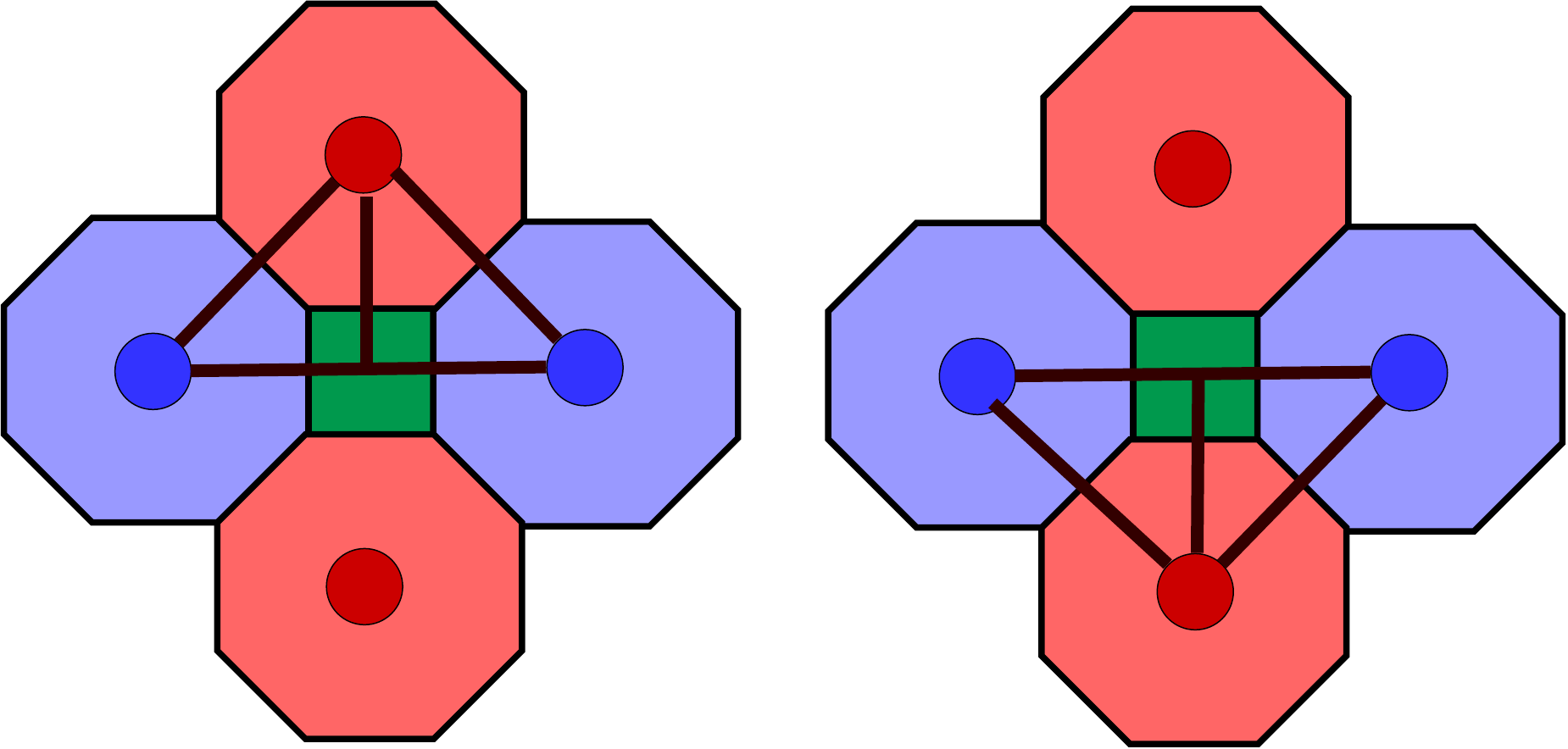}{0.89\linewidth}   &2&
\tabfig{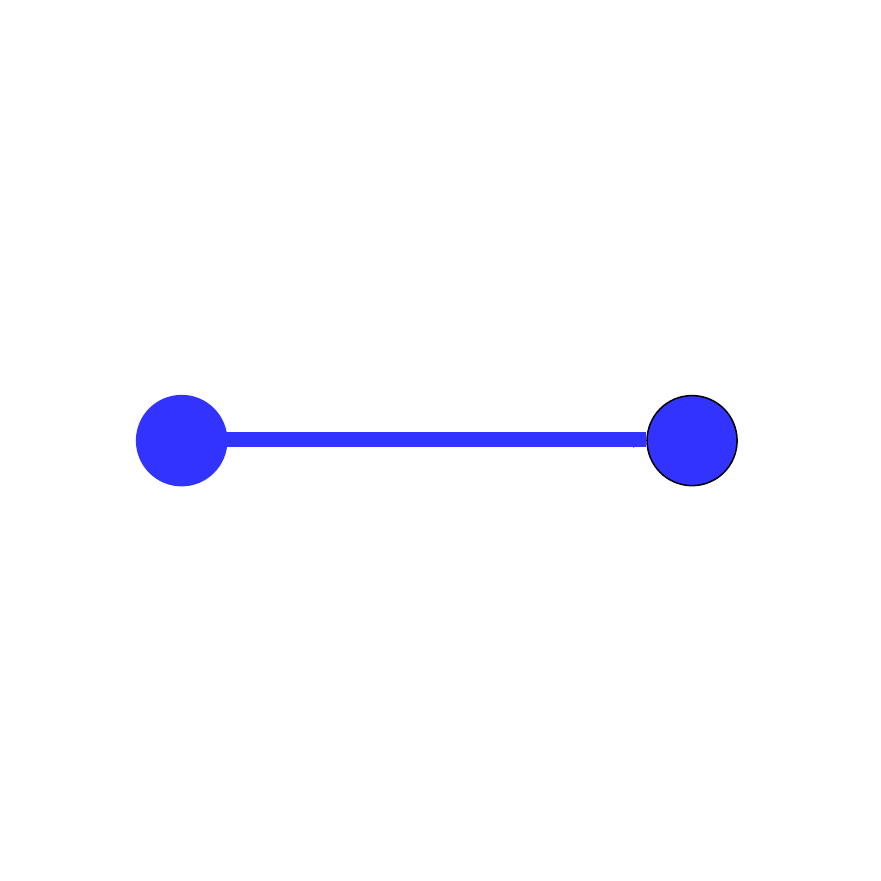}{1\linewidth} &1

 \\ 
 \hline
 
   \noalign{\vskip 1pt}
 \tabfig{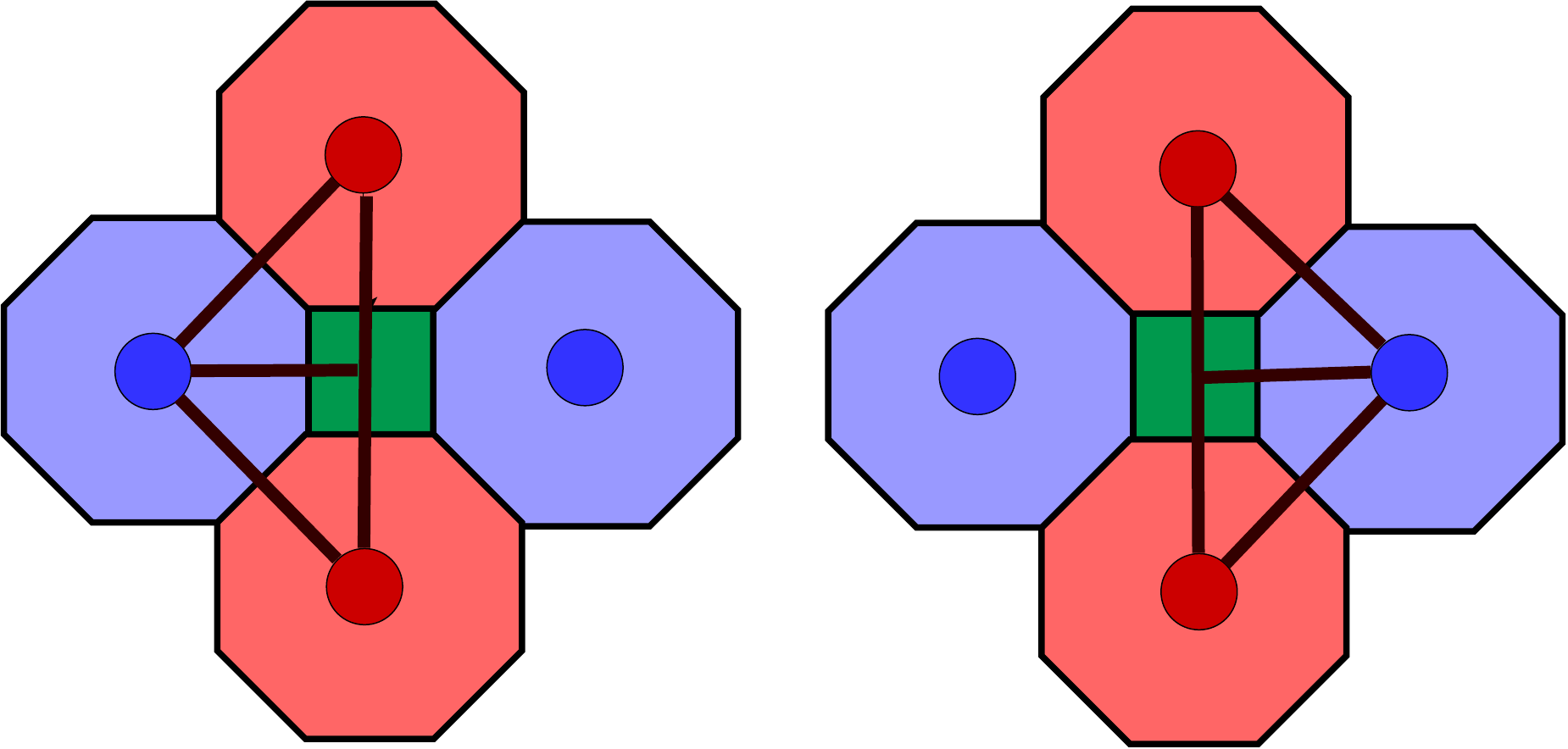}{0.89\linewidth}  &2&
\tabfig{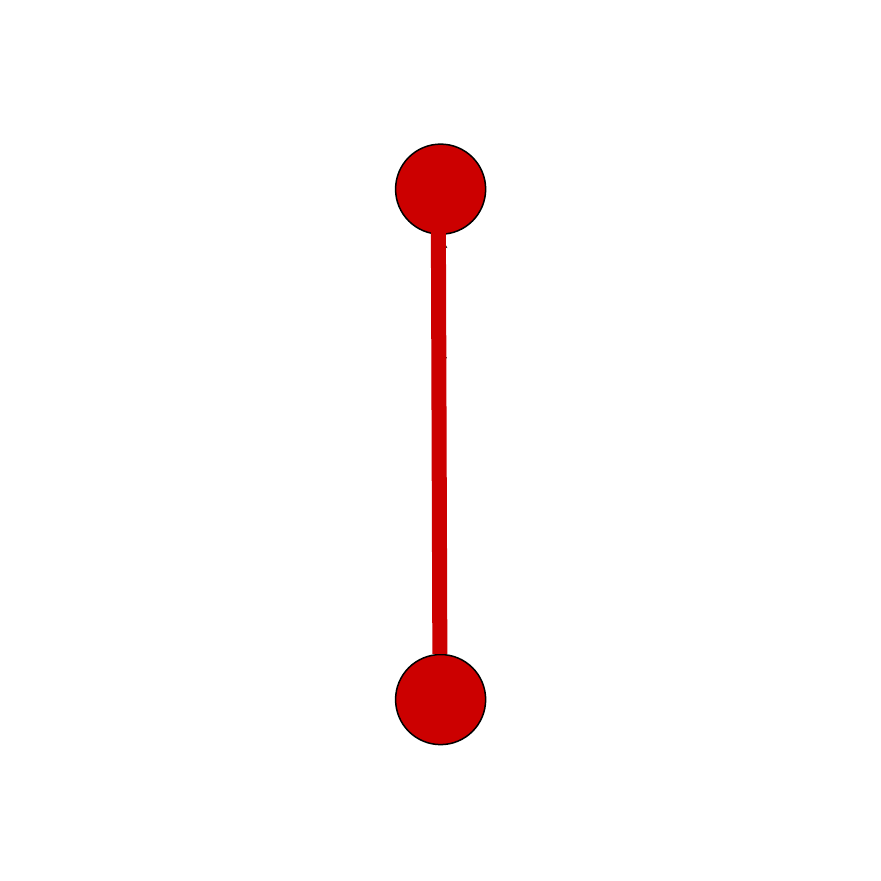}{1\linewidth} &1

 \\ \hline

  \noalign{\vskip 1pt}
 \tabfig{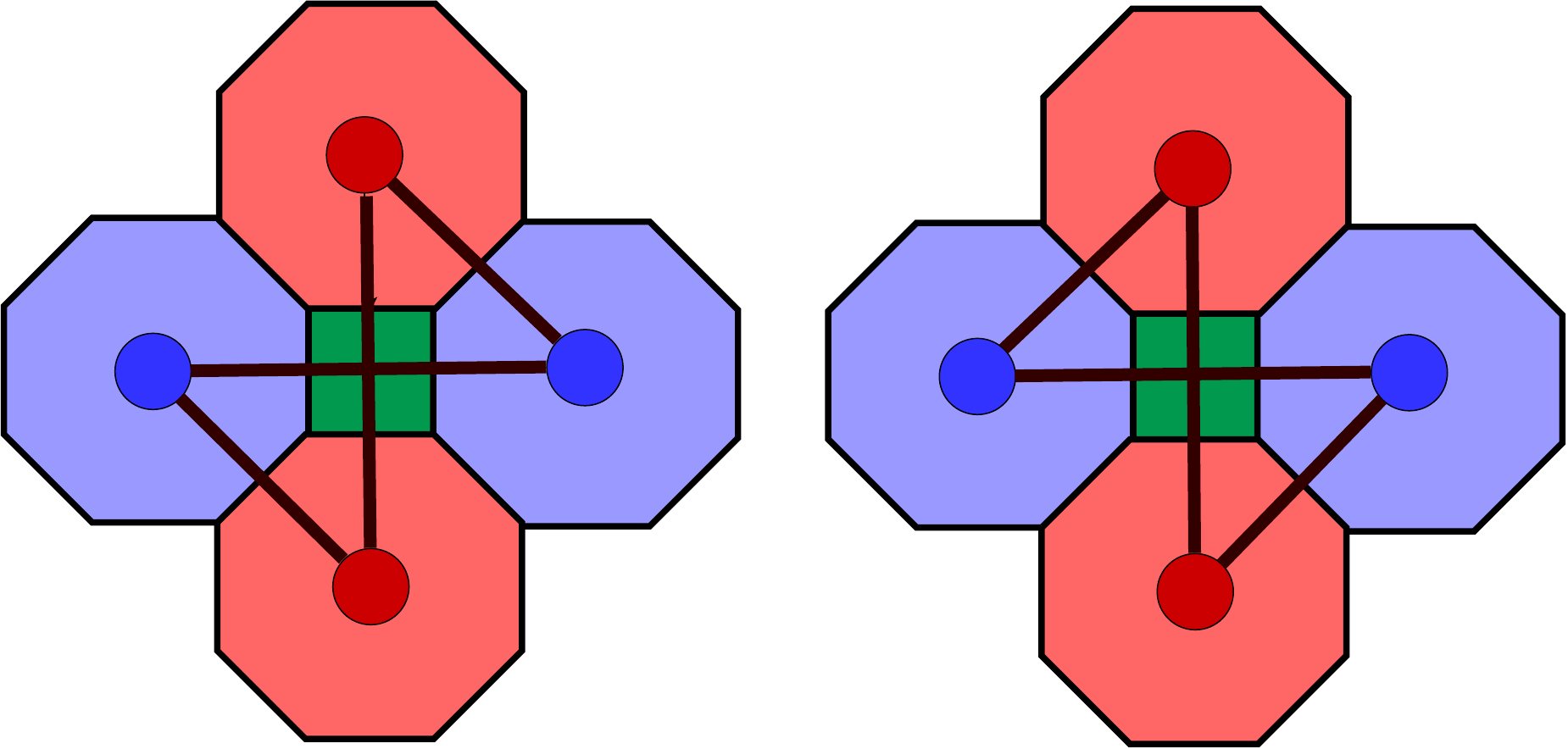}{0.89\linewidth}  &2&
\tabfig{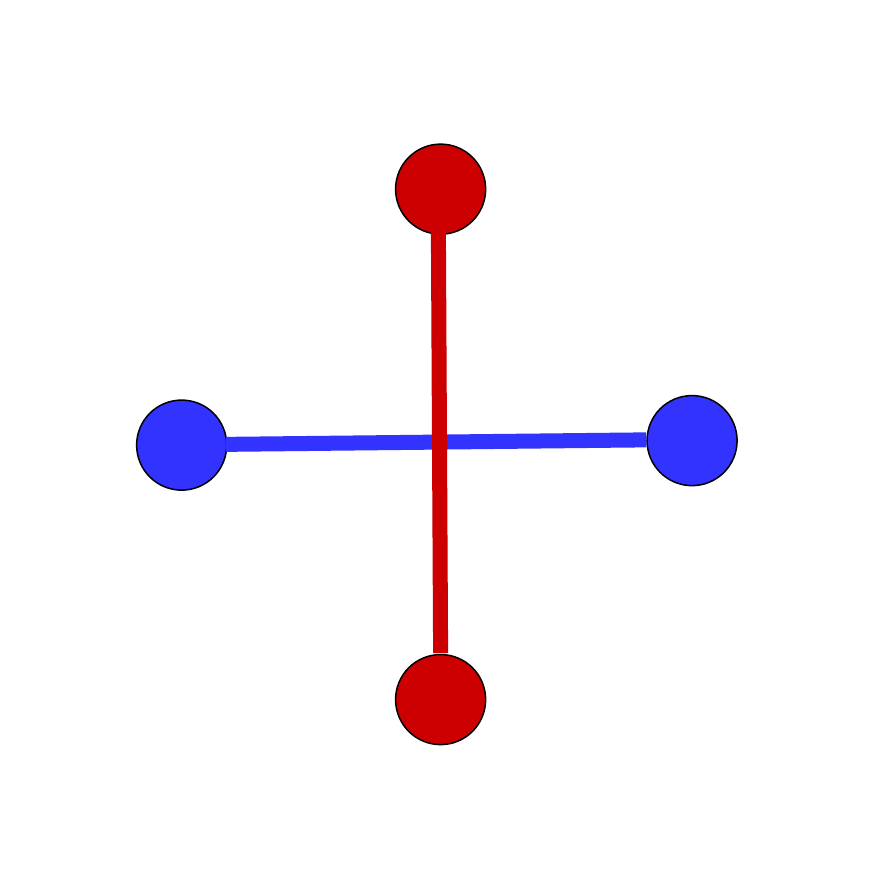}{1\linewidth} & 1

 \\  \hline

\end{tabular}}
\end{minipage}
\hfill
\begin{minipage}{0.47\textwidth}
\centering
{\setlength{\tabcolsep}{1pt}\renewcommand{\arraystretch}{0.47}
\begin{tabular}
{|
>{\centering\arraybackslash}m{0.45\linewidth}|
>{\centering\arraybackslash}m{0.15\linewidth}
||
>{\arraybackslash}m{0.18\linewidth}
|
>{\centering\arraybackslash}m{0.15\linewidth}
|
}
\hline
 \noalign{\vskip 1pt}
 Color code errors  &Weight& Toric  code errors &Weight
 \\ \hline 
 \noalign{\vskip 1pt}
 \tabfig{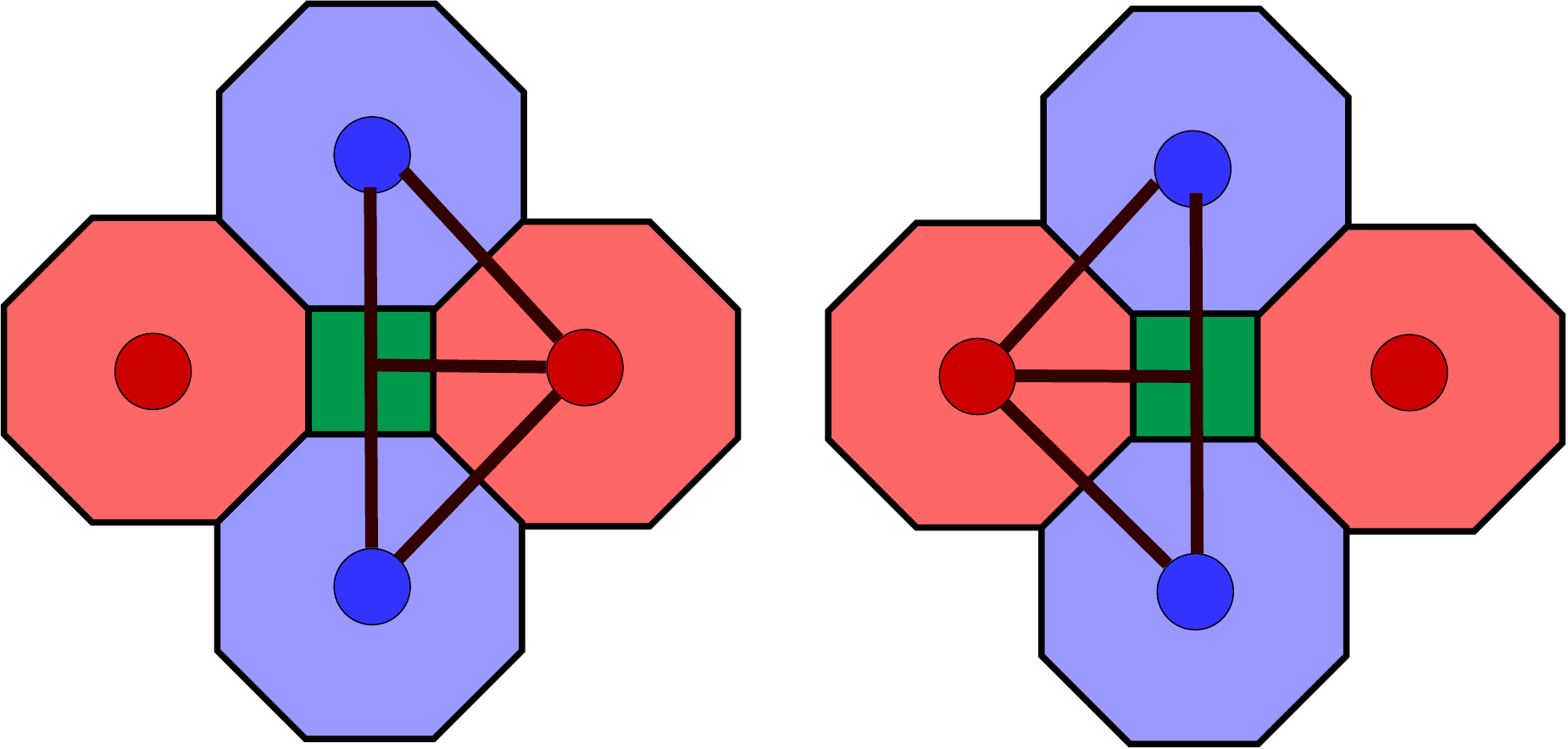}{0.89\linewidth}  &2&
\tabfig{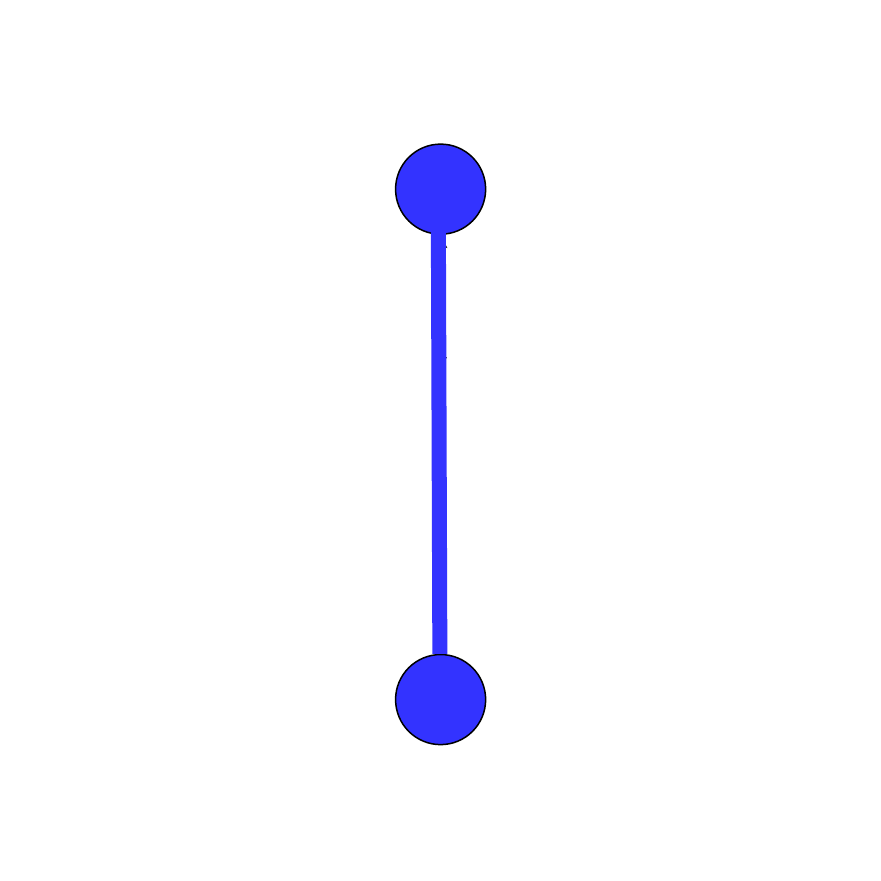}{1\linewidth} & 1

 \\ \hline

  \noalign{\vskip 1pt}
 \tabfig{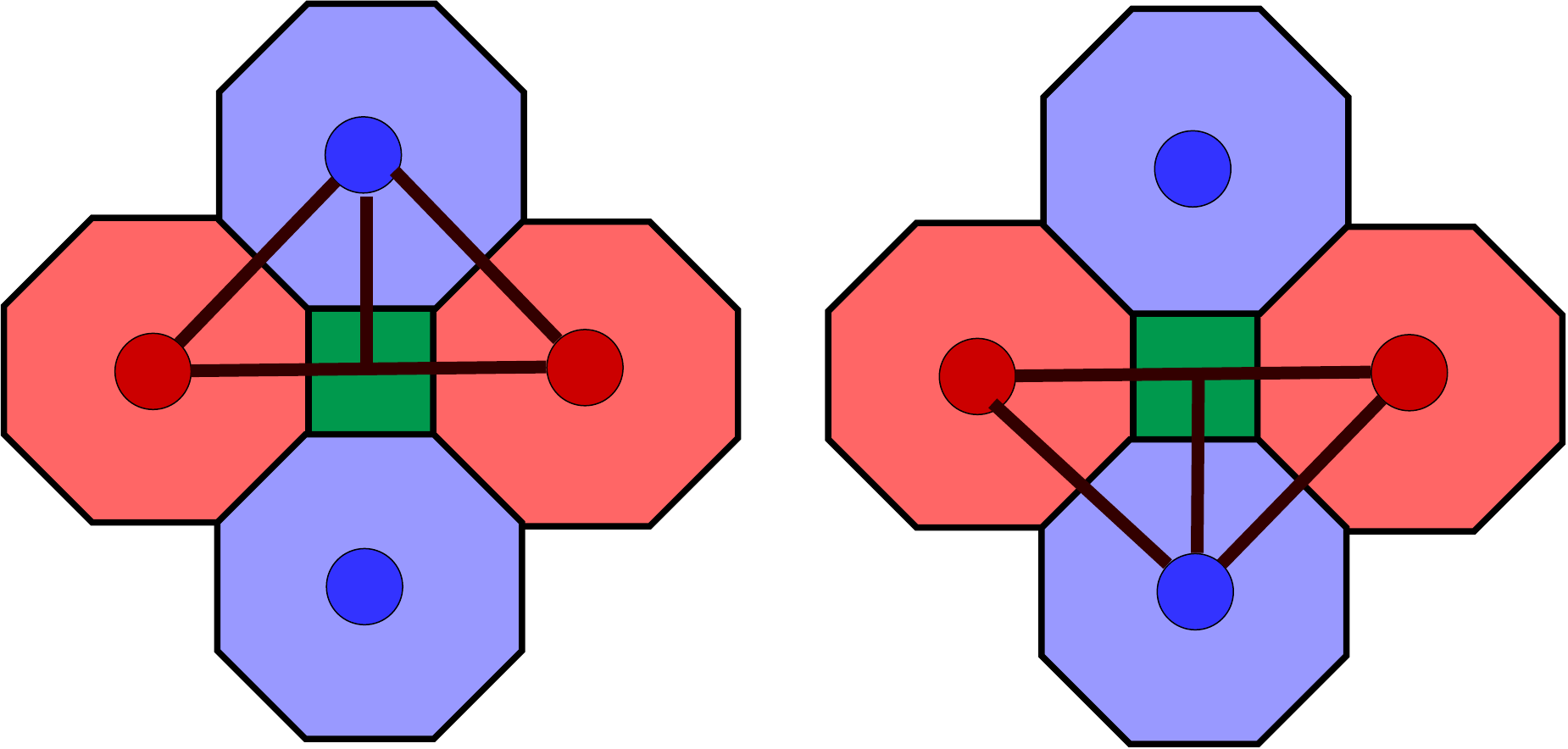}{0.89\linewidth}   &2&
\tabfig{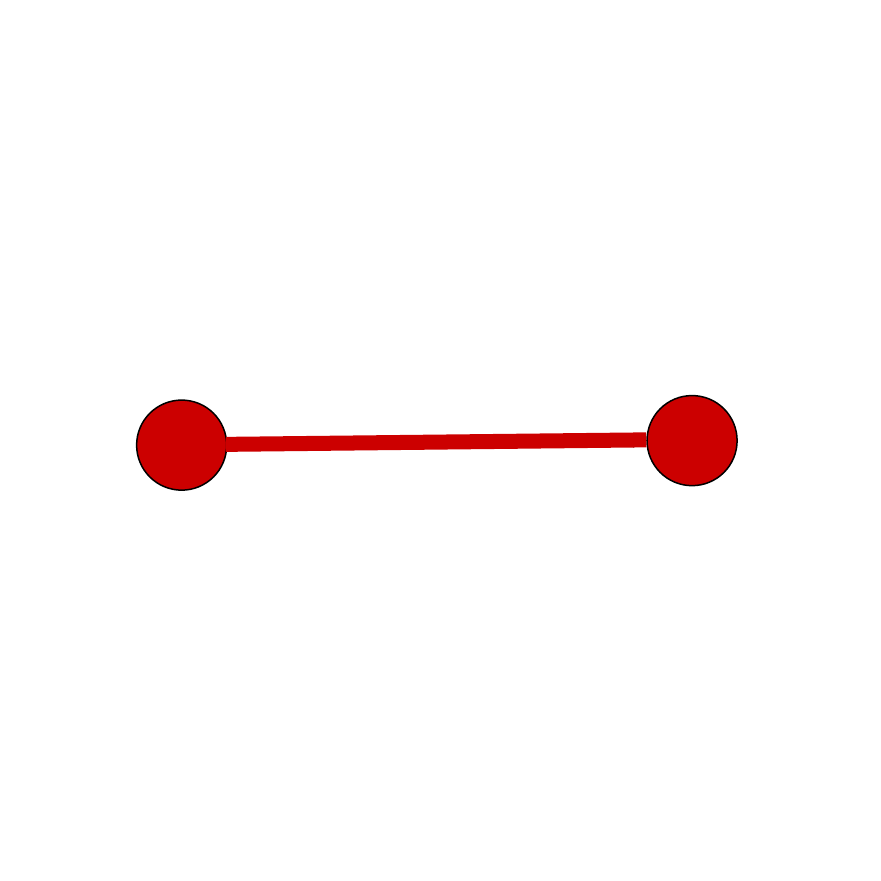}{1\linewidth} & 1

 \\ \hline

  \noalign{\vskip 1pt}
 \tabfig{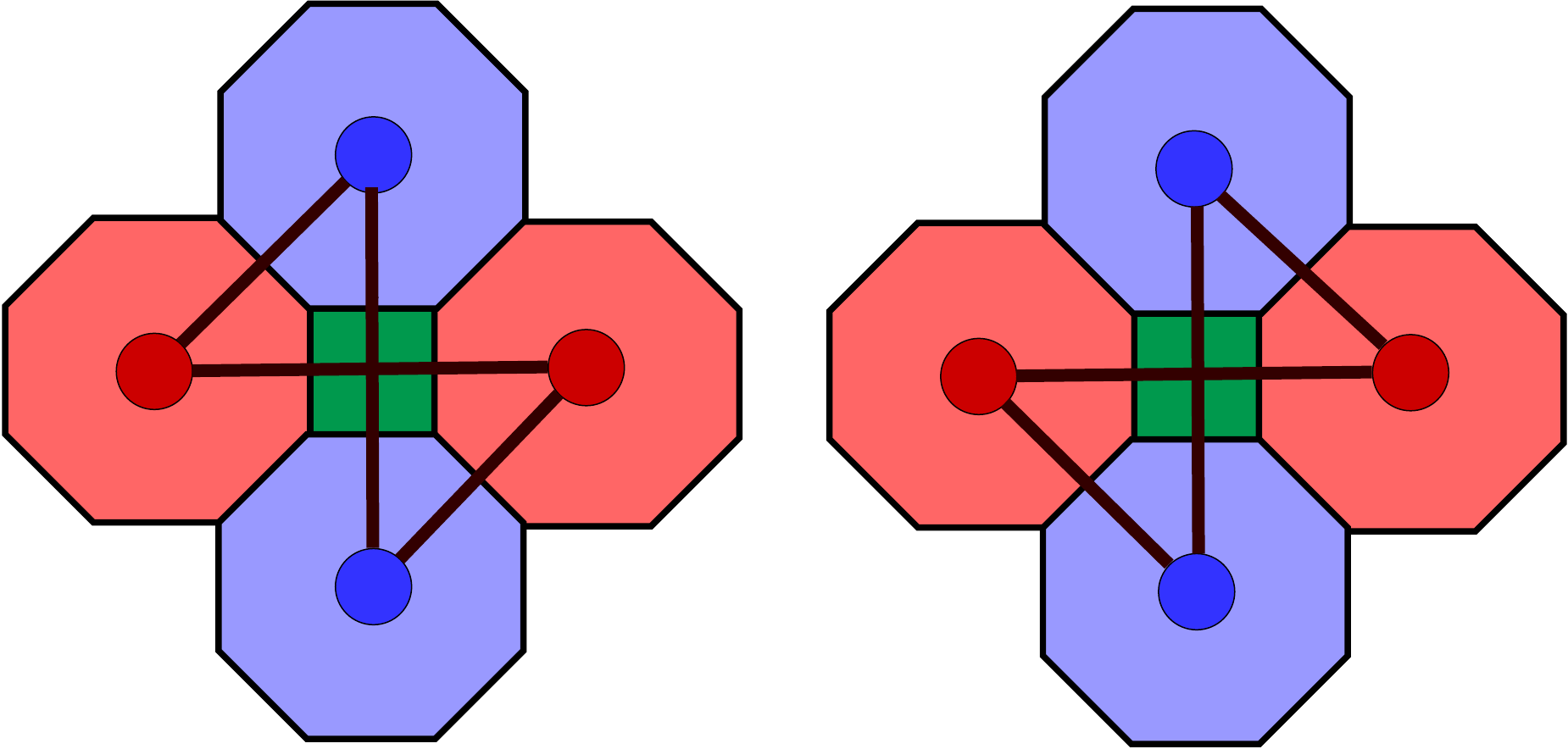}{0.89\linewidth}   &2&
\tabfig{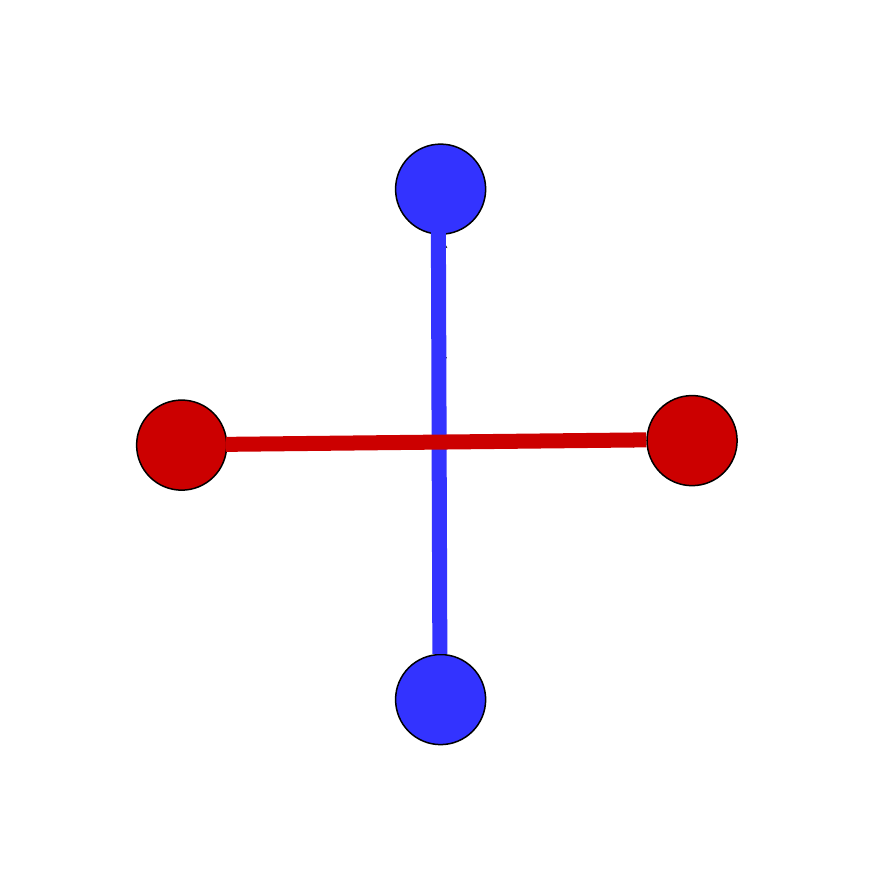}{1\linewidth} & 1

 \\ \hline

\end{tabular}}
\end{minipage}
\caption{\textbf{Local correspondence in the proof of Lemma~\ref{colorcorresp}}}
\label{nt25}
\end{table}
Applying the correspondence independently at each green face yields a one-to-one correspondence between the equivalence classes in $\mathcal E$ and toric-code errors with syndrome $T$. Moreover, as shown in the table, the weight of $E$ is twice the joint weight of the corresponding toric-code error, where a primal edge and its dual are counted only once.  
See Figure~\ref{colorf}(c) for an illustration of this correspondence.
\finito 

Thus, decoding the toric code reduces to decoding the color code by mapping the toric-code defects to syndrome defects associated with the red and blue faces of the color code, with no syndrome defects associated with green faces.  
Accordingly, Theorem~\ref{toricha} immediately implies the following inapproximability result for Separate Minimum-Weight Decoding of the color code.

\HAColor*
{\em Proof.}
Apply Theorem~\ref{toricha} and Lemma~\ref{colorcorresp}. Note that we may assume the approximation algorithm for the color code returns an error that does not contain all four vertices of any green face. Indeed, whenever this is not the case, we repeatedly remove the vertex set of such a face. This preserves the syndrome, strictly decreases the weight of the error, and therefore preserves the approximation guarantee. 
Since the correspondence multiplies error weights by $2$ and the
block lengths of the two codes differ only by a constant factor, the
$\Omega(N^{1/14})$ inapproximability gap is preserved.
\finito

\section{Extensions and open questions}\label{conc}

Our inapproximability results also extend to the rotated surface
code~\cite{Bombin2007,rotated2}, surface codes with
holes~\cite{Dennis2002,Fowler2012SurfaceCodes}, and the triangular $4.8.8$
color code~\cite{BombinMartinDelgado2006}, with an
$\Omega(N^{1/18})$ inapproximability gap in each case. 
We omit the details; establishing the extension to the
triangular color code requires a more technical treatment of the
boundaries.

The weight lower bound in the Localization Lemma is likely not tight. 
Its proof controls nonlocal interactions by considering violations case by case and working with reduced primal components in which all dual paths are omitted and all primal paths connecting defects within the same nucleus are omitted.  
 The weight of these omitted
paths must then be compensated for by the separation parameter
$\Delta$, leading to the relatively large separation required by the lemma.
A stronger version of the Localization Lemma would lead to an improved
inapproximability gap. We conjecture that the following strengthening holds.

If the nucleus of an atomic gadget $g$ has dimensions $A\times B$, we define
the \emph{span of the nucleus} of $g$ to be $\max\{A,B\}$. 

\begin{conjecture}[Stronger localization]
\label{conj2}
There exists a constant $c>0$ such that the following holds.
For any closed composite gadget $G$, if
$\Delta>cS$, where $S$ is the 
maximum span of a nucleus among
 the atomic gadgets
in $G$, then every minimum-weight join for the defects in $G$ is local.
\end{conjecture}

Establishing this conjecture would allow us to set the separation parameter
$\Delta=\Theta(m)$ in the hardness-of-approximation reduction, instead of
$\Theta(m^5)$. This would improve the inapproximability gap for the
Minimum-Weight Join problem from $\Omega(K^{1/7})$ to $\Omega(K^{1/3})$.
Accordingly, the inapproximability gap for the toric code and the $4.8.8$
color code would improve to $\Omega(N^{1/6})$. The same gap would extend
to the planar surface code under Conjecture~\ref{conj1}.

More broadly, how large can a sublinear additive inapproximability gap be?
In particular, what is the largest exponent $\alpha<1$ for which an
$\Omega(N^\alpha)$ inapproximability gap holds? 
Given the
polynomial-time approximation schemes for minimum-weight
decoding~\cite{WaltersApprox26,GWKApprox26}, we know that $\alpha<1$.

\section*{Acknowledgments} 
The authors are grateful to  Ali Sobh for  stimulating  discussions on this work and for helpful  feedback on the first draft of the paper.

\section*{Author Contributions}

L.B. conceived the project, developed the main  results and proofs, and wrote the manuscript. G.K. developed the reduction from the toric code to the color code, contributed to discussions, and prepared most of the figures.  ChatGPT (OpenAI) was used to assist with text
editing, rephrasing for clarity, checking notation consistency, and
searching for relevant literature. 
All AI-assisted content and references were thoroughly reviewed and verified before inclusion in the manuscript.

\addcontentsline{toc}{section}{References}
\bibliographystyle{IEEEtran}
\bibliography{refs}


\appendix

\addcontentsline{toc}{section}{Appendix}

\section{Proof of the  Localization Lemma}
\label{appA}

In this section, we establish Lemma~\ref{loclemma}, which is restated below for convenience.

\LOCLemma*

Throughout this section, let $G$ be a closed composite gadget.

Recall that a join is \emph{local} if (1) it is the union of its restrictions to the gadgets of $G$, and (2) no dual path in the restriction of $J$ to one gadget couples with a primal path in the restriction of $J$ to another gadget.

The proof proceeds by introducing a special class of joins, called \emph{canonical joins}, and then establishing a lower bound on the weight of  a  non-canonical join.  
In a canonical join, every relay and outer pin is incident to exactly one path and is matched properly: an outer pin is connected to one of  its corresponding relay pins, while a relay pin is connected either to its corresponding outer pin, another relay pin, or a nucleus defect. Moreover, defects in the nucleus of a gadget are connected only to defects in the same nucleus. These conditions guarantee the first property of a local join. 
For the second property, we further restrict the primal and dual paths. Dual paths are confined to a box, called the \emph{red box}, lying between the relay and outer boxes. Primal paths connecting nucleus defects are confined to the outer box, while the lengths of all primal paths incident to relay pins are bounded. These restrictions ensure that
primal paths in one gadget cannot couple with dual paths in another.

We now formalize this strategy by defining canonical joins in Definition~\ref{candef} and proving in Corollary~\ref{canislocal} that every canonical join is local.

Consider the box of
$L_\infty$-radius $9\Delta/5-\delta$ centered at the gadget, as shown in Figure~\ref{canfig}. We
refer to this box as the \emph{red box}.

\begin{figure}[!htbp]\centering
\includegraphics[width=0.4\textwidth]{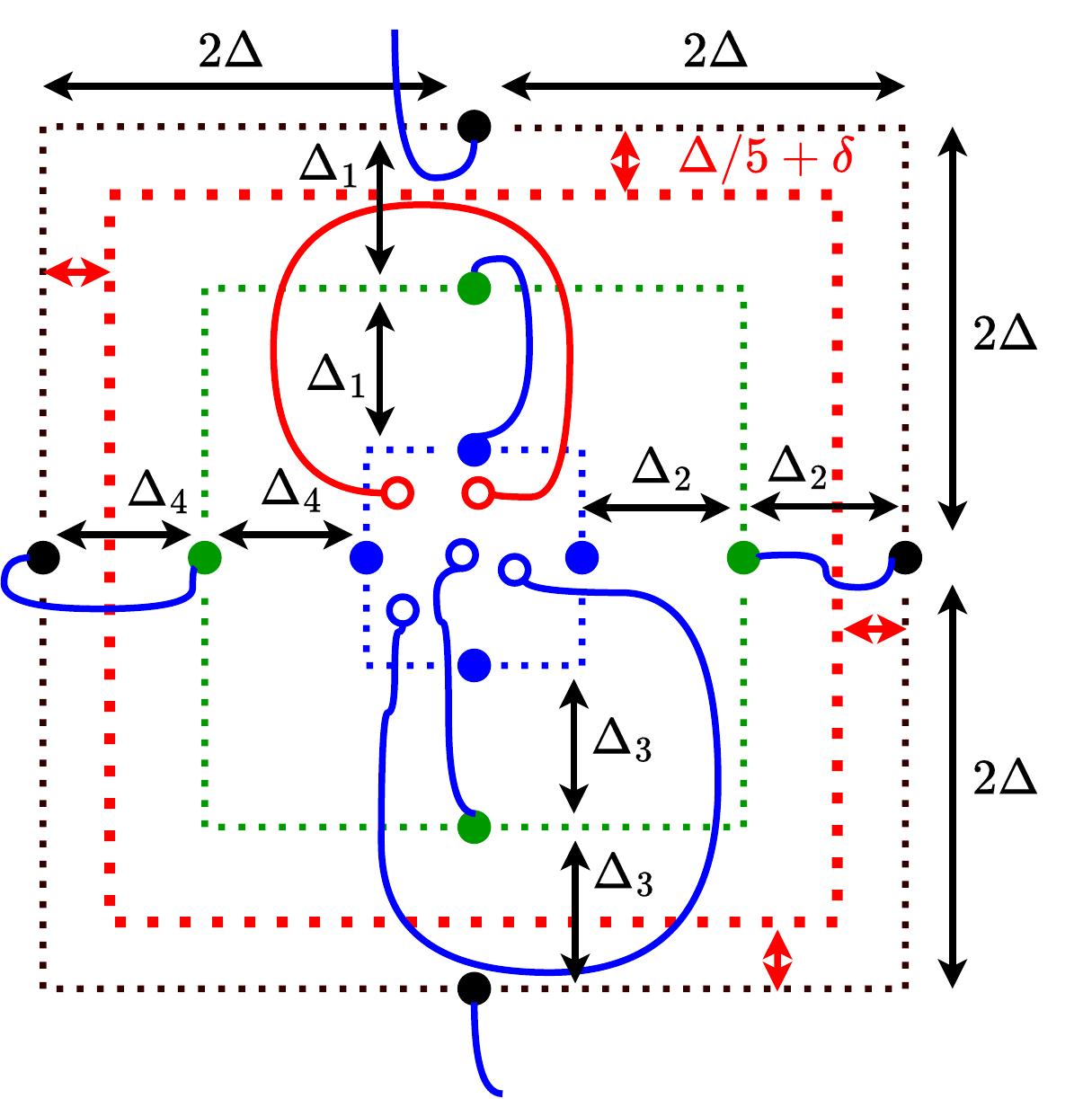}
\caption{\textbf{Local paths in a canonical join.}
The outer, red, relay, and nucleus boxes are shown by dotted black, red, green, and blue lines, respectively. Outer, relay, and nucleus pins are shown as filled black, green, and blue circles, respectively; interior nucleus defects are shown as unfilled blue circles. Primal and dual paths are shown in red and blue, respectively.  The condition $\delta<2\Delta/5$ ensures that $\Delta/5+\delta<\Delta_i$ for every $i$ (since $\Delta_i\ge \Delta-\delta$), and therefore the red box strictly contains the relay box.
}

\label{canfig}
\end{figure}

\begin{definition}[Canonical joins]\label{candef}
A  join $J$ for the defects in  $G$ is called \emph{canonical} if it satisfies the following conditions.

\begin{enumerate}
\item[(1)] \textbf{Matching conditions:}
\begin{enumerate}
\item[(A)] Every relay or outer pin 
 has $J$-degree~1, where the \emph{$J$-degree} of a defect is the number of paths in $J$ incident to that defect.
\item[(B)] Every outer pin is matched to one of the two relay pins of the gadgets sharing that pin.

\item[(C)] 
 Every relay pin of a gadget $g$ is matched  by a path to one of the following:
\begin{itemize}

\item[$\bullet$] its corresponding outer pin by a path of length at most $7\Delta/5$;
\item[$\bullet$] a defect in the nucleus of $g$ by a path of length at most $7\Delta/5$; 
\item[$\bullet$] another relay pin of $g$ by a path of length at most $12\Delta/5$.
\end{itemize}
\end{enumerate}
\item[(2)] \textbf{Primal nucleus locality condition.}
Every path incident to a primal defect of a gadget $g$ that is not connected to a relay pin of $g$ is contained entirely within the outer box of $g$ and connects the defect to a primal defect in the nucleus of $g$.

\item[(3)]  \textbf{Dual locality condition.}
Every path incident to a dual defect of a gadget $g$ is contained entirely within the red box of $g$ and connects the defect to a dual defect in the nucleus of $g$.

\end{enumerate}
\end{definition}

In a canonical join, every primal or dual path connects two defects in the same gadget. Thus, the join is the union of its restrictions to the atomic gadgets. The following lemma shows that primal and dual paths cannot couple unless they are associated with the same gadget, and hence that a canonical join is local.

\begin{lemma}[Separation Lemma]\label{seplem}
Let $J$ be a canonical join, let $p$ be a primal path in $J$ connecting two defects in a gadget $g$, and let $q$ be a dual path in $J$ connecting two defects in a different gadget $g'$. Then $p$ and $q$ do not contain edges dual to each other.
\end{lemma}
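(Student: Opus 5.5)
The plan is a purely geometric argument: locate every edge of $p$ and every edge of $q$ in regions of the lattice that cannot overlap when $g\neq g'$. Since the outer boxes of distinct atomic gadgets have disjoint interiors and side length $4\Delta$, it suffices to show two things. First, every dual path of $g'$ lies in the open interior of the outer box of $g'$, with some margin from its boundary. Second, every primal path of $g$ stays within a controlled neighbourhood of the outer box of $g$ that does not reach that margin region of any other gadget. Dual edges crossing primal edges share a midpoint, so if the two edge sets lie in disjoint regions, they cannot be dual to each other.

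\textbf{Step 1: dual paths.} By condition (3) of Definition~\ref{candef}, $q$ is contained in the red box of $g'$. This box has $L_\infty$-radius $9\Delta/5-\delta<2\Delta$, so it sits strictly inside the outer box of $g'$ (radius $2\Delta$), at distance at least $\Delta/5+\delta$ from its boundary.

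\textbf{Step 2: primal paths.} I would classify $p$ by its endpoints using conditions (1) and (2).
\begin{itemize}
\item[$\bullet$] If neither endpoint of $p$ is a relay pin or an outer pin, then (2) puts $p$ inside the outer box of $g$. Its edges are therefore disjoint from the interior of the outer box of $g'$, and in particular from the red box of $g'$.
\item[$\bullet$] If $p$ joins a relay pin to a nucleus defect, or two relay pins, then the length bounds $7\Delta/5$ and $12\Delta/5$ in (1C) control how far $p$ can wander. A relay pin at depth $\Delta_i\ge\Delta-\delta$ lies at $L_\infty$-distance $2\Delta-\Delta_i\le\Delta+\delta$ from the center in the relevant direction. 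A path from it of length at most $12\Delta/5$ whose other endpoint is also deep inside $g$ can leave the outer box of $g$ by at most about half the slack, roughly $(12\Delta/5-\text{(direct distance)})/2$. I would show this excursion is less than $\Delta/5+\delta$, which is the margin between the outer box of $g'$ and its red box.
\item[$\bullet$] If $p$ joins an outer pin to its relay pin, then $p$ has length at most $7\Delta/5$ while its endpoints are at distance $\Delta_i\ge\Delta-\delta$. The path therefore deviates from the segment joining them by at most $(7\Delta/5-\Delta_i)/2\le(2\Delta/5+\delta)/2$. The outer pin sits on the boundary at the middle of a side, at distance $2\Delta$ from the corners. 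So the deviation region stays within distance $\Delta/5+\delta/2$ of the outer box of $g$, along that segment. The part that enters the neighbouring gadget $g'$ (whose outer box shares this side) penetrates at most that far, which is less than $\Delta/5+\delta$, so it does not reach the red box of $g'$.
\end{itemize}
The exact bookkeeping is an $L_\infty$ detour bound: a lattice path of length $\ell$ between points at $L_1$-distance $d$ stays within $L_1$-distance $(\ell-d)/2$ of some monotone path between them.

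\textbf{Main obstacle.} The delicate case is the relay-to-relay path of length up to $12\Delta/5$ (and similarly relay-to-nucleus). Here I must check that the slack, combined with the positions of relay pins near the middle of sides, cannot push the path more than $\Delta/5+\delta$ beyond the outer box of $g$ toward another gadget's red box. This includes diagonal neighbours near corners, which are far because pins are $2\Delta$ from corners. I would first verify the worst case: adjacent relay pins on perpendicular sides, and opposite relay pins. Even if the naive bound fails there, the conclusion only needs the path not to reach the red box of $g'$, which is at distance at least $\Delta/5+\delta$ beyond $g$'s outer box along that side and much farther elsewhere. This requires only the coarse estimate that half the excess length of $p$ is smaller than the distance from the relay box of $g$ to the red box of $g'$.
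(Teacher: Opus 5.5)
Your argument is correct and is essentially the paper's. Condition~(3) confines $q$ to the red box of $g'$, and the detour bound shows that $p$ leaves the outer box of $g$ by at most $\Delta/5+\delta/2$, which is less than the margin $\Delta/5+\delta$ (the paper phrases this as a box of $L_\infty$-radius $2\Delta+\Delta/5+\delta/2$ around $g$ versus center separation at least $4\Delta$). The case you flag as the main obstacle is in fact harmless: the same detour bound keeps relay--relay and relay--nucleus paths within $L_\infty$-radius $6\Delta/5+3\delta/2<2\Delta$ of the center of $g$, hence inside its outer box, so, as the paper notes, the outer-pin--relay-pin path is the worst case.
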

{\em Proof.}  
By condition~(3), $q$ is contained entirely within the red box of $g'$. By conditions~(1) and~(2), $p$ is contained in a box of $L_\infty$-radius  $2\Delta+\Delta/5+\delta/2$ centered at $g$. Indeed, the worst case occurs when $p$ connects an outer pin to a relay pin. These two pins are at distance at least $\Delta-\delta$, while the path connecting them has length at most $7\Delta/5$, and hence can extend beyond the outer box by at most
$(7\Delta/5-(\Delta-\delta))/2=\Delta/5+\delta/2$.

The red box of $g'$ has $L_\infty$-radius $9\Delta/5-\delta$. Since the centers of $g$ and $g'$ are at $L_\infty$-distance at least $4\Delta$, the red box of $g'$ is disjoint from the box containing $p$. Hence, $q$ cannot contain an edge dual to an edge of $p$.
\finito

\begin{corollary}\label{canislocal}
Every canonical join  is local. 
\end{corollary}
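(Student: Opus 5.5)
The plan is to verify the two conditions in the definition of a local join directly for a canonical join $J$, using the canonical conditions for (a) and the Separation Lemma (Lemma~\ref{seplem}) for (b).

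\textbf{Condition (a).} We show that every path of $J$ has both endpoints in a single atomic gadget, so that $J$ is the union of its restrictions. Take a path of $J$ and one of its endpoints.
\begin{itemize}
\item[$\bullet$] If the endpoint is an outer pin, then by condition~(1)(B) the path ends at a relay pin of one of the gadgets sharing that pin. The outer pin lies on the outer box of both gadgets, so we assign the path to the gadget owning that relay pin.
\item[$\bullet$] If the endpoint is a relay pin of $g$, then condition~(1)(C) sends the other endpoint to the corresponding outer pin, a nucleus defect of $g$, or another relay pin of $g$. In each case both endpoints lie in $g$.
\item[$\bullet$] If the endpoint is any other primal defect of $g$, then condition~(2) puts the other endpoint in the nucleus of $g$.
\item[$\bullet$] If the endpoint is a dual defect of $g$, then condition~(3) puts the other endpoint in the nucleus of $g$.
\end{itemize}
Hence every path has both endpoints in some gadget, and it lies in exactly that gadget's restriction.

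One subtlety needs care. An outer pin glued between $g$ and $g'$ has both endpoints of its path in $g$ (when matched to $g$'s relay pin), but the pin is also a defect of $g'$. I would note that the restriction to $g'$ contains no path at that pin. This is consistent with the signature convention: the pin is either True for $g'$ and False for $g$, or the reverse. So the union is still exact and no path is counted twice.

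\textbf{Condition (b).} Let $d$ be a dual path in the restriction to $g'$ and $p$ a primal path in the restriction to a different gadget $g$. The Separation Lemma applies verbatim: $p$ connects two defects of $g$ and $d$ connects two defects of $g'$, so they share no mutually dual edges.

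\textbf{Expected obstacle.} The only delicate point is the bookkeeping for outer pins shared between two gadgets, namely that "endpoints both lie in $g$" is unambiguous and each path belongs to exactly one restriction. I would settle this by observing that a path joining an outer pin to a relay pin of $g$ has both endpoints in $g$, while its endpoints are not both in $g'$, since the relay pin belongs only to $g$. The rest follows immediately from the definitions.
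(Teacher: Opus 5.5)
Your proof is correct and matches the paper's argument. Condition~(a) follows from the canonical conditions, since every path joins two defects of the same gadget, and condition~(b) is exactly the Separation Lemma~\ref{seplem}; your extra bookkeeping for glued outer pins only makes explicit what the paper leaves implicit.
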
 
 We reduce the Localization Lemma to the following three lemmas. The first establishes a lower bound on the weight of a join that violates the matching conditions but contains no relatively long primal paths. The second establishes a lower bound on the weight of a join that contains relatively long primal paths.  
 The third is a local lemma concerning a single gadget.  It bounds the coupling between the relay-pin paths satisfying the matching conditions and any dual path crossing from the interior of the nucleus box to the exterior of the red box.
\begin{lemma}[Matching conditions: Short violations]
\label{canlemAS}
Let $J$ be a join that violates the matching conditions. Suppose that every path in $J$ incident to an outer or relay pin has length less than $3(\Delta-\delta)$. Then the weight of $J$ is at least
\[
R+2\Delta/5-2\delta.
\]
\end{lemma}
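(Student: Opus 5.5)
We will prove the bound by charging the weight of $J$ to the outer and relay pins only. Concretely, we attach to each relay pin $r$ (of depth $\Delta_r\ge\Delta-\delta$) a \emph{budget} equal to $\Delta_r$, so that the total budget is exactly $R$. We then show that the edges of $J$ on paths incident to outer and relay pins can be distributed so that every relay pin receives at least its budget. A violation of the matching conditions additionally produces an excess of at least $2\Delta/5-2\delta$ somewhere. All other paths, primal or dual, contribute nonnegative weight and are discarded. The only subtlety in discarding is that weight counts primal--dual pairs once. We therefore restrict attention to the primal paths incident to pins. These are edge-disjoint primal paths, so their total length is a lower bound for the weight of $J$.

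\textbf{Geometric facts.} The first step is to record the geometry. Each outer pin $o$ is shared by two adjacent gadgets, and its two relay pins $r,r'$ lie at distances $\Delta_r,\Delta_{r'}$ from $o$ on opposite sides. Every relay pin is at distance at least $\Delta_r$ from its nucleus, since it is equidistant from its outer pin and its nucleus pin (or the center). Every relay pin is also at distance at least roughly $2(\Delta-\delta)$ from any other outer pin and from any relay pin of another gadget. Between two relay pins of the same gadget, the distance is at least about $\Delta_r+\Delta_{r'}$ minus the nucleus span, which is $o(\Delta)$ and can be absorbed.

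The length hypothesis, that every path has length less than $3(\Delta-\delta)$, guarantees that a path incident to a pin can only reach objects within its own gadget or the neighbouring gadget sharing that outer pin. This rules out long detours that would make charging ambiguous.

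\textbf{Charging scheme.} For each path $P$ incident to pins, we split its length between its endpoints. A path from a relay pin $r$ to its own outer pin gives $\Delta_r$ to $r$. A path from $r$ to its nucleus gives at least $\Delta_r$ to $r$. A path between two relay pins of one gadget gives $\Delta_r,\Delta_{r'}$, using the same-gadget distance bound up to $O(\text{nucleus span})$.

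We then handle the violations case by case.
\begin{itemize}
\item[(A)] A pin of $J$-degree $\ge 2$ contributes at least two paths. Each carries about $\Delta-\delta$, which yields an excess of that order. A pin of $J$-degree $0$ is impossible, since pins are defects with odd degree.
\item[(B)] An outer pin matched to something other than its two relay pins travels at least $2(\Delta-\delta)$. This leaves its relay pins either matched to each other's region or forced into extra paths, and yields an excess of at least about $\Delta-2\delta$.
\item[(C)] A relay pin matched correctly but with a path exceeding $7\Delta/5$ (respectively $12\Delta/5$) gives an excess over $\Delta_r\le\Delta$ of at least $2\Delta/5$, and over $2\Delta$ of at least $2\Delta/5$. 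A relay pin matched to a wrong target travels at least $2(\Delta-\delta)$, for an excess of at least $\Delta-2\delta$.
\end{itemize}
In every case the excess is at least $2\Delta/5-2\delta$, while all other relay pins still receive their budget. This gives weight at least $R+2\Delta/5-2\delta$.

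\textbf{Main obstacle.} The hardest step is ensuring that no edge is charged twice when one violating path serves several pins, for instance an outer pin joined to a foreign relay pin whose own relay partner is then re-routed. We plan to avoid double charging with a parity/pairing argument. Around each outer pin, consider the pair consisting of the two relay pins $r,r'$ and the outer pin $o$. The relevant primal paths must, by parity, connect this triple to the rest of the structure. Applying a ball-crossing lower bound, in which a path leaving the $L_\infty$ ball of radius $\Delta_r$ around $o$ pays at least its radius, then charges disjoint pieces of paths to disjoint balls. We expect the bookkeeping for the relay-to-relay case inside the nucleus to be the most delicate part, and plan to absorb the nucleus span $o(\Delta)$ into the slack between $2\Delta/5$ and the distances used.
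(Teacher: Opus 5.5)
Your charging idea is a workable alternative to the paper's route, but as written it does not prove the lemma. The step you call the main obstacle is left open, and one of your case bounds is false.

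Under the length hypothesis, the only illegal target a relay pin $r$ can reach is the relay pin $r'$ on the other side of the shared outer pin $o$. The nucleus of the neighbouring gadget, the other outer pins, and relay pins of other gadgets are all at distance at least $3(\Delta-\delta)$. The path $r$--$r'$ has length at least $\Delta_r+\Delta_{r'}$, but it must pay the budgets of \emph{both} endpoints, so it may carry no excess at all. Your claimed excess $\Delta-2\delta$ subtracts only the budget of $r$.

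The excess in that case lives elsewhere. If $r$ and $r'$ both have $J$-degree one, then $o$, which can only reach $r$, $r'$, or the two nuclei, must be joined to a nucleus. That path has length at least $2(\Delta-\delta)$ and pays no budget. The same issue arises for an outer pin of degree at least three. Two of its paths may be entirely consumed by the budgets of $r$ and $r'$; the excess is the third path, which by acyclicity must end in a nucleus.

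Your sketched remedy (balls around $o$ plus a parity/pairing argument) is not carried out. Your plan to absorb an $o(\Delta)$ nucleus-span loss is also incompatible with the exact bound $R+2\Delta/5-2\delta$. No such loss actually occurs, since two relay pins of the same gadget are at distance at least $2\Delta\ge\Delta_r+\Delta_{r'}$.

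What closes the gap is to make the split exact. For a primal path $P$ of $J$, let $\sigma(P)=|P|-\sum\Delta_r$, the sum running over the relay pins among the endpoints of $P$. The depth of $r$ is its distance to the nearest other defect, and the diamonds of radius $\Delta_r$ around relay pins have disjoint interiors (the relay-cover argument). Hence $\sigma(P)\ge 0$ for every $P$. Summing over the edge-disjoint primal paths gives
\[
w(J)\;\ge\;\sum_P|P|\;=\;\sum_r \deg_J(r)\,\Delta_r+\sum_P\sigma(P)\;\ge\;R+\sum_r\bigl(\deg_J(r)-1\bigr)\Delta_r+\sum_P\sigma(P),
\]
where $w(J)$ is the weight of $J$ and $\deg_J(r)\ge 1$ is the number of paths ending at $r$. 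Every term is nonnegative, so double charging cannot occur, and the cases read off directly:
\begin{itemize}
\item a relay pin of degree at least three contributes $2\Delta_r$;
\item an outer pin joined to a nucleus contributes $\sigma\ge 2(\Delta-\delta)$;
\item an outer pin of degree at least three, and the cross-gadget relay--relay case, both reduce to the two cases above;
\item an overlong path to a legal target has $\sigma>7\Delta/5-\Delta$ or $\sigma>12\Delta/5-2\Delta$.
\end{itemize}
The excess is therefore at least $\min\{2(\Delta-\delta),\,2\Delta/5\}\ge 2\Delta/5-2\delta$.

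Compared with the paper, the two routes trade off as follows. The paper passes to the reduced primal component, repairs violations by local XOR updates from a table of cases (including multiplicity cases created by the updates themselves), and only then invokes the relay-cover lemma as a black box. Your route never modifies $J$. It also avoids the $2\delta$ the paper loses when it replaces a relay--relay path by a shortest path of length up to $2(\Delta+\delta)$.
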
 
\begin{lemma}[Joins with long primal paths]
\label{canlemAL}
Let $J$ be a join containing a primal path of length at least $3(\Delta-\delta)$ whose endpoints are not both defects in the same nucleus. Then the weight of $J$ is at least
\[
R+\Delta-3\delta.
\]
\end{lemma}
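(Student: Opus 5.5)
We prove Lemma~\ref{canlemAL} with a charging argument, following the strategy behind Lemma~\ref{canlemAS}. The target is the bound $R+\Delta-3\delta$. We obtain $R$ from disjoint pieces of primal paths near the relay pins, and the extra $\Delta-3\delta$ from the excess length of the long path.

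First we set up the charging. For each relay pin $r$ of depth $\Delta_i\ge\Delta-\delta$, consider the $L_\infty$-ball of radius slightly less than $\Delta_i$ around $r$. By the placement of relay pins, this ball contains no defect other than $r$: the outer pin and the nucleus pin are at distance exactly $\Delta_i$, and other defects are farther. Relay pins are primal defects of odd $J$-degree, so some primal path leaves $r$ and travels at least distance $\Delta_i$ before reaching its other endpoint. We charge to $r$ the first $\Delta_i/2$ edges of one such path, counted from $r$. Two relay pins can lie at distance as small as about $\Delta_i+\Delta_j$, namely when they are matched to each other through a shared nucleus side, so we take half-balls. Then the charged portions are edge-disjoint, and their total length is $\sum_r \Delta_r/2$. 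The outer pins are handled symmetrically. Each outer pin is equidistant, at depth $\Delta_i$, from the two relay pins of the glued gadgets, and all other defects are farther. So each outer pin contributes another $\Delta_i/2$ on its own half-ball, and summing over outer and relay pins recovers $R$. Since we count edges only once, dual paths can only add weight, and we may discard them.

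Next we extract the surplus from the long path $p$. Its length is at least $3(\Delta-\delta)$, and its endpoints are not both in the same nucleus. We split into cases on the endpoints of $p$.

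\emph{Case 1: some endpoint of $p$ is an outer or relay pin $r$.} At most $\Delta_r/2\le\Delta/2$ of $p$ near $r$ was charged. If the other endpoint $s$ is also a pin, a further $\Delta_s/2$ at that end was charged. The remaining middle portion therefore has length at least $3(\Delta-\delta)-\Delta=2\Delta-3\delta$. It lies outside all half-balls except possibly where $p$ passes through other pins' half-balls, and there we must avoid double counting. To handle this, we charge only the portions of the path that lie inside the half-ball of its own endpoint, truncating at the first exit. Paths that cross a foreign half-ball are charged to that ball only through their own endpoint's path.

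\emph{Case 2: both endpoints of $p$ are nucleus defects in different nuclei.} Then $p$ must cross two relay annuli. Its length outside the half-balls is still at least $3(\Delta-\delta)$ minus the portions we truncate, which are at most $2\cdot\Delta/2$.

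In both cases the surplus is at least $\Delta-3\delta$. The offset $\delta$ absorbs the discrepancies $\Delta-\Delta_i\le\delta$ that arise in the half-ball radii.

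The main obstacle is making the charging disjoint when many paths share vertices, since joins may meet at vertices and the long path may pass through foreign half-balls. I would first try defining the charged region per pin as the intersection of its path with the open ball of radius $\Delta_i/2$. These balls are pairwise disjoint because every pair of pins is at distance at least $\Delta-\delta$. The boundary cases $\Delta_i<\Delta$ will then need care with the $-3\delta$ slack.
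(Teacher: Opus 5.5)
You have the right overall idea: charge each relay pin a stretch of an incident primal path, then find $\Delta-3\delta$ uncharged edges on the long path. The accounting that is supposed to produce $R$, however, fails. You charge $\Delta_r/2$ to each relay pin and $\Delta_o/2$ to each outer pin.

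In a closed composite gadget every outer pin is glued, so it has two relay pins, one in each gadget sharing it. Hence there are twice as many relay pins as outer pins. For the half-balls to be disjoint, the ball of an outer pin $o$ with relay pins $r,r'$ can have radius at most $\min(\operatorname{depth}(r),\operatorname{depth}(r'))/2\le(\operatorname{depth}(r)+\operatorname{depth}(r'))/4$. The relay pins then contribute $R/2$ and the outer pins at most $R/4$, for a total of at most $\tfrac34R$, not $R$. The missing $R/4$ is of order $\Delta$ times the number of gadgets, so it swamps the surplus $\Delta-3\delta$, and the argument does not prove the lemma.

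This cannot be fixed by re-tuning radii. A relay pin and its outer pin are at distance exactly $\operatorname{depth}(r)$, so any radius given to the outer pin must be taken from the relay pin. When $r$ is joined to its nucleus pin, the far half of that path sits near a nucleus defect, which you cannot charge: nucleus defects may have $J'$-degree zero. (Also, an outer pin is generally not equidistant from its two relay pins, since glued gadgets can have different depths.)

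The repair is to drop the outer-pin balls and use full $L_1$ balls of radius $\operatorname{depth}(r)$ around relay pins only, as in Lemma~\ref{localrelaycover}. For each relay pin $r$, pick one path of $J'$ incident to $r$ and charge its first $\operatorname{depth}(r)$ edges.
\begin{itemize}
\item Distinct paths of a join are edge-disjoint, so charges from different paths never collide.
\item If one path is picked by both endpoints $r,s$, its two charged segments are disjoint, because its length is at least $d(r,s)\ge\operatorname{depth}(r)+\operatorname{depth}(s)$.
\end{itemize}
The charges therefore total exactly $R$. Your worry about the long path $p$ crossing foreign balls also disappears, since edges of $p$ are only ever charged to its own endpoints. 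The uncharged part of $p$ has at least $3(\Delta-\delta)-2\Delta=\Delta-3\delta$ edges, the worst case being $p$ joining two relay pins.

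This direct argument proves the lemma and is simpler than the paper's route. The paper compares $J'$ with the reduced primal component $I'$ of a locally optimal join through the even-degree auxiliary multigraph $H$. It decomposes $H$ into cycles, and then into connections, double connections and lone $J'$-edges. It shows $w_{J'}\ge w_{I'}$ on each piece, with surplus $\Delta-3\delta$ on the piece containing the long edge; the tight case there, the double connection, is the same worst case as above.
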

\begin{lemma}[Local coupling] \label{localred}
Consider any gadget $g$. For each relay pin of $g$, consider a path incident to that pin and satisfying condition~(1C) in the definition of a canonical join. 
 Let $q$ be a dual path connecting a dual-lattice node in the nucleus box to a node outside the red box. Then $q$ contains at least $2\Delta/5-3\delta-1$ edges whose dual edges lie in the red box and are not contained in any of the paths incident to the relay pins.
\end{lemma}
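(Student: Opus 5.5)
The approach is a level-crossing count in the annulus between the relay box and the red box of $g$. Let $r_0$ be the $L_\infty$-radius of the relay box. Since every relay pin has depth $\Delta_i\ge\Delta-\delta$ and the outer box has radius $2\Delta$, we have $r_0\le\Delta+\delta$. The red box has radius $r_1=9\Delta/5-\delta$. So the annulus contains at least $r_1-r_0\ge 4\Delta/5-2\delta$ concentric square levels. The dual path $q$ starts inside the nucleus box and ends outside the red box, so for each level $r$ with $r_0\le r<r_1$ it contains a \emph{radial} dual edge crossing from the ring of radius $r$ to the ring of radius $r+1$. I fix one such edge $e_r$ for each level, taking, say, the last crossing of that level along $q$. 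The edges $e_r$ are pairwise distinct, and their primal duals are \emph{tangential} edges lying in the red box. The claim then reduces to showing that at most $2\Delta/5+\delta+1$ of these tangential edges belong to relay-pin paths.

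Next I bound how many tangential edges in the annulus the relay-pin paths can contain, using the length budgets in condition~(1C). The key structural fact is that a primal edge dual to a radial dual step is tangential. Hence a relay path can only block $q$ at level $r$ by making a tangential move at level $r$, that is, by deviating from a purely radial route. The cases are as follows.
\begin{itemize}
\item A relay-to-outer path joins points at distance $\Delta_i\ge\Delta-\delta$ with length at most $7\Delta/5$. Its slack, and hence its number of tangential moves, is at most $2\Delta/5+\delta$.
\item A relay-to-nucleus path has length at most $7\Delta/5$. It can reach radius beyond $r_0$ by at most $(7\Delta/5-\Delta_i)/2\le\Delta/5+\delta/2$, so it meets only a thin inner part of the annulus.
\item A relay-to-relay path has length at most $12\Delta/5$. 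Two relay pins of $g$ sit on different sides near their axes, so most of its budget goes into travelling between them. I would show, for example by splitting the path at its outermost point, that its excursion beyond $r_0$ is also bounded by $\Delta/5+O(\delta)$.
\end{itemize}

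The main obstacle is aggregation. Up to four relay paths are present, and naively summing their slacks gives a bound of roughly $4(2\Delta/5)$, which is far too weak. To get around this I would use the geometry: the pins sit at the midpoints of the four sides, and a path with slack $s$ deviates laterally from its pin's axis by at most $s/2$. Consequently the relay paths lie in narrow, pairwise far-apart corridors around the four axes. To collect blocked crossings from two different corridors, $q$ would have to travel tangentially between them. Those tangential dual steps do not count toward the levels, but they show that $q$ effectively traverses the annulus inside one corridor, apart from a short inner part. There it can be blocked at most (slack of that single path) times, plus an additive correction of $O(\delta)+1$ for the boundary levels near $r_0$ and the nucleus-pin paths.

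Subtracting then gives at least $(4\Delta/5-2\delta)-(2\Delta/5+\delta)-1=2\Delta/5-3\delta-1$ uncovered radial edges of $q$. Their duals lie in the red box and are not on any relay-pin path, which is the bound claimed in Lemma~\ref{localred}. If the corridor argument does not directly give the single-path bound, the fallback is a charging argument. Each blocked level is charged to a tangential edge of the blocking path, and the tangential edges of all paths in the annulus beyond radius $r_0+\Delta/5+\delta/2$ come only from relay-to-outer paths. For a relay-to-outer path, blocking an edge at level $r$ forces extra length at least twice the lateral displacement needed to reach $q$'s crossing position. I would try to use that doubling factor to recover the same constant.
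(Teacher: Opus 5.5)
\textbf{There is a genuine gap, in the aggregation step.} Your single-path accounting is sound and is essentially the paper's. One radial crossing per level, minus the tangential edges of the lone blocking path, is the same count as ``right-moving edges of $q'$ minus vertical edges of $p$'' in Lemma~\ref{singlecoupling}. Your reach bounds for relay-to-nucleus and relay-to-relay paths play the role of Lemma~\ref{singlecouplingbounday}.

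The aggregation step is not merely unproved: the claim you reduce to is false. It is not true that at most $2\Delta/5+\delta+1$ of the level edges $e_r$ can be blocked. Here is a counterexample.
\begin{itemize}
\item Take relay-to-outer paths on the right and top sides. Each has a budget of at least $7\Delta/5-\Delta_i\ge 2\Delta/5$ tangential edges.
\item Spend each budget on an up-then-down staircase. Place the right one across the levels just outside $r_0$, and the top one across the remaining levels up to $r_1$.
\item Let $q$ follow the right staircase fully coupled. Then let it walk around the corner at constant $L_\infty$-level to the top corridor. Finally let it follow the top staircase out of the red box.
\end{itemize}
All (at most $4\Delta/5-\delta$) level edges are then blocked, so your radial count yields nothing. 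Your assertion that $q$ ``effectively traverses the annulus inside one corridor'' is simply false for this $q$. The fallback charging argument does not help either: the two paths together carry enough tangential edges in the annulus to cover every level.

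What rescues the lemma is exactly what you discard. The tangential steps of $q$ between corridors (more than $2\Delta$ of them in this example) are themselves edges whose duals lie in the red box and on no relay-pin path. The lemma counts all such edges, not one per level.

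The paper therefore first passes to a subpath $q'$ with exactly one node in the relay box and one node outside the red box. Any switch between corridors must then happen inside the ring, not through the interior of the relay box, where the travel could be coupled to relay-to-relay or relay-to-nucleus paths. It then splits into cases:
\begin{itemize}
\item If $q'$ couples in the ring with the paths of two distinct relay pins, the uncoupled travel between their regions already gives at least $8\Delta/5-\delta-1$ (Lemma~\ref{multicoupling}).
\item Otherwise only one path can block, and your level count applies.
\end{itemize}
With that case split added, your argument goes through. Your level formulation even has a small advantage: in the single relay-to-outer case you need not distinguish through which sides $q'$ enters and exits the ring.
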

Combining Lemmas~\ref{canlemAS} and~\ref{canlemAL}, we obtain the following.
\begin{corollary} [Matching conditions]
\label{canlemA} 
If a join $J$ violates the matching  conditions, then the weight of $J$ is at least
\[
R+2\Delta/5-3\delta. 
\]
\end{corollary}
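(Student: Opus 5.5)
The corollary follows from a two-case split on whether $J$ contains a long primal path incident to an outer or relay pin. Each case is handled by one of Lemmas~\ref{canlemAS} and~\ref{canlemAL}, and then the weaker of the two bounds is taken.

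\emph{Case 1.} Every path of $J$ incident to an outer or relay pin has length less than $3(\Delta-\delta)$. Since $J$ violates the matching conditions by assumption, the hypotheses of Lemma~\ref{canlemAS} hold verbatim. That lemma gives weight at least $R+2\Delta/5-2\delta\ge R+2\Delta/5-3\delta$, using $\delta\ge 0$ (the offset is a maximum of differences $\Delta-\Delta_i$ with $\Delta_i\le\Delta$).

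\emph{Case 2.} Some path $p$ of $J$ is incident to an outer or relay pin and has length at least $3(\Delta-\delta)$. Paths incident to pins are primal paths, because outer and relay pins are primal defects. An outer or relay pin lies outside the nucleus box: outer pins lie on the outer box, and relay pins lie at depth $\Delta_i\ge\Delta-\delta>0$ from the outer box, while the nucleus has size $o(\Delta)$. Hence the endpoints of $p$ are not both defects in the same nucleus, and Lemma~\ref{canlemAL} applies. It gives weight at least $R+\Delta-3\delta$. This is at least $R+2\Delta/5-3\delta$, since $\Delta\ge 2\Delta/5$.

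In either case the weight is at least $R+2\Delta/5-3\delta$, which proves the corollary. The only point needing care is the check in Case 2 that a pin is never a nucleus defect. One should confirm that even for gadgets whose relay pin sits near the center (the empty-nucleus gadgets), the pin is not counted as a nucleus defect. This holds because those gadgets have empty nuclei, so "both endpoints in the same nucleus" is impossible for them.
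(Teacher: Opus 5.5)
Your proof is correct and follows the paper's argument: the paper derives the corollary by combining Lemma~\ref{canlemAS} (when every pin-incident path is shorter than $3(\Delta-\delta)$, giving $R+2\Delta/5-2\delta$) with Lemma~\ref{canlemAL} (when some pin-incident primal path has length at least $3(\Delta-\delta)$ and so cannot have both endpoints in one nucleus, giving $R+\Delta-3\delta$), and taking the weaker of the two bounds. Your explicit checks that $\delta\ge 0$ and that outer and relay pins are never nucleus defects supply details the paper leaves implicit.
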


The proofs of these three lemmas are given in Sections~\ref{canlemAP}, \ref{canlemAPP}, and~\ref{canlemBP}, respectively. The first two are primarily combinatorial, while the third is geometric.

We establish the Localization Lemma  from these   three lemmas in Section~\ref{prus}. Before showing this, we isolate the following simple observation, which is used repeatedly throughout the proof and explains the origin of the relay cost $R$ in the lower bounds.

The observation applies to sets of primal paths that are not necessarily joins. We call a set $J$ of simple primal paths incident to primal defects in $G$ a \emph{relay cover} if, for every gadget, each relay pin is incident to at least one path in $J$.

\begin{lemma}[Relay covers]\label{localrelaycover}
The weight of every relay cover $J$ is at least the relay cost $R$ of $G$.
\end{lemma}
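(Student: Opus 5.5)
The plan is to charge each relay pin its depth, using a disjoint-ball argument around the pins. For each relay pin $r$ of depth $\Delta_r$, consider the set $B_r$ of primal lattice edges lying within $L_\infty$-distance (or path distance) less than $\Delta_r$ of $r$. Equivalently, I will count edge-length only along the initial segment of a path leaving $r$. Every path of $J$ incident to $r$ ends at a primal defect different from $r$. I would check from the geometry in Figure~\ref{general} that the nearest primal defects to $r$ are its corresponding outer pin and its corresponding nucleus pin, or the gadget center when the nucleus is empty. Both lie at distance exactly $\Delta_r$. Hence every path incident to $r$ has length at least $\Delta_r$, and its first $\Delta_r$ edges stay inside the ball of radius $\Delta_r$ around $r$.

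The key step is disjointness. I would show that the open balls of radius $\Delta_r$ (in graph distance) around distinct relay pins are disjoint, or at least that the first $\lceil\Delta_r\rceil$ edges of the paths chosen at distinct relay pins are pairwise distinct. The natural route is a midpoint argument. The relay pins of one gadget lie on the relay box, and two distinct relay pins $r,r'$ should be at distance at least $\Delta_r+\Delta_{r'}$. For pins of different gadgets, the outer pin sits between them, so the distance is $\Delta_r+\Delta_{r'}$ exactly. For pins of the same gadget, they sit on different sides of the relay box, and the distance between them through the lattice is at least the sum of their distances to the nucleus (or center). Given this, an edge in the first-$\Delta_r$ prefix of a path from $r$ is at distance $<\Delta_r$ from $r$, and similarly for $r'$. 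This contradicts the triangle inequality, so the prefixes are edge-disjoint.

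Summing, the primal edges of $J$ number at least $\sum_r \Delta_r = R$. Since weight counts each primal–dual pair once and we count only primal edges, the weight of $J$ is at least $R$.

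The main obstacle is verifying the pairwise distance bound $d(r,r')\ge\Delta_r+\Delta_{r'}$ for relay pins of the same gadget, together with the claim that no other defect (an internal nucleus defect or a relay pin of a neighbor) is closer than $\Delta_r$. This depends on the relay box geometry with nucleus size $o(\Delta)$ and depths at most $\Delta$. I would handle it by placing pins at the midpoints of the box sides, so that two pins on adjacent sides of the relay box are at $L_1$-distance roughly the sum of their depths to the center region. If that bound is slightly off for adjacent pins, the fallback is to use half-balls oriented toward the outer pin direction, since a path from $r$ must exit either outward or inward. A further fallback is to use only disjoint radius-$\Delta_r/2$ balls in both directions and recover the full $\Delta_r$ by noting that a path must cross the whole ball.
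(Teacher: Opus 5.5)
Your proposal is correct and is essentially the paper's proof: the paper places an $L_1$ diamond of radius $\operatorname{depth}(r)$ around each relay pin, notes that no other defect lies in its interior and that the diamonds have disjoint interiors, and charges each pin the first $\operatorname{depth}(r)$ edges of a path leaving it, which your midpoint/triangle-inequality argument makes explicit. The distance bound you flag as the main obstacle holds directly (for example, pins on adjacent sides of one gadget are at $L_1$-distance $4\Delta-\Delta_r-\Delta_{r'}\ge\Delta_r+\Delta_{r'}$ because every depth is at most $\Delta$), so none of your fallbacks are needed---just take the balls in $L_1$/graph distance throughout, since disjointness fails for $L_\infty$ balls.
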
 
{\em Proof.} 
For each relay pin $r$, place an $L_1$ ball $B(r)$ centered at $r$ 
with radius equal to the depth of $r$, i.e., the distance from $r$ 
to its corresponding outer pin. These balls are diamonds that are either 
disjoint or intersect only along their boundaries. Since each relay pin 
$r$ is incident to at least one path in $J$, and no other defect lies in 
the interior of $B(r)$, such a path contains at least 
$\operatorname{depth}(r)$ edges before leaving $B(r)$. Since the interiors 
of the balls are disjoint, these edge sets are disjoint for distinct relay 
pins. Hence, the weight of $J$ is at least the sum of the depths of the 
relay pins, which, by definition, is the relay cost $R$ of $G$.
 \finito

A relay cover  that will arise several times in the proof is the
\emph{reduced primal component} of a join, which we define next.
The \emph{reduced primal component} of a join $J$ is the set of paths obtained from $J$ by removing (1) all dual paths
and (2) all primal paths connecting two primal defects within the same nucleus.

\subsection{Proof  of the Localization  Lemma 
using 
Corollary~\ref{canlemA} and Lemma~\ref{localred}}\label{prus}

Consider a non-local join $J$. By Corollary~\ref{canislocal}, $J$ is not
canonical. Hence, with respect to the canonical join conditions, exactly one of the
following cases must occur:
\begin{itemize}
\item[$\bullet$] \emph{Case 1:} $J$ violates condition~(1).
\item[$\bullet$] \emph{Case 2:} $J$ satisfies condition~(1) but violates condition~(2).
\item[$\bullet$] \emph{Case 3:} $J$ satisfies conditions~(1) and~(2) but violates condition~(3).
\end{itemize}

In Case~1, Corollary~\ref{canlemA} implies that the weight of $J$ is at least
$R+2\Delta/5-3\delta$.

In Case 2, there is a path $p$ connecting a primal defect in the
nucleus of a gadget $g$ to a primal defect in the nucleus of a
gadget $g'$, possibly with $g=g'$, such that $p$ is not contained
entirely within the outer box of $g$.
Since $p$ leaves the outer box of $g$, 
it must traverse the distance between a nucleus box and the exterior of the corresponding outer box at least twice:
if $g\neq g'$, once near each endpoint, and if $g=g'$, once when
leaving the outer box and once when returning to the nucleus.
Hence, the length of $p$ is at least $4(\Delta-\delta)$. 
Let $J'$ be the reduced primal component of $J$. After removing $p$
from $J'$, if present, the resulting set $J''$ is still a relay cover.
Thus, Lemma~\ref{localrelaycover} implies that its weight is at least
the total relay cost $R$. Therefore, since $p$ has length at least
$4(\Delta-\delta)$, the weight of $J$ is at least
$R+4(\Delta-\delta)$.

In Case~3, $J$ contains a dual path $q$, incident to a dual defect of a gadget $g$, that is not contained entirely within the red box of $g$.  Let $J'$ be the reduced primal component of $J$. The weight of $J'$ is at least $R$. 
 Hence, the weight of $J$ is at least $R+k$, where $k$ is the number of edges of $q$ whose dual edges lie in the red box of $g$ but are not contained in any path of $J'$. By the argument in the proof of the  Separation Lemma~\ref{seplem}, primal paths associated with gadgets other than $g$ do not reach the red box of $g$. Thus, it suffices to consider the paths incident to the relay pins of $g$. Lemma~\ref{localred} therefore implies that $k\ge 2\Delta/5-3\delta-1$, and hence the weight of $J$ is at least $R+2\Delta/5-3\delta-1$.

\subsection{Proof of Lemma \ref{canlemAS}}\label{canlemAP}

Assume that $J$ violates one of the matching conditions~(1A), (1B), or~(1C), but every path in $J$ incident to an outer or relay pin has length less than $3(\Delta-\delta)$.

Let $J'$ be the  reduced primal component of $J$. Since the matching conditions do not concern the paths   $J\setminus J'$, $J'$ also violates the same matching condition.   
While $J'$ is not a join, it satisfies the following properties:
\begin{itemize}
\item[(J1)] every relay and outer pin has odd $J'$-degree;
\item[(J2)] for each gadget, the parity of the sum of the $J'$-degrees of its primal nucleus defects  equals the parity of the gadget.
\end{itemize}
Apart from the parity constraint in~(J2), the $J'$-degrees of the primal defects in the nucleus are unrestricted and may even be zero.

  We establish the lower bound 
  $R+2\Delta/5-2\delta$  
on the weight of $J$ by proving that it also holds for $J'$.

The strict upper bound $3(\Delta-\delta)$ on path lengths ensures that
relay and outer pins can only be connected to defects in the same gadget or in an adjacent gadget.
We proceed by analyzing these possibilities and
iteratively updating $J'$  until the violated matching conditions are restored. 
Each update is 
the XOR with either a cycle or a path whose endpoints lie in a nucleus.  
We show that the overall  transformation decreases the weight of $J'$
by at least $2\Delta/5-2\delta$. 
The claim then follows from the fact that the updated $J'$ is a relay cover.

\begin{lemma} 
The weight of $J'$ is at least $R+2\Delta/5-2\delta$. 
\end{lemma}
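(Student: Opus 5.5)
Our plan is to transform $J'$ step by step into a relay cover $J''$ whose weight is at most $\mathrm{wt}(J')-(2\Delta/5-2\delta)$. Lemma~\ref{localrelaycover} then gives $\mathrm{wt}(J'')\ge R$, which yields the claim. Each step XORs $J'$ with a cycle, or with a path whose endpoints lie in a nucleus. Such a step leaves the parities (J1) and (J2) unchanged, so no extra bookkeeping is needed beyond tracking weight. We also require that every pin keeps positive degree, so that the final set is still a relay cover.

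\textbf{Localizing the violation.} By the bound $3(\Delta-\delta)$ on path lengths, every path of $J'$ at an outer or relay pin ends at a defect of the same gadget or of an adjacent gadget. Moreover, outer pins of distinct gadget pairs are at distance at least $2\Delta$ apart in most configurations. We therefore fix one violated condition and argue locally. We split into three cases.

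\emph{Case (1A): some pin has $J'$-degree at least $3$.} By (J1) the degree is odd. We pair two of its paths and reroute them; for example, we XOR with the cycle they form together with a shortest connection. This removes at least the length of the shorter path, which is at least the distance from the pin to its nearest other defect, roughly $\Delta-\delta$. The saving is therefore larger than $2\Delta/5-2\delta$.

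\emph{Case (1B): an outer pin is matched to something other than its two relay pins.} The only candidates within reach are another outer pin of an adjacent gadget at distance at least $2\Delta$, or a nucleus defect at distance at least $2(\Delta-\delta)$. We replace the path by a path from the pin to the nearer relay pin, of length at most $\Delta$. We then reconnect the displaced endpoint through nucleus-to-nucleus paths, XORed and discarded since they are not in the reduced component. The net saving is at least $\Delta-2\delta$.

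\emph{Case (1C): a relay pin's path exceeds its permitted length or hits a forbidden endpoint.} Here we compare it with the canonical replacement: a path to its own outer pin, of length depth $\le\Delta$, or to its nucleus pin, of length $\le\Delta$. The thresholds $7\Delta/5$ and $12\Delta/5$ are tuned so that the excess is at least $2\Delta/5$ minus offset losses of size $\delta$ or $2\delta$.

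The main obstacle is accounting, in the third case, for relay pins matched to each other, and for chains of pins across adjacent gadgets. In such configurations one violation may be "paid for" by the diamonds $B(r)$ already charged in Lemma~\ref{localrelaycover}. To handle this we will argue directly with the diamonds: a violating path must leave the diamond of its relay pin in a direction that wastes at least $2\Delta/5-2\delta$ edges outside all diamonds, or inside the diamond of a pin it does not serve. We will bound this waste using the relay pin geometry (equidistance from outer and nucleus pins, and depths in $[\Delta-\delta,\Delta]$). If this direct charging works, the iterative rerouting is needed only for the degree case (1A).
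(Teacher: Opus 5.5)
There is a genuine gap. Your frame (iterated XOR updates preserving (J1) and (J2), finished by Lemma~\ref{localrelaycover}) is the paper's, and your treatment of pure length violations matches the paper's Step~3; that step replaces an over-long path whose endpoints are admissible by a shortest one. The other updates, however, do not give the savings you claim.

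In (1A), XORing with $p_1\cup p_2\cup s(a,b)$ changes the weight by $d(a,b)-|p_1|-|p_2|$, not by $-\min(|p_1|,|p_2|)$. This vanishes whenever the pin lies on a geodesic from $a$ to $b$, which is the typical geometry here. An outer pin $o$ lies on the shortest path between its relay pins $r_g,r_h$. So pairing $o\to r_g$ with $o\to r_h$ saves nothing when these paths are shortest, and it creates a new cross-gadget relay--relay path.

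In (1B), replacing $o\to n$ by $o\to r$ flips the parity of $r$, which contradicts your stated invariant. This is harmless only as a terminal step: Lemma~\ref{localrelaycover} needs only positive degree, and simply deleting $o\to n$ already gains $2(\Delta-\delta)$. After such a step, though, parity is no longer available for later steps.

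The real gap is the forbidden endpoint in (1C). Under the length bound, the only forbidden endpoint is the relay pin $r_h$ across the outer pin $o$ that $r$ shares with the neighbouring gadget. A shortest $r\to r_h$ path has length exactly $\operatorname{depth}(r)+\operatorname{depth}(r_h)$ and lies inside $B(r)\cup B(r_h)$. Hence no modification of that single path that keeps both $r$ and $r_h$ covered saves anything. For the same reason, your fallback principle (a violating path must waste $2\Delta/5-2\delta$ edges outside the diamonds) is false in exactly the case you flag as the main obstacle.

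The missing idea is that here the surplus weight is not on the violating path at all. It is forced elsewhere by parity, so the update must be chosen by inspecting the neighbouring pin. The paper's Step~1 (Table~\ref{cases}) does this as follows.
\begin{itemize}
\item[$\bullet$] Case (i), where $o$ is joined to a nucleus defect, is split according to what the relay pin of $o$ is joined to.
\item[$\bullet$] Case (ii), the path $r\to r_h$, is split according to what $o$ is joined to. By (J1), $o$ has odd degree, so it meets $r$, $r_h$, or a nucleus defect.
\end{itemize}
The updates remove paths at both the offending pin and its neighbour and insert only shortest paths of relay-depth length. For instance, if $r\to r_h$ occurs together with $o\to r$, XOR with the cycle through $r,r_h,o$. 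This replaces both paths by a shortest $o\to r_h$ path and changes the weight by at most
$\operatorname{depth}(r_h)-(\operatorname{depth}(r)+\operatorname{depth}(r_h))-\operatorname{depth}(r)=-2\operatorname{depth}(r)\le-2(\Delta-\delta)$.
The degree cases (iii)--(vi) are treated the same way, and only afterwards are path lengths restored.

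Your charging idea can also be made rigorous, but the charging must be global rather than per path. The paths of $J'$ are edge-disjoint, and each is at least as long as the sum of the depths of its relay endpoints. Therefore
\[
\operatorname{wt}(J')-R=\sum_r(\deg r-1)\operatorname{depth}(r)+\sum_P\operatorname{exc}(P),
\]
where $\operatorname{exc}(P)\ge 0$ is the length of $P$ minus those depths. In the case $r\to r_h$, parity at $o$ then forces one of two things. Either $o$ has a path into a nucleus, which has $\operatorname{exc}\ge 2(\Delta-\delta)$, or $r$ or $r_h$ has a third incident path, which contributes at least $2(\Delta-\delta)$ through the first sum. Your proposal does not supply this step.
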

{\em Proof:} 
Consider the following \emph{relaxed} versions of conditions~(1B) and~(1C), in which path-length bounds are ignored and outer and relay pins are allowed to be incident to multiple paths. In the relaxed version of~(1B), every outer pin is only allowed to be connected to the two relay pins of the gadgets sharing that pin. In the relaxed version of~(1C), every relay pin of a gadget is only allowed to be connected to its corresponding outer pin, a defect in the nucleus of the same gadget, or another relay pin of the same gadget.

\medskip
\noindent
\emph{Step 1. Eliminate relaxed matching violations.} 
We first eliminate violations of the relaxed conditions~(1B) and~(1C).
Since every path in $J'$ has length less than $3(\Delta-\delta)$, 
the only possible violations are:
\begin{itemize}
\item[(i)] 
a defect in the nucleus of a gadget is connected to an outer pin of the same gadget; or
\item[(ii)]
a relay pin $r$ is connected to a relay pin $r'$ in an adjacent gadget,
with $r$ and $r'$ associated with the same outer pin.
\end{itemize}  
 Up to symmetry, case~(i) is represented by the five entries in the first  column of Table~\ref{cases}, which are distinguished according to the connectivity of the relay pin associated with the outer pin. Case~(ii) is represented by the last entry in the first column and the first entry in the second column, which are distinguished according to the connectivity of the common outer pin.

{\setlength{\tabcolsep}{9pt}\renewcommand{\arraystretch}{1}
\begin{longtable}{|l|l|l||l|l|l|}
\hline
Old & Updated & Gain & Old & Updated & Gain\\ \hline
\endfirsthead
\hline
Old & Updated & Gain & Old & Updated & Gain\\ \hline
\endhead
\noalign{\vskip 2pt}

(i) 
\tabfigscale{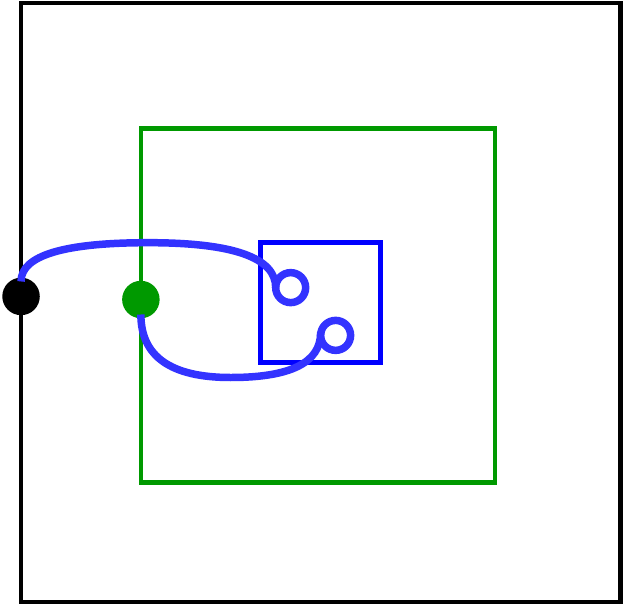}{0.18} 
 & \tabfigscale{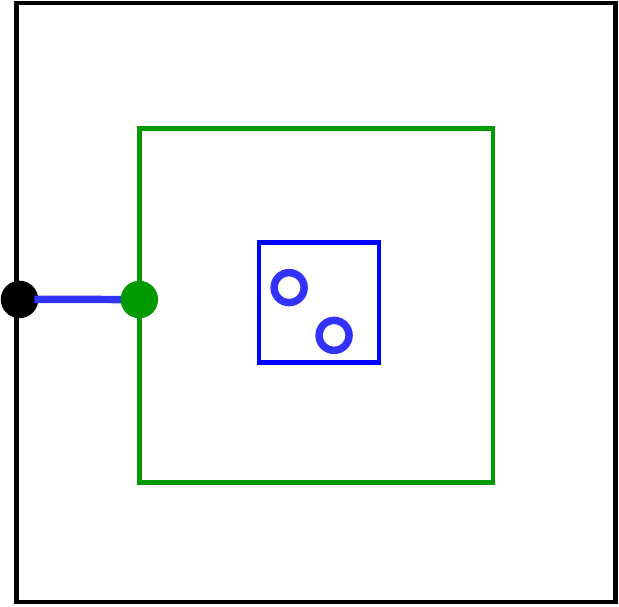}{0.18}  & $2(\Delta-\delta)	$

 &

 (ii) 
  \tabfigscaletrim[48mm]{0.18}{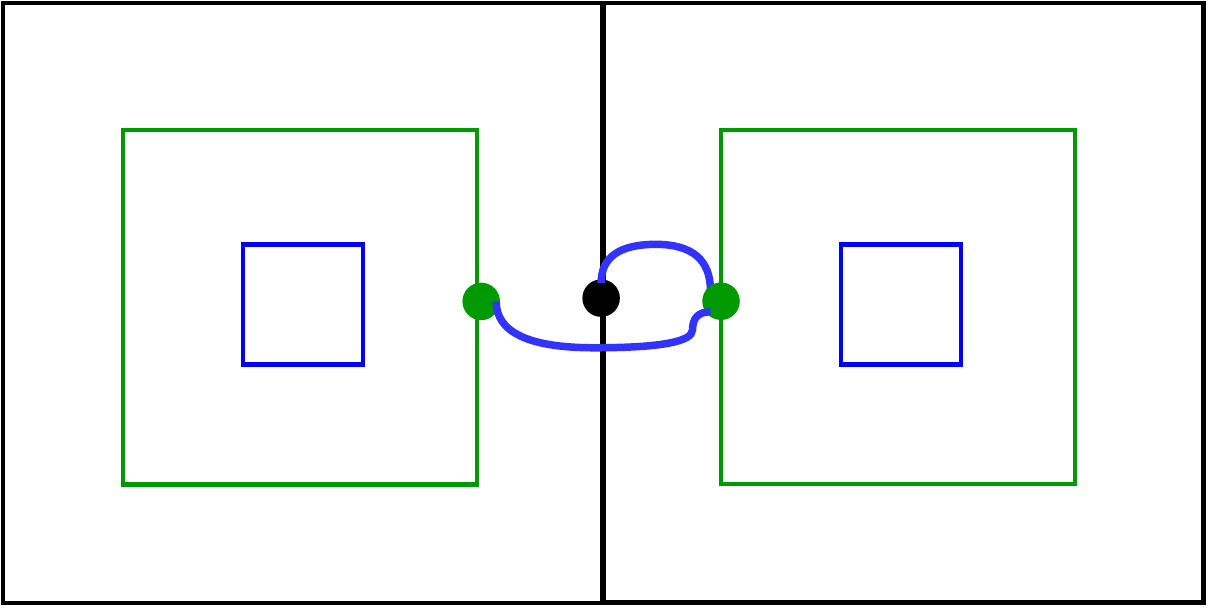} 
 & \tabfigscaletrim[48mm]{0.18}{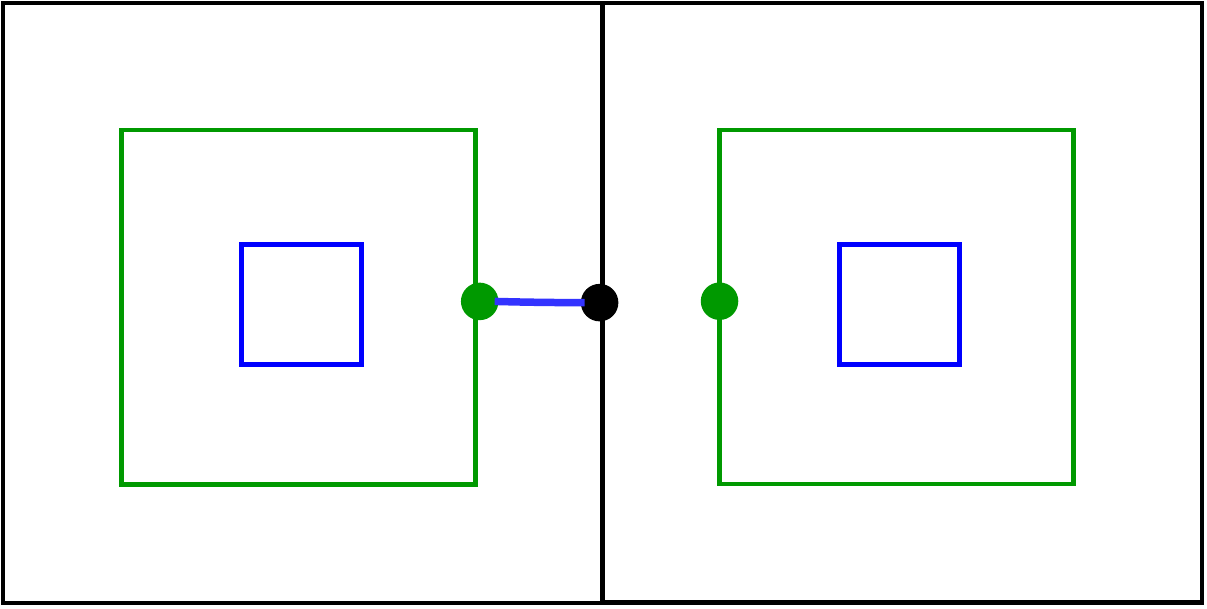}  & $2(\Delta-\delta)$

 \\ \hline
 
 \noalign{\vskip 2pt}

(i)  
 \tabfigscale{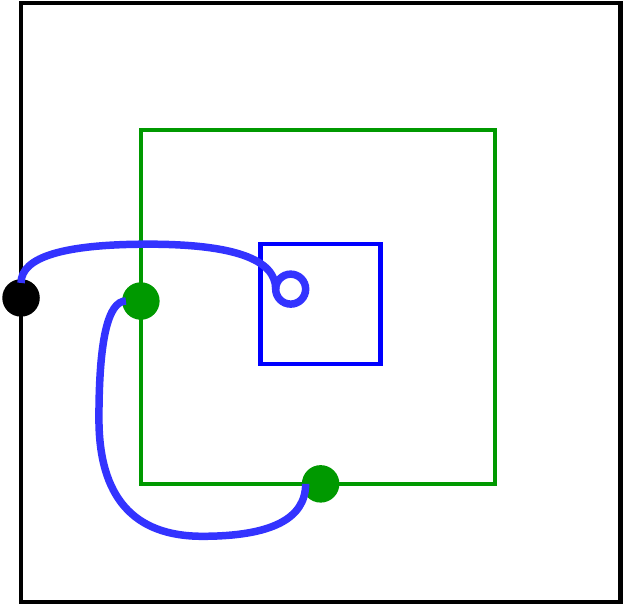}{0.18} 
 & \tabfigscale{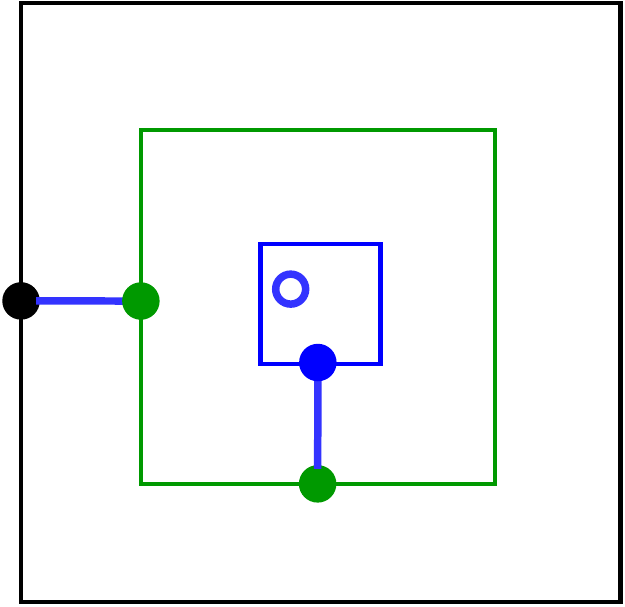}{0.18}  & $2(\Delta-\delta)	$
 
 &
(iii)
  \tabfigscale{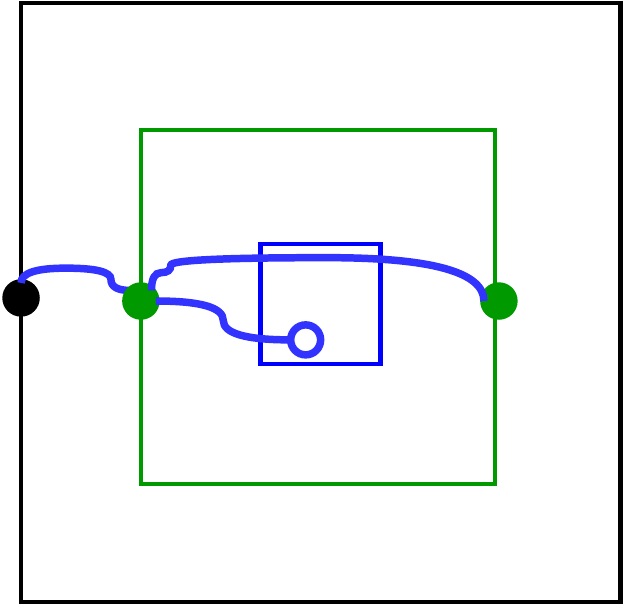}{0.18} 
 & \tabfigscale{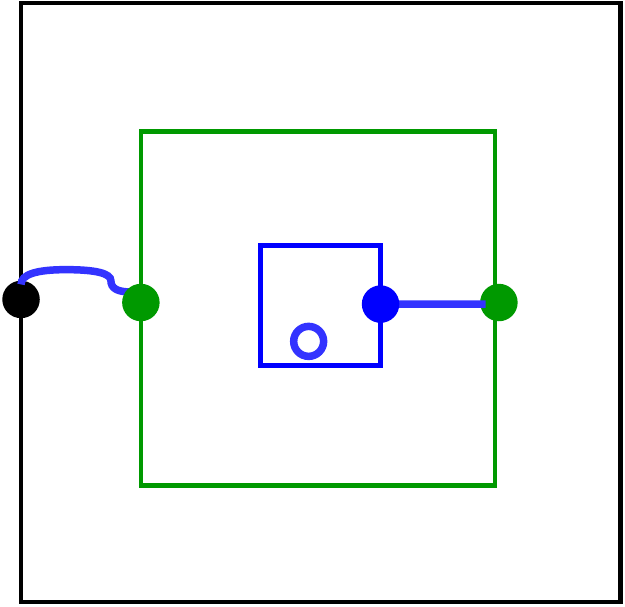}{0.18}  & $2(\Delta-\delta) $

 \\ \hline
 \noalign{\vskip 2pt}
 
(i) 
\tabfigscale{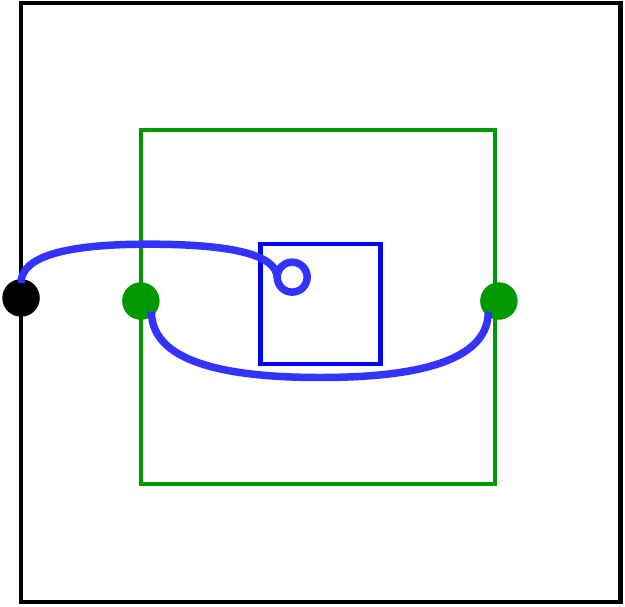}{0.18} 
 & \tabfigscale{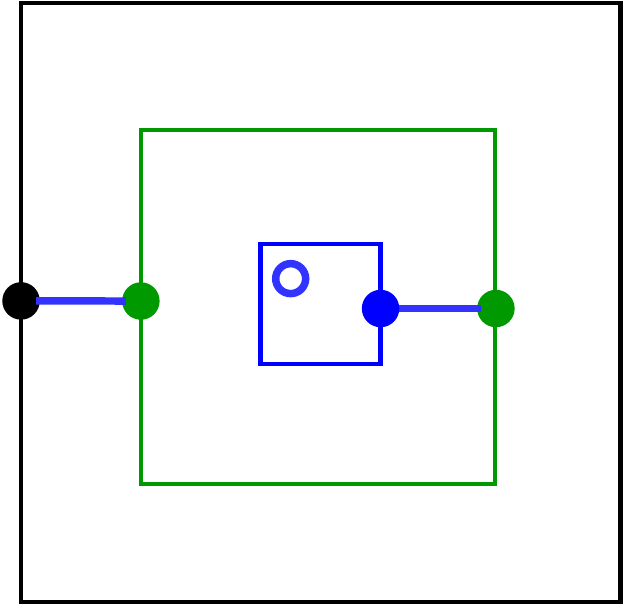}{0.18}  & $2(\Delta-\delta)$
 
 &

 (iii) 
 \tabfigscale{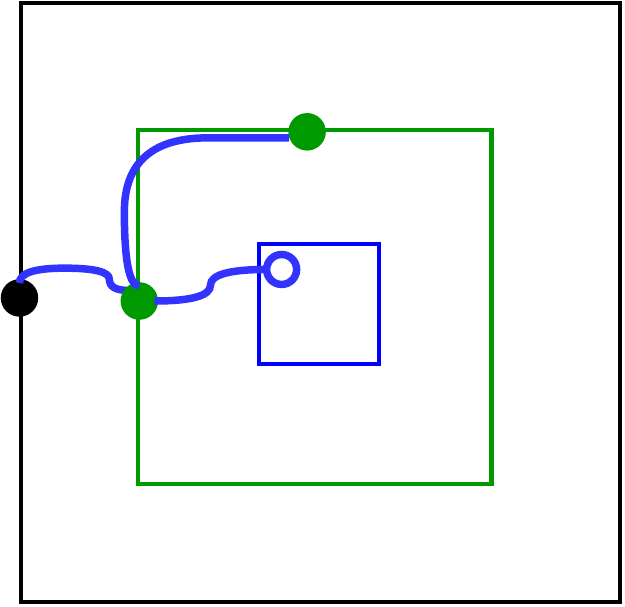}{0.18} 
 & \tabfigscale{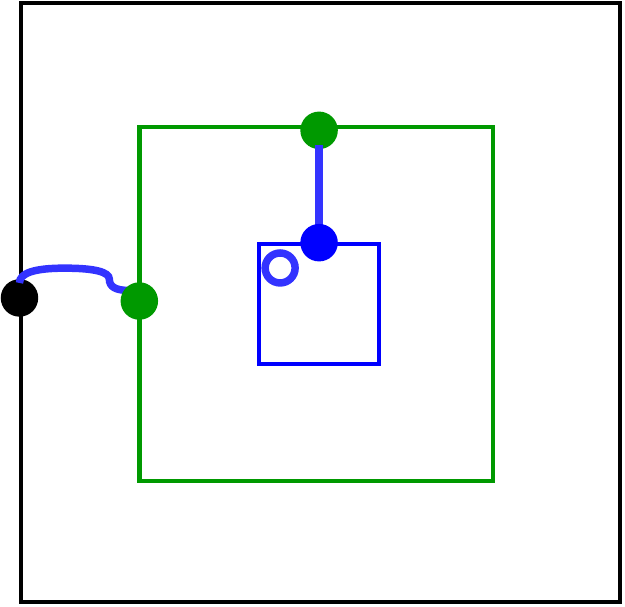}{0.18}  & $2(\Delta-\delta)$

 \\ \hline
 \noalign{\vskip 2pt}
(i)
  \tabfigscale{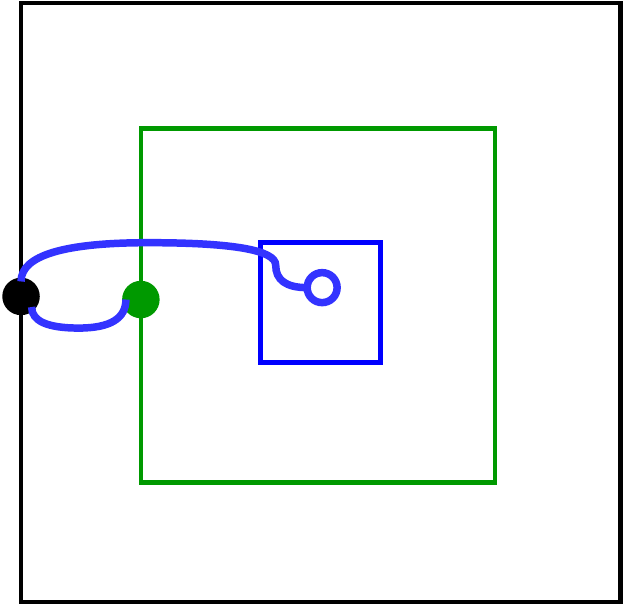}{0.18} 
 & \tabfigscale{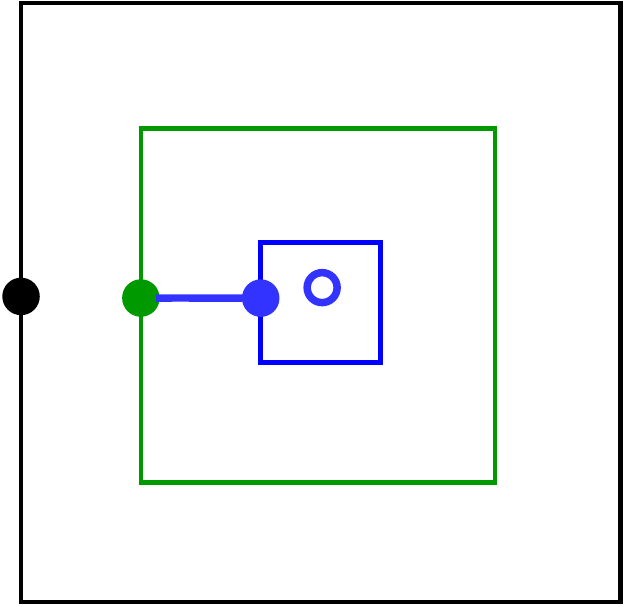}{0.18}  & $2(\Delta-\delta) $
 &
 
(iv)
  \tabfigscale{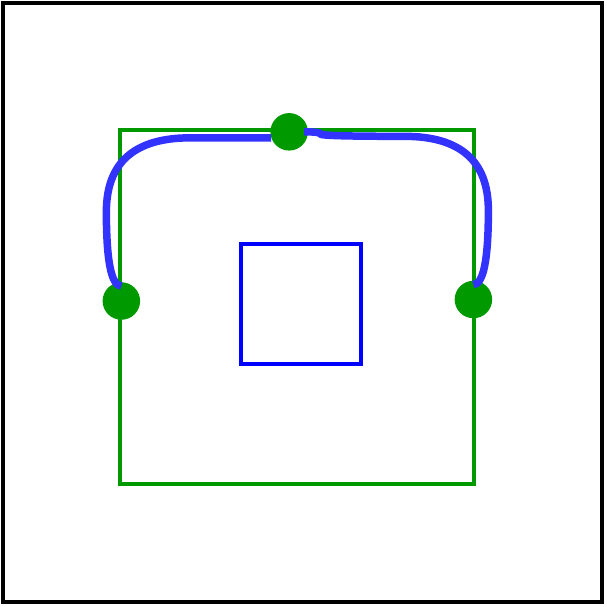}{0.18} 
 & \tabfigscale{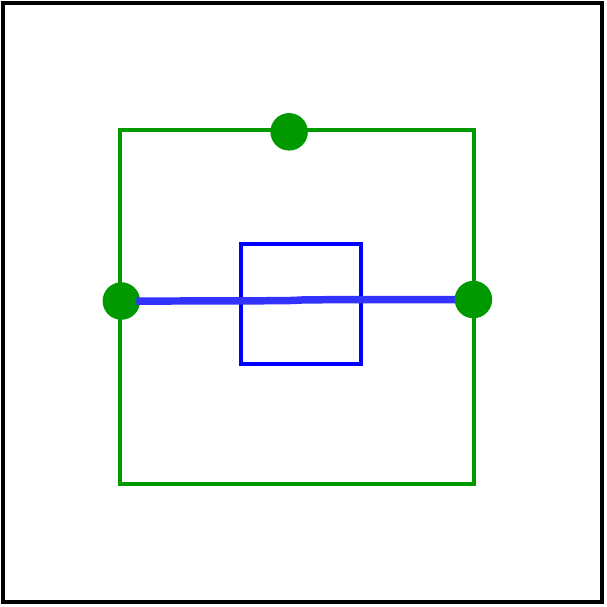}{0.18}  & $2(\Delta-\delta) $
 
 \\ \hline

 \noalign{\vskip 2pt}
 
 \hspace{-1em}
 $\begin{array}{c}\mbox{(i)}\\\mbox{(ii)}\end{array}$  \hspace{-0.5em}
  \tabfigscaletrim[48mm]{0.18}{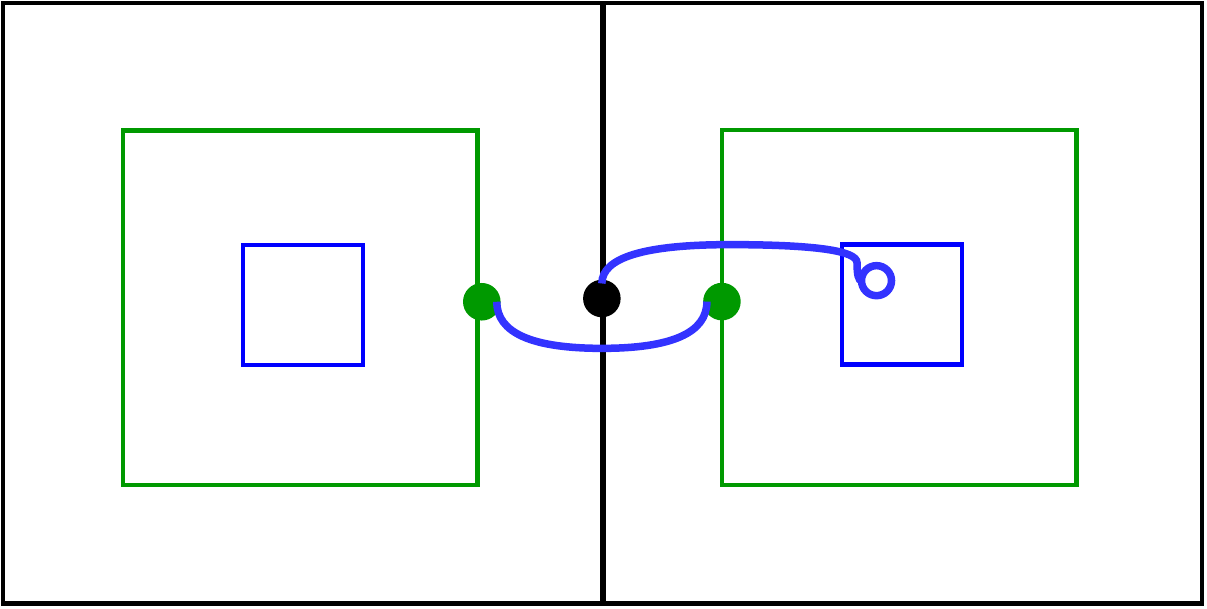} 
 &  \tabfigscaletrim[48mm]{0.18}{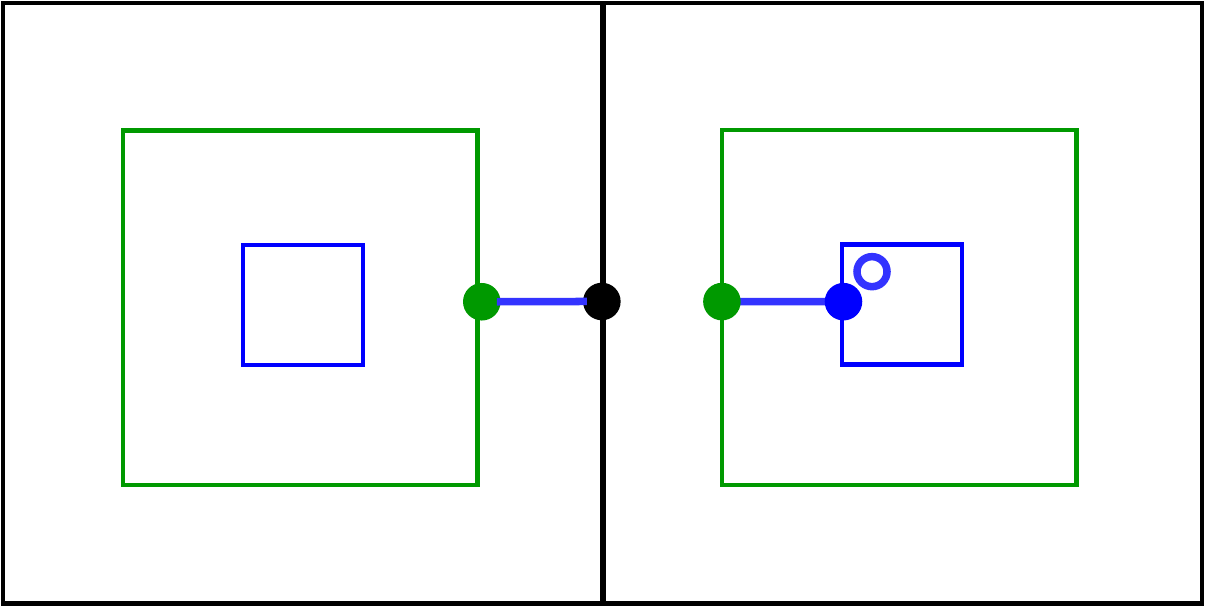}  & $2(\Delta-\delta)$
 &
 (iv) 
 \tabfigscale{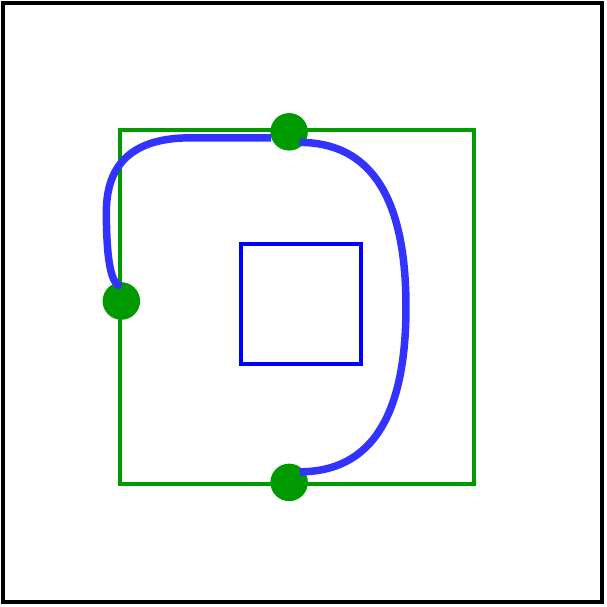}{0.18} 
 & \tabfigscale{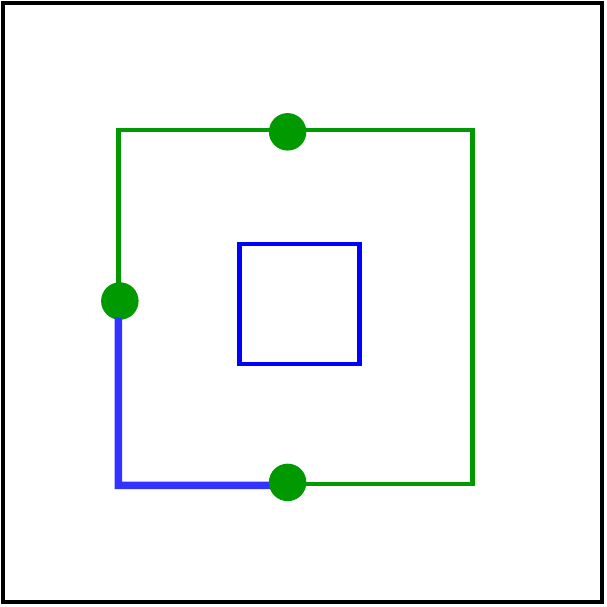}{0.18}  & $2(\Delta-\delta)$ 
 
 \\ \hline

 \caption{\textbf{Updating $J'$}. 
  Outer, relay, and nucleus pins are shown as filled black, green, and blue circles, respectively. The unfilled blue circles represent nucleus defects, which may also be nucleus pins. 
 The cases are shown up to symmetry. 
 }
 \label{cases}
\end{longtable}}

Updating $J'$ as shown, with the new paths chosen to be shortest paths, decreases its weight by at least $2(\Delta-\delta)$. 
Note that the nucleus pins shown must exist because the nucleus is nonempty in these cases.  
Repeat this process until no such violations remain, and  
let $J''$ 
 denote the resulting set of paths.

\medskip
\noindent	
\emph{Step 2. Eliminate violations of condition (1A).} 
If, after resolving violations of relaxed  conditions~(1B) or~(1C), condition~(1A) is violated, then one of the following must occur:
\begin{enumerate}
\item[(iii)] a relay pin is connected to an outer pin, another relay pin, and a defect in the nucleus of the same gadget;
\item[(iv)] a relay pin is connected to two relay pins in  the same gadget;
\item[(v)] the set of defects connected to a relay pin contains a nonempty  even-cardinality subset $S$ of defects in the nucleus of the same gadget; or 
\item[(vi)] 
two non-nucleus defects are connected by at least three paths, with the number of such paths being odd. 
\end{enumerate} 
Note that while $J'$ is initially acyclic, since it is the reduced primal component of a join and joins are acyclic by definition, the update process may create multiple paths connecting the same pair of defects, which is why we need to handle case (vi).

In cases~(iii) and~(iv), update $J''$ as shown in the second and third entries and the last two  entries, respectively, of the second column of Table~\ref{cases}, with the new paths chosen to be shortest paths. 
 In case~(v), update $J''$ by removing the paths to the defects in $S$. 
In case~(vi), update $J''$ by retaining only one of the paths. 
The weight of $J''$ decreases by at least  $2(\Delta-\delta)$ in all cases. 
Repeat this update process until no such violations remain.

\medskip
\noindent
\emph{Step 3. Restore the path-length bounds.} 
At this stage, the only possible violations  of the matching conditions are  violations of the path-length bounds in condition~(1C). 
Such
violations are eliminated by replacing each offending path in $J''$
with a shortest path. If the offending path joins a relay pin to an
outer pin or a nucleus defect, the shortest path has length at most
$\Delta$, so the weight of $J''$ decreases by at least   
  $7\Delta/5-\Delta = 2\Delta/5$. 
  If the offending path joins two relay pins, the shortest path has
length at most $2(\Delta+\delta)$, so the weight of $J''$ decreases by
at least  
  $12\Delta/5-2(\Delta+\delta) = 2\Delta/5-2\delta$.  
Thus, in either case, the weight of $J''$ decreases by at least
$2\Delta/5-2\delta$.

Note that each update preserves properties~(J1) and~(J2), since it is the XOR with either a cycle or a path whose endpoints lie in a nucleus.

In the worst case, the weight of $J''$ is smaller than that of $J'$ by at least $2\Delta/5-2\delta$, since $J'$ has at least one violation. 
The lemma now follows since, by~(J1), $J''$ is a relay cover and hence,
by Lemma~\ref{localrelaycover}, its weight is at least $R$.

\subsection{Proof of Lemma \ref{canlemAL}}\label{canlemAPP}

Let $J$ be a join containing a primal path of length at least
$3(\Delta-\delta)$ whose endpoints are not both defects in the same nucleus.

 As in the proof of Lemma~\ref{canlemAS}, we obtain a lower bound on the
weight of $J$ by establishing the same bound on its reduced primal component
$J'$. Since the endpoints of the long path do not belong to the same nucleus,
$J'$ contains the long path.

We begin with an overview of the proof. 
Unlike the proof of Lemma~\ref{canlemAS}, we do not update $J'$
constructively. Instead, we introduce an arbitrary locally optimal join $I$ of $G$ and work with the reduced primal component  
 $I'$ of $I$. In 
Lemma~\ref{auxgraph}, we combine $J'$ and $I'$ into an auxiliary weighted multigraph\footnote{Throughout, multigraphs may have parallel edges but no self-loops.}
$H$. Its vertices correspond to the non-nucleus defects and the contracted
nuclei, while its edges correspond to the paths in $J'$ and $I'$. The
even-degree structure of $H$ allows us to decompose it into cycles, which play
the same role as the update cycles in the proof of
Lemma~\ref{canlemAS}.
 
In Lemma~\ref{cycledecomp}, we further decompose each cycle into basic 
structures. Finally, Lemma~\ref{lemcyclesreduced} shows that each basic 
structure contributes at least as much weight from $J'$ as from $I'$, while the
structure containing the edge corresponding to the long path contributes at
least $\Delta-3\delta$ more. Since $I'$ is a relay cover, its weight is at
least $R$, and the  lower bound in 
Lemma \ref{canlemAL} on the weight of $J$ follows.

Our goal is to establish  the following.

\begin{lemma} \label{decomp}  The weight of $J'$ is at least $R+\Delta-3\delta$. 
\end{lemma}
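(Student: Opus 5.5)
We follow the overview above: compare $J'$ with the reduced primal component $I'$ of an arbitrary locally optimal join $I$ of $G$, and charge the weight of $J'$ against that of $I'$ cycle by cycle. Fix $I$ so that every outer pin is matched to one of its relay pins and every relay pin to its outer pin or a nucleus pin by a shortest path. By Lemma~\ref{localrelaycover}, $I'$ is a relay cover, so its weight is at least $R$. In fact, each relay-pin path of $I'$ has length exactly the relay depth, so the weight of $I'$ equals $R$. It therefore suffices to show
\[
w(J')\ \ge\ w(I')+\Delta-3\delta .
\]

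\textbf{Step 1: the auxiliary multigraph $H$.} Form $H$ by contracting each nucleus (together with its primal nucleus defects) to a single vertex and keeping each outer pin and relay pin as its own vertex. Every path of $J'$ and of $I'$ becomes an edge of $H$ weighted by its length; paths with both endpoints in the same nucleus do not occur, since they were removed. By (J1), each outer and relay pin has odd degree in both $J'$ and $I'$. By (J2), the degree of each contracted nucleus in $J'$ and in $I'$ has the gadget's parity. Hence the symmetric union $H=J'\uplus I'$ has all degrees even, and its edges decompose into closed trails. We orient each trail to alternate, as far as possible, between $J'$-edges and $I'$-edges. This is possible at pins because each pin has exactly one $I'$-edge. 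At a nucleus vertex, excess edges of the same type are paired off arbitrarily.

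\textbf{Step 2: basic structures (Lemma~\ref{cycledecomp}).} Split each closed trail at nucleus vertices into \emph{basic structures}: maximal segments whose interior vertices are pins. Inside a segment, $J'$- and $I'$-edges alternate. Each $I'$-edge is a relay-to-outer or relay-to-nucleus shortest path of length equal to a depth $\Delta_i\in[\Delta-\delta,\Delta]$.

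\textbf{Step 3: comparing weights (Lemma~\ref{lemcyclesreduced}).} Within a segment, we pair each $J'$-edge with an adjacent $I'$-edge and apply the triangle inequality along the segment. The sum of the $J'$-lengths is at least the $L_1$ distance between the segment's endpoints. Because of the relay-box geometry, this distance dominates the sum of the $I'$-lengths up to the offset slack. This gives $w_{J'}(\sigma)\ge w_{I'}(\sigma)$ for every basic structure $\sigma$.

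For the structure containing the long path $p$, we use $|p|\ge 3(\Delta-\delta)$. Since $p$ does not join two defects of the same nucleus, it replaces at most two $I'$-edges of total length at most $2\Delta$ in the charging. This leaves a surplus of at least $3(\Delta-\delta)-2\Delta=\Delta-3\delta$. Summing over all structures then yields the displayed inequality.

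\textbf{Main obstacle.} The hardest step is the per-structure inequality $w_{J'}(\sigma)\ge w_{I'}(\sigma)$. Segments can wander through several adjacent gadgets, and at a nucleus vertex the pairing of edges is not canonical, so a careless pairing could match a $J'$-edge against $I'$-edges of unrelated gadgets. I would first try to choose the pairing at each nucleus so that $I'$-edges from that nucleus are matched with $J'$-edges entering it through the same side. Then each basic structure is a chain that alternates outer pin, relay pin, nucleus, and the needed distances are bounded below by the depths via the disjoint $L_1$ diamonds used in Lemma~\ref{localrelaycover}. Applying that diamond argument inside each structure should give the inequality directly, with the long path contributing its extra length beyond the diamonds it crosses.
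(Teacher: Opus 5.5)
Your architecture matches the paper's: you merge $J'$ and $I'$ into an even-degree multigraph $H$ on pins and contracted nuclei, decompose it, compare weights piece by piece, and extract $3(\Delta-\delta)-2\Delta$ from the long path. However, the inequality that carries the whole proof, $w_{J'}(\sigma)\ge w_{I'}(\sigma)$ for every piece, is not established, and the reason you give for it is false. Cutting trails at nucleus vertices and bounding the $I'$-weight by the $L_1$ distance between a segment's endpoints already fails for one simple segment. Take the $I'$-edge from a relay pin $r$ to the nucleus of its gadget, followed by a $J'$-edge from $r$ back to a defect of that same nucleus. The endpoints coincide in $H$, and in the lattice they lie within $o(\Delta)$ of each other. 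Yet the $I'$-weight is $\operatorname{depth}(r)\ge\Delta-\delta$. Allowing an ``offset slack'' per segment does not rescue this either: such losses accumulate over arbitrarily many segments, whereas the bound $R+\Delta-3\delta$ does not depend on the number of gadgets.

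The missing idea is that the charging is local at relay pins. The pairing of edges at nucleus vertices, which you identify as the main obstacle, plays no role.
\begin{itemize}
\item Every $I'$-edge has exactly one relay-pin endpoint $r$, and every relay pin has exactly one $I'$-edge. So wherever a simple cycle (or trail) uses the $I'$-edge of $r$, the other edge it uses at $r$ is a $J'$-edge $f$. Charge the $I'$-edge to $f$.
\item Each $J'$-edge receives at most two charges. A single charge is covered because $\operatorname{depth}(r)$ is the distance from $r$ to the nearest other defect.
\item Two charges from relay pins $r_1,r_2$ are covered because their depth-diamonds have disjoint interiors, so $\operatorname{depth}(r_1)+\operatorname{depth}(r_2)\le d(r_1,r_2)\le w(f)$.
\end{itemize}
These are exactly the paper's connections and double connections (Lemmas~\ref{cycledecomp} and~\ref{lemcyclesreduced}). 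The surplus follows because the edge of $p$ carries at most two charges of at most $\Delta$ each.

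Two smaller points need fixing as well.
\begin{itemize}
\item You cannot in general fix $I$ to avoid relay-to-relay paths. The even garbage-collection and equality gadgets have empty nuclei, and any configuration assigning \emph{True} to two of their pins forces such a path. This is unavoidable, for instance, when two even garbage-collection gadgets are glued together. The paper handles this by adding a contracted-nucleus node for every gadget and splitting each such path, of length exactly $2\Delta$, into two edges of weight $\Delta$. This restores (H2)--(H3).
\item A pin of $J'$-degree at least $3$ is traversed once with two $J'$-edges, so strict alternation fails there. This is harmless once the charging is done as above.
\end{itemize}
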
 
First,  recall from 
 the proof of  Lemma~\ref{canlemAS}, that $J'$ satisfies the following properties:
\begin{itemize}
\item[(J1)] every relay and outer pin has odd $J'$-degree;
\item[(J2)] for each gadget, the parity of the sum of the $J'$-degrees of its primal nucleus defects  equals the parity of the gadget.
\end{itemize}

To establish Lemma~\ref{decomp}, consider any locally optimal join $I$ of $G$, 
as defined in Definition~\ref{locoptj}. 
 Such a join can be obtained from any configuration of $G$ by finding, for each gadget $g$, a minimum-weight $S$-join, where $S$ is the signature assigned to $g$ by the configuration, and taking the union of these joins.

 Let $I'$ be the reduced primal component of $I$.  Then: 
 
\begin{itemize} 
\item[(I1)] every relay and outer  pin has $I'$-degree~1; namely, every outer pin is matched to an associated relay pin in one of the two gadgets containing it, while every relay pin is matched to its corresponding outer or nucleus pin in the same gadget, or to another relay pin in the same gadget; 
\item[(I2)] 
every path in $I'$ is a shortest path;

\item[(I3)] for each gadget, the parity of the sum of the $I'$-degrees of its primal nucleus defects equals the parity of the gadget. 
\end{itemize}
Note that (I2) holds because in a locally optimal join all paths incident to relay pins are shortest paths, and these are the only paths remaining in $I'$. 

We combine $J'$ and $I'$
 into an auxiliary weighted multigraph $H$ whose even-degree structure will allow us to decompose it into cycles. 
Contract the nucleus of each gadget to a single node, and also include
a contracted nucleus node for gadgets with empty nuclei (the even
garbage-collection gadget and the equality gadget). The vertices of
$H$ are the relay and outer pins of $G$ and the contracted nuclei of
all gadgets.

For each path in $J'$ connecting two defects, add to $H$ an edge between the corresponding nodes with weight equal to the length of the path. We refer to these edges as \emph{$J'$-edges}. Thus, by construction, the weight of $J'$ is the total weight of the
$J'$-edges.

Proceed similarly for the paths in $I'$, except for paths connecting two
relay pins in the same gadget. Since $I$ is locally optimal, if it
contains a path connecting two relay pins in the same gadget, then the
nucleus of that gadget must either be empty or consist of a single node.
This can occur only for the wire, garbage-collection, and equality
gadgets. Thus, the length of such a path is exactly $2\Delta$.
For each such path in a gadget $g$, replace it by two edges connecting
the relay pins to the contracted nucleus of $g$, each of weight $\Delta$.
We refer to the resulting edges as \emph{$I'$-edges}. Similarly, by
construction, the weight of $I'$ is the total weight of the $I'$-edges.

\begin{lemma}[Properties of the auxiliary graph]\label{auxgraph}
 The weighted multigraph $H$ constructed above satisfies the following properties: 
\begin{itemize} 
\item[(H1)] every node of $H$ has even degree;	  
\item[(H2)] every relay and outer pin has exactly one incident  $I'$-edge; 
\item[(H3)] every $I'$-edge is associated with a relay pin, connects that pin either to
its corresponding outer pin or to the contracted nucleus of the same gadget,
and has weight equal to the depth of the relay pin; 
\item[(H4)] the weight of every  $J'$-edge is at least the $L_1$-distance  between the endpoints of its corresponding path in $J'$;
\item[(H5)] at least one  $J'$-edge has weight at least $3(\Delta-\delta)$. 
\end{itemize}
\end{lemma}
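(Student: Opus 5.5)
The plan is to verify the five properties one at a time, straight from the definitions of $J'$, $I'$ and the construction of $H$. Property (H1) is the only one that needs a real argument; the others are bookkeeping.

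For (H1), I would count the degree of each node of $H$ as its $J'$-degree plus its $I'$-degree. A relay or outer pin has odd $J'$-degree by (J1) and $I'$-degree exactly one by (I1); replacing a relay--relay path by two edges through the contracted nucleus leaves each relay pin's degree unchanged. So every relay or outer pin has even degree in $H$. For a contracted nucleus node of a gadget $g$, I would note that a $J'$-edge or $I'$-edge ends at this node exactly when the corresponding path ends at a primal defect of the nucleus of $g$. Paths with both endpoints in the same nucleus were removed when forming the reduced primal components, so no self-loops arise and nothing is double-counted. The $J'$-degree of the node is therefore the sum of the $J'$-degrees of the primal nucleus defects of $g$, which has the parity of $g$ by (J2). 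Likewise, the degree coming from $I'$-paths has the parity of $g$ by (I3). For gadgets whose $I'$ contains a relay--relay path (empty nucleus, or the wire, garbage and equality cases), the replacement adds two edges at the nucleus node, which does not change parity. For empty nuclei, the parity of the gadget is even and both contributions are even, since no defects lie there. The total degree is thus the sum of two numbers of equal parity, hence even.

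For (H2) and (H3), I would use (I1) together with local optimality and (I2). Each $I'$-path incident to a relay pin is a shortest path from that pin to its corresponding outer pin, to a nucleus pin, or, via the replacement, to the nucleus node. Since each relay pin is equidistant (distance equal to its depth) from its outer pin and its nucleus pin, and in the relay--relay case the depth is $\Delta$, every such edge has weight equal to the depth. The one step needing care is that a relay pin in a locally optimal join is matched to its corresponding nucleus pin rather than to an arbitrary nucleus defect; I would justify this from the earlier remark that minimum-weight $S$-joins can be chosen with relay pins matched to their corresponding nucleus or outer pins. I would fix $I$ to be such a join. Each outer pin has exactly one $I'$-edge, namely the one from its matched relay pin, and no other $I'$-edge touches it.

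Property (H4) is immediate, since any lattice path is at least as long as the $L_1$-distance between its endpoints. Property (H5) holds because the long path in the hypothesis does not have both endpoints in one nucleus, so it survives in $J'$ and yields a $J'$-edge of weight at least $3(\Delta-\delta)$. I expect the main obstacle to be the nucleus-node parity bookkeeping in (H1), together with the choice of $I$ that makes (H3) hold exactly.
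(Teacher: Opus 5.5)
Your proposal is correct and follows the paper's proof. The paper checks each property directly: (H1) from (J1) and (I1) at relay and outer pins and from (J2) and (I3) at contracted nuclei, (H2)--(H3) from the construction of the $I'$-edges, (H4) from path length versus $L_1$-distance, and (H5) from the hypothesis. Your extra care makes explicit what the paper folds into (I1): you fix $I$ so that relay pins are matched to their corresponding outer or nucleus pins, and you check that the relay--relay replacement preserves parity at the nucleus node.
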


{\em Proof.} For relay and outer pins, property~(H1) follows from~(J1) and~(I1), while for the contracted nuclei, it follows from~(J2) and~(I3).  
Properties~(H2) and~(H3) follow directly from the construction of the  $I'$-edges. Property~(H4) follows because the weight of each  $J'$-edge is the length of its corresponding path in $J'$, which is at least the $L_1$-distance between its endpoints. Finally, property~(H5) follows from the hypothesis of Lemma~\ref{canlemAL}.
\finito

Since every node of $H$ has even degree, $H$ can be decomposed into edge-disjoint simple cycles. The following lemma provides the key weight comparison on 	these cycles.

\begin{lemma}\label{lemcycles}
For every cycle $c$ in the decomposition of $H$, $w_{J'}(c) \geq w_{I'}(c)$, where $w_{I'}(c)$ and $w_{J'}(c)$ denote the total weights of the 
$I'$-edges and  $J'$-edges in $c$, respectively.

Moreover, 
 if $c$ contains an edge satisfying~(H5), then $w_{J'}(c) \geq w_{I'}(c) + \Delta - 3\delta$.
\end{lemma}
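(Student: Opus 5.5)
The plan is to charge each $I'$-edge of a cycle $c$ to adjacent $J'$-edges of $c$ and compare weights by the triangle inequality for the $L_1$-distance. By (H2) and (H1), every relay or outer pin on $c$ is entered by an $I'$-edge and left by a $J'$-edge, since each pin has exactly one incident $I'$-edge. So, as the paper's next step (Lemma~\ref{cycledecomp}) suggests, I would first split $c$ at the contracted-nucleus nodes. Each resulting segment is a walk between two contracted nuclei (or a closed walk through pins only) in which $I'$- and $J'$-edges alternate wherever a pin is passed. Each $I'$-edge is associated with one relay pin and, by (H3), goes either relay pin $\to$ its outer pin or relay pin $\to$ its nucleus. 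Its weight is the depth $\Delta_i$, and this is exactly the $L_1$-distance from the relay pin to the outer pin, and at most the distance to the nearest nucleus point.

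For the inequality $w_{J'}(c)\ge w_{I'}(c)$, I would treat each basic structure separately. Where a $J'$-edge joins two pins, or a pin and a nucleus, the $L_1$ bound (H4) controls its weight. The key geometric fact is that the relay pin sits at the midpoint between its outer pin and its nucleus pin (or the gadget center). Hence any point at distance less than $\Delta_i$ from a relay pin is not another defect, which is the same diamond argument as in Lemma~\ref{localrelaycover}. So every $J'$-edge leaving a relay pin has weight at least that pin's depth. A $J'$-edge between two pins that are both endpoints of $I'$-edges must be charged to both of them. For this I would use that the diamonds $B(r)$ have disjoint interiors. A $J'$-edge from $r$ to $r'$ has $L_1$-length at least $\operatorname{depth}(r)+\operatorname{depth}(r')$, because the straight segment crosses both diamonds. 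Similarly, an outer pin lies on the boundary of the diamonds of both adjacent relay pins, and the depth of its relay pin is charged only once. Summing over the segment gives $w_{J'}\ge w_{I'}$, since each $I'$-edge's relay pin is an endpoint of at least one $J'$-edge in the same segment and receives its own disjoint charge.

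For the strengthened statement, let $e$ be the $J'$-edge with weight at least $3(\Delta-\delta)$. I would bound the charge actually needed from $e$. Its two endpoints are pins or nuclei, and the charges assigned to $e$ in the accounting above total at most $2\Delta$ (at most one depth per endpoint, each at most $\Delta$). The slack is then at least $3(\Delta-\delta)-2\Delta=\Delta-3\delta$, as required.

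The main obstacle is making the charging rigorous when an endpoint of $e$, or of another $J'$-edge, is a contracted nucleus, or when several $I'$-edges at one relay pin point to the nucleus. In that situation the relay-to-nucleus distance can exceed the depth by up to the offset and the nucleus span. I would handle this by measuring from the relay pin's nucleus pin and absorbing the discrepancy into the $\delta$ terms. I would also double-check that each relay pin is charged exactly once along the cycle. That requires verifying that the alternation forced by (H2) pairs each $I'$-edge with a distinct incident $J'$-edge endpoint.
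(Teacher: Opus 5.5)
Your charging scheme is the paper's argument: each $I'$-edge is charged to the $J'$-edge at its relay-pin endpoint, using $w\ge\operatorname{depth}(r)$ from the fact that no defect lies within distance $\operatorname{depth}(r)$ of $r$, $w\ge\operatorname{depth}(r_1)+\operatorname{depth}(r_2)$ from the disjoint diamonds, and slack $3(\Delta-\delta)-2\Delta$ on the (H5) edge; this is exactly the decomposition into connections, double connections and lone $J'$-edges in Lemmas~\ref{cycledecomp} and~\ref{lemcyclesreduced}. The ``obstacle'' in your last paragraph does not exist and needs no fix. By (H2) a relay pin has only one $I'$-edge, by (H3) every $I'$-weight equals the depth exactly, and you only need \emph{lower} bounds on $J'$-weights, which the minimum-distance property already gives for nucleus endpoints. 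So nothing should be absorbed into $\delta$, and nothing could be, since $w_{J'}(c)\ge w_{I'}(c)$ has zero slack. Similarly, a pin on $c$ need not carry its own $I'$-edge in $c$. Your argument only uses that no two $I'$-edges of $c$ meet at a pin.
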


\medskip
\noindent
\emph{Proof of Lemma~\ref{decomp} using Lemma~\ref{lemcycles}.}
Sum the inequalities of Lemma~\ref{lemcycles} over all cycles. The total weight of the $J'$-edges in $H$ is exactly the weight of $J'$, while the total weight of the $I'$-edges in $H$ is that of $I'$, which is 
 at least the relay cost $R$ of $G$. The latter follows from the fact that $I'$ is a relay cover and therefore has weight at least $R$. In fact, equality holds by properties~(H2) and~(H3). Indeed, there is
exactly one $I'$-edge associated with each relay pin, and its weight is
equal to the depth of that relay pin. Hence, the total weight of the
$I'$-edges in $H$ is the relay cost $R$. 
 \finito

To establish Lemma~\ref{lemcycles}, we decompose the cycles into basic   structures and establish the required inequalities on these structures. We define these structures next.

{
\renewcommand{\galscale}{0.17}
\begin{figure}[!htbp]\centering
\sfg{connection.pdf}
\hfill
\sfg{double-connection.pdf}
\caption{
(a) Examples of three \textbf{connections}, where the last is a cycle; (b) examples of four \textbf{double connections}, where the last is a cycle. Thick dashed lines represent 
$I'$-edges and solid lines represent $J'$-edges. Green nodes are relay pins, nodes shown as boxes are contracted nucleus nodes, and black nodes are outer pins.}
\label{connectionsf}
\end{figure}
}

We call a simple  path or cycle in $H$ a \emph{connection} if it consists of an $I'$-edge and a $J'$-edge that share a common relay pin. 

We call a simple path or cycle in $H$ a \emph{double connection} if it
consists of three edges $(u,r_1)$, $(r_1,r_2)$, and $(r_2,v)$, where
$r_1$ and $r_2$ are distinct relay pins, $(u,r_1)$ and $(r_2,v)$ are
$I'$-edges, and $(r_1,r_2)$ is a $J'$-edge. The cycle case corresponds
to $u=v$. Thus, by~(H2) and (H3), neither $u$ nor $v$ is a relay pin.

See Figure~\ref{connectionsf} for examples of connections and double connections.

\begin{lemma}[Decomposing cycles]\label{cycledecomp}

Every cycle in the decomposition of $H$ is an edge-disjoint union of connections, double connections, and $J'$-edges. 
\end{lemma}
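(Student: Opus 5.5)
The plan is a local walk along each cycle $c$, driven by the degree structure of $H$ given by (H2) and (H3). I want to split the edges of $c$ into blocks, each of which is a single $J'$-edge, a connection, or a double connection. The key point is that $I'$-edges are sparse. By (H2), each relay or outer pin carries exactly one $I'$-edge. By (H3), each $I'$-edge has a relay pin as one endpoint, and its other endpoint is either an outer pin or a contracted nucleus. In particular, no $I'$-edge joins two relay pins, and no $I'$-edge joins an outer pin to a nucleus.

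First I will observe that two consecutive edges of $c$ are never both $I'$-edges meeting at a relay pin, since that pin carries only one $I'$-edge. They also cannot both be $I'$-edges meeting at an outer pin. If $c$ uses an outer pin $o$, then $o$ has even degree in $c$, and at most one of its two incident cycle edges is the unique $I'$-edge at $o$. Two consecutive $I'$-edges can therefore share only a contracted nucleus node.

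Next I charge each $I'$-edge $e$ of $c$ to its relay pin $r$. The cycle $c$ passes through $r$ via $e$ and one further edge $f$ of $c$. Since $r$ has only one $I'$-edge, $f$ is a $J'$-edge, so $\{e,f\}$ forms a connection at $r$. The difficulty is that a single $J'$-edge $f=(r_1,r_2)$ may be claimed by two relay pins, namely when both $r_1$ and $r_2$ are entered by $I'$-edges in $c$. In that case I group the $I'$-edge at $r_1$, the edge $f$, and the $I'$-edge at $r_2$ into a double connection, which is exactly $(u,r_1),(r_1,r_2),(r_2,v)$. The endpoints $u$ and $v$ are not relay pins, so the double connection is a simple path, or a cycle when $u=v$.

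Conversely, a $J'$-edge cannot be claimed more than twice, because it has only two endpoints. Each $I'$-edge is claimed exactly once, through its unique relay-pin endpoint. The resulting blocks are therefore edge-disjoint. Every $I'$-edge lies in exactly one connection or double connection, and every remaining edge of $c$ is a $J'$-edge, which I keep as a singleton.

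One degenerate case needs checking: when $c$ is very short, a block may close into a cycle. If $c$ has two edges, it is a connection that forms a cycle. If $c$ has three edges, it may be a double connection that forms a cycle. The definitions explicitly allow both, so these cases cause no problem.

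The main obstacle is making the pairing at a relay pin well defined when $c$ visits that pin more than once. A simple cycle visits each node at most once, which settles this, but I would state it explicitly. I must also check that an edge of $c$ through a relay pin is never an $I'$-edge belonging to a different relay pin. This is ruled out by (H3), because no $I'$-edge has two relay-pin endpoints.
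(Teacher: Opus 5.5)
Your proposal is correct and follows the paper's proof essentially step for step. Both arguments use (H2) and (H3) to pair each $I'$-edge of $c$ with the unique $J'$-edge of $c$ at its relay-pin endpoint, merge the two resulting connections into a double connection exactly when that $J'$-edge joins two such relay pins, and keep the remaining $J'$-edges as singletons. (One small inaccuracy in your degenerate-case remark: a two-edge cycle made of two parallel $J'$-edges is not a connection but two singleton $J'$-edges, which your general argument already handles.)
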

{\em Proof.}
Consider any cycle $c$ in the decomposition of $H$.
It follows from~(H2) and~(H3) that every $I'$-edge in $c$ connects
a relay pin to a node that is not a relay pin, and that no two
$I'$-edges in $c$ are incident to the same relay pin. 
This is the only property we need to establish the decomposition.

Let $X$ be the set of relay pins in $c$ incident to $I'$-edges,
$A$ the set of $I'$-edges in $c$, and $B$ the set of $J'$-edges in $c$. Thus, each relay pin in $X$ is incident to a unique edge in $A$, and
there is a one-to-one correspondence between the relay pins in $X$ and
the  edges in $A$.
Since $c$ is a \emph{cycle}, each relay pin  in $X$ is incident to \emph{exactly two edges} in 
$c$.
One of these edges is in $A$ and the other in $B$.

If two distinct relay pins $r_1,r_2$ in $X$ are adjacent in $c$, then
they must be connected by an edge in $B$.
For every such pair $\{r_1,r_2\}$, consider the double connection consisting
of the edge in $A$ incident to $r_1$, the edge $(r_1,r_2)$ in $B$, and the edge
in $A$ incident to $r_2$.

For every relay pin $r$ in $X$ that is not connected by an edge in $B$ to another relay pin in $X$, consider the connection consisting of the two edges in $A$ and $B$ incident to $r$.

This covers all edges in $A$, and all constructed connections and
double connections are pairwise edge-disjoint, since there is a one-to-one
correspondence between the relay pins in $X$ and the  edges in
$A$, and any two relay pins in $X$ incident to the same edge in $B$ were
grouped into a double connection.

Finally, consider as a separate $J'$-edge each edge in $B$ not covered
by the constructed connections and double connections. These pieces
therefore form an edge-disjoint decomposition of $c$.
\finito

Lemma~\ref{lemcycles} then follows from Lemma~\ref{cycledecomp} and
Lemma~\ref{lemcyclesreduced} below, which establishes the required weight
comparisons for connections, double connections, and $J'$-edges.

\begin{lemma}\label{lemcyclesreduced}
If $c$ is a connection, double  connection, or a $J'$-edge in $H$, then  $w_{J'}(c) \geq w_{I'}(c)$.

Moreover, if $c$ contains an  edge satisfying~(H5), then  $w_{J'}(c) \geq w_{I'}(c)+\Delta-3\delta$.
\end{lemma}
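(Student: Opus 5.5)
The plan is a case analysis on the three kinds of pieces. Two tools do all the work. The first is (H4): each $J'$-edge weighs at least the $L_1$-distance between the defects it joins. The second is (H3): each $I'$-edge weighs exactly the depth of its relay pin. That depth is also the $L_1$-distance from the relay pin to its outer pin, and to its nucleus pin (or the gadget centre, for an empty nucleus). It is therefore at most the distance from the relay pin to any defect of the contracted nucleus, up to the bounded slack coming from the nucleus geometry.

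A single $J'$-edge has $w_{I'}(c)=0$, so the first inequality is trivial and the second holds whenever the edge has weight at least $3(\Delta-\delta)\ge\Delta-3\delta$.

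For a connection $(u,r)$, $(r,v)$, with $I'$-edge $(u,r)$ and $J'$-edge $(r,v)$, we have $w_{I'}(c)=\operatorname{depth}(r)\le\Delta$. If $v$ is a defect distinct from $r$, I claim its $L_1$-distance from $r$ is at least $\operatorname{depth}(r)$, with a small slack. This follows from the geometry of Figure~\ref{general}. The relay pin lies on the relay box. No other defect lies in the interior of the diamond $B(r)$ used in Lemma~\ref{localrelaycover}. The nucleus defects lie at distance at least the depth from $r$, because the nucleus pin sits at exactly that distance and is the closest point of the nucleus box to $r$. I would verify this by placing the nucleus pin at the point of the nucleus box nearest $r$. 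Then (H4) gives $w_{J'}(c)\ge w_{I'}(c)$. If the $J'$-edge is long, so $w_{J'}\ge 3(\Delta-\delta)$, then $w_{J'}-w_{I'}\ge 3(\Delta-\delta)-\Delta\ge\Delta-3\delta$, since $\operatorname{depth}(r)\le\Delta$.

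For a double connection $(u,r_1)$, $(r_1,r_2)$, $(r_2,v)$ we need $d_1(r_1,r_2)\ge \operatorname{depth}(r_1)+\operatorname{depth}(r_2)$, where $d_1$ denotes $L_1$-distance. Within one gadget, two distinct relay pins lie on different sides of the relay box or at distinct positions on it. Their diamonds $B(r_1)$, $B(r_2)$ have disjoint interiors (as noted in Lemma~\ref{localrelaycover}), so the $L_1$-distance between the centres is at least the sum of the radii. If the pins belong to different gadgets, the same disjointness of diamonds gives the bound. In the long case, $w_{J'}-w_{I'}\ge 3(\Delta-\delta)-2\Delta=\Delta-3\delta$, since each depth is at most $\Delta$.

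The main obstacle is the connection case with $v$ a contracted nucleus of a different gadget, or with $v$ an outer pin. Here I must confirm that every primal defect other than $r$ lies outside the open diamond $B(r)$. This is where the relay-box geometry and the condition $A,B=o(\Delta)$ are used. I would handle it by observing that the diamond of radius $\operatorname{depth}(r)$ around $r$ reaches exactly the outer pin on one side and the nucleus pin on the other. I would then check that it meets no other defect: internal nucleus defects lie behind the nucleus pin, and neighbouring gadgets lie beyond the outer box.
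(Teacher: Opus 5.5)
Your proposal is correct and follows essentially the same route as the paper: the trivial single-$J'$-edge case, the connection case via (H3), (H4) and the fact that $\operatorname{depth}(r)$ is the minimum distance from $r$ to any other defect, and the double-connection case via the disjoint diamonds $B(r_1),B(r_2)$, with the long-edge bounds $3(\Delta-\delta)-\Delta$ and $3(\Delta-\delta)-2\Delta$ exactly as in the paper. The only thing to tidy is your ``small slack'' hedge: the minimum-distance fact holds exactly (the paper simply asserts it), and it must, because $w_{J'}(c)\ge w_{I'}(c)$ is summed over all pieces and any per-piece deficit would accumulate and destroy the final bound $R+\Delta-3\delta$.
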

{\em Proof.} 
We start with the connection case. A key observation is that for any adjacent $I'$-edge and $J'$-edge  incident to a common relay pin $r$, the weight of the $I'$-edge is at most that of the $J'$-edge. This follows from~(H3) and~(H4), 
together with the fact that the depth of $r$ is the minimum distance from $r$ to any other defect.

Let $c$ be a connection, with $I'$-edge $e$ and $J'$-edge $f$, which are incident to a common relay pin.  Then
$w_{I'}(c)=w(e)$ and $w_{J'}(c)=w(f)$, 
where $w$ denotes the weight in $H$.  By the above observation, $w(e)\leq w(f)$, and hence $w_{J'}(c)\geq w_{I'}(c)$. Moreover, if $c$ contains an  edge satisfying~(H5), then
$w_{J'}(c)\geq 3(\Delta-\delta)\geq w_{I'}(c)+2\Delta-3\delta$,
since $w_{I'}(c)=w(e)\leq\Delta$.

For the double  connection, the key observation is that if $r_1$ and $r_2$ are distinct relay pins, $e_1$ and $e_2$ are the  $I'$-edges incident to $r_1$ and $r_2$, respectively, and $f$ is a  $J'$-edge 
 connecting $r_1$ and $r_2$, then $w(e_1)+w(e_2)\leq w(f)$.

To see this, note that by~(H3),
$w(e_1)$ and $w(e_2)$ are the depths of $r_1$ and $r_2$, respectively.  
 Thus, if we place an $L_1$ ball of radius $w(e_1)$ centered at $r_1$, and similarly an $L_1$ ball of radius $w(e_2)$ centered at $r_2$, then the two diamonds are either disjoint or intersect only along their boundaries. Consequently,
$w(e_1)+w(e_2)\leq d(r_1,r_2)\leq w(f)$,
where $d(r_1,r_2)$ denotes the $L_1$ distance between $r_1$ and $r_2$, and the second inequality follows from~(H4).

Consider a double connection $c$. Let $e_1$ and $e_2$ be its two  $I'$-edges, let $r_1$ and $r_2$ be their respective relay endpoints, and let $f$ be the 
$J'$-edge  connecting $r_1$ and $r_2$. 
Then $w_{I'}(c)=w(e_1)+w(e_2)$ and  $w_{J'}(c)=w(f)$. 
Therefore, by the key observation, 
$w_{J'}(c)\geq w_{I'}(c)$.

Moreover, if $c$ contains  an  edge satisfying~(H5), then
$w_{J'}(c)\geq 3(\Delta-\delta)
\geq w_{I'}(c)+\Delta-3\delta$, 
since $w_{I'}(c)=w(e_1)+w(e_2)\leq 2\Delta$.

Finally, the $J'$-edge  case is trivial since $w_{I'}(c)=0$. In particular,   $w_{J'}(c) \geq w_{I'}(c)$. Moreover, if $c$ is the   edge satisfying~(H5), then $w_{J'}(c)\geq 3(\Delta-\delta)= w_{I'}(c)+3(\Delta-\delta)$. 
\finito

\subsection{Proof of Lemma~\ref{localred}}\label{canlemBP}
In this section, we establish Lemma~\ref{localred}, which is restated below for convenience.

\medskip
\medskip
\noindent 
\textbf{Lemma~\ref{localred} 
(Local coupling)}~
{\em 
Consider any gadget $g$. For each relay pin of $g$, consider a path incident to that pin and satisfying condition~(1C) in the definition of a canonical join, i.e., the path matches 
the  relay pin  to exactly one of the following:
\begin{itemize}
\item[$\bullet$] its corresponding outer pin by a path of length at most $7\Delta/5$;
\item[$\bullet$] a defect in the nucleus of $g$ by a path of length at most $7\Delta/5$; or
\item[$\bullet$] another relay pin of $g$ by a path of length at most $12\Delta/5$.
\end{itemize}
 Let $q$ be a dual path connecting a dual-lattice node in the nucleus box to a node outside the red box. Then $q$ contains at least $2\Delta/5-3\delta-1$ edges whose dual edges lie in the red box and are not contained in any of the paths incident to the relay pins.
}
\medskip
\medskip

The key idea is that high coupling between a primal and a dual path requires both paths to move diagonally over a substantial distance. Such movement is limited when the primal path connects two horizontally or vertically aligned primal lattice points and has bounded length.  This intuition is formalized in Lemma~\ref{singlecoupling}.

To handle the coupling between $q$ and multiple primal paths, we restrict attention to the complement of the relay box within the red box, shown in orange in Figure~\ref{ringcases}(a). 
 We call this region the \emph{ring} of the gadget.  Working in the ring, rather than in the entire red box, simplifies the analysis of the interaction between the dual path and the multiple paths incident to the relay pins, as shown in Lemma~\ref{multicoupling}. The reason is that, within the ring, the minimum distance between such paths is large,  
yielding a lower bound on the number of edges of the dual path that are not dual to any primal edge.

Throughout the remainder of this section, let $q'$ denote a subpath of $q$ with exactly one node in the relay box and exactly  one node outside the red box. Thus, $q'$ crosses from the relay box into the ring exactly once and from the ring to the exterior of the red box exactly once. We work with $q'$ rather than with the entire path $q$.

{
\renewcommand{\galscale}{0.4}
\begin{figure}[!htbp]\centering
\sfg{ring.pdf}\hfill
\sfg{localCase1.pdf}\hfill
\sfg{localCase2.pdf}\hfill
\sfg{localCase2.5.pdf}
\caption{(a) \textbf{The ring.} The ring, shown in orange, is the complement of the relay box within the red box. The width of the ring on the side of the  relay pin $r$ is $\operatorname{depth}(r)-(\Delta/5+\delta)\geq 4\Delta/5-2\delta$, since $\operatorname{depth}(r)\geq\Delta-\delta$.   
The blue regions in parts (b)--(d) contain the portions of the pin paths lying in the red box in \textbf{Cases~1--3 in the proof of  Lemma~\ref{multicoupling}}, respectively. A path 
connecting a relay pin to its corresponding outer pin or to a nucleus defect 
 can deviate by at most $\Delta/5+\delta/2$ from a shortest path between the two pins, since its length is at most $7\Delta/5$ while their distance is at least $\Delta-\delta$. Similarly, a path connecting two  relay pins can deviate by at most $\Delta/5$ from a shortest path, since its length is at most $12\Delta/5$ while the relay pins are at distance at least $2\Delta$. 
 }
\label{ringcases}
\end{figure}
}

\begin{lemma}[Coupling with multiple primal paths]
\label{multicoupling}
If $q'$
contains two edges whose dual edges lie in the ring and are contained in paths incident to two distinct relay pins, then the number of 
edges in $q'$ whose dual edges lie in the ring and are not contained in any of the paths incident to the relay pins is at least 
 $8\Delta/5 -
 \delta-1$.
\end{lemma}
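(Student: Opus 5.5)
The plan is to exploit the fact that, inside the ring, the pin paths are confined to narrow corridors around the straight segments through the relay pins, and these corridors are far apart. First, since $q'$ has exactly one node in the relay box and exactly one node outside the red box, and these must be its two endpoints, every other node of $q'$ lies in the ring. Let $e_1,e_2$ be two edges of $q'$ whose dual edges lie in the ring and lie in paths incident to distinct relay pins $r_1\neq r_2$. Let $q''$ be the subpath of $q'$ strictly between $e_1$ and $e_2$. All nodes of $q''$ lie in the ring, so it suffices to lower bound the number of edges of $q''$ whose duals avoid every pin path.

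Second, I would localize the pin paths inside the ring, using the deviation bounds recorded in Figure~\ref{ringcases}, following Cases~1--3 there. A path from a relay pin to its outer pin or to a nucleus defect has length at most $7\Delta/5$ and endpoint distance at least $\Delta-\delta$. Hence it stays within $L_\infty$-distance $\Delta/5+\delta/2$ of the axis-parallel segment through the pin perpendicular to its side. A relay--relay path has length at most $12\Delta/5$ and endpoint distance at least $2\Delta$, so it deviates by at most $\Delta/5$ from a shortest path. Such a path must traverse the relay box, so its portion inside the ring consists of two pieces, each within the corridor of one of its two relay pins. In all cases, the part of a pin path lying in the ring that is associated with pin $r$ is contained in a corridor $C(r)$. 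This corridor is a strip of half-width at most $\Delta/5+\delta/2$ centred on the line through $r$ orthogonal to the side of $r$, and it lies in the part of the ring adjacent to that side. Thus $e_1$ is dual to an edge in $C(r_1)$ and $e_2$ to an edge in $C(r_2)$. I take $C(r_1)$ and $C(r_2)$ to be distinct corridors, and I assume each side carries at most one pin, as in all gadgets of Section~\ref{glib}.

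Third, the geometric count. The relay pins sit at the midpoints of sides, at distance $2\Delta$ from the corners of the outer box. Two distinct corridors are therefore either on adjacent sides or on opposite sides of the ring. Any walk of dual edges inside the ring from $C(r_1)$ to $C(r_2)$ must leave $C(r_1)$ and enter $C(r_2)$. The edges it uses outside all corridors are exactly the edges whose duals are in no pin path; this holds for the middle part, and for edges near the endpoints up to a $\pm1$ correction, which is the source of the $-1$ in the bound. For adjacent sides, the $L_1$-distance in the ring from the boundary of $C(r_1)$ to that of $C(r_2)$ is at least roughly $2\bigl(2\Delta-(\Delta/5+\delta/2)\bigr)$ minus the ring-width correction near the corner. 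I would work this out with explicit coordinates, using ring width at least $4\Delta/5-2\delta$ and corridor half-width $\Delta/5+\delta/2$, to obtain at least $8\Delta/5-\delta-1$. The opposite-sides case is at least as far, since the walk must pass around a whole side of the ring.

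The main obstacle is the corner computation. I must make sure the corridor half-widths, the offset $\delta$, and the inner and outer radii of the ring ($\Delta+\text{nucleus}$ versus $9\Delta/5-\delta$) combine to exactly $8\Delta/5-\delta-1$ rather than something weaker. A second subtlety is that $q''$ may revisit corridors. I would handle this by taking the last exit from $C(r_1)$ before the first entry into $C(r_2)$, and counting only that excursion, whose interior edges are all uncoupled.
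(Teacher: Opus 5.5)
\textbf{The gap: your corridor claim fails for adjacent relay pins.} You claim that a relay--relay path meets the ring only in two pieces, each inside the corridor of one of its pins. This is false, and it is exactly why the paper needs Cases~2 and~3. The claim holds when the two pins are on opposite sides, but not when they are on adjacent sides. Take $r_1=(-a,0)$ and $r_2=(0,b)$ with $a,b\in[\Delta,\Delta+\delta]$. The pins span a quadrant of the relay box, and a path of length at most $12\Delta/5$ may leave that quadrant by up to $\Delta/5$. It need not enter the relay box at all. For instance: step out by one, run up along the relay box to height $b+1$, go across to $x=0$, and come down to $r_2$. This path has length $a+b+4\le 12\Delta/5$ (for $\Delta$ large compared with $\delta$), so it satisfies~(1C). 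Yet it runs through the ring, outside every corridor, all the way around the corner. Hence $e_1$ need not be dual to an edge of any corridor. Your assertion that the edges of $q'$ outside all corridors are exactly the uncoupled ones also fails, and your adjacent-versus-opposite corridor count does not cover these configurations.

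\textbf{How the paper handles it.} It splits the analysis by how many pairs of adjacent relay pins are joined. In the ring, the pin paths then lie in the cross arms of the remaining pins, together with an L-shaped strip of width $\Delta/5$ hugging the relay box for each joined adjacent pair (Figure~\ref{ringcases}(c),(d)). Those pins are not joined to their outer pins, so each strip penetrates the ring only $\Delta/5$ deep and extends only $\Delta/5$ past the pin axes. A strip is therefore at distance at least $8\Delta/5-\delta/2$ from any arm and at least $8\Delta/5$ from another strip. Every pair of distinct components thus remains at least $8\Delta/5-\delta$ apart, and the bound survives once these cases are added.

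\textbf{Two further corrections.}
\begin{itemize}
\item The separation count must be taken between distinct connected components, not distinct pins. If both coupled edges lie on a single adjacent relay--relay path, they can be consecutive edges of $q'$, so no separation of order $8\Delta/5$ exists. That case has to be charged to a single-path (ring-penetration) bound instead. Your per-pin formulation would wrongly claim the $8\Delta/5$ bound there.
\item Your corner estimate should use the half-side $2\Delta-\operatorname{depth}(r)\ge\Delta$ of the relay box, not the ring width. The cheapest route between adjacent arms hugs the relay box, giving $(a-\Delta/5-\delta/2)+(b-\Delta/5-\delta/2)\ge 8\Delta/5-\delta$. This is the paper's one-line Case~1 computation.
\end{itemize}
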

{\em Proof.}
We consider three cases, depending on
the number of adjacent  relay pins that are connected.

\emph{Case 1.} Suppose that no two adjacent relay pins are connected. 
Then the portions of the corresponding pin paths contained in the red box lie in the blue cross shown in Figure~\ref{ringcases}(b).
 Since $q'$ is connected, it must traverse a distance of at least $d-1$ without coupling to any of the pin paths, where $d$ is the minimum distance between two connected components in the intersection of the ring and the cross. A shortest such path is shown in red in Figure~\ref{ringcases}(b), and has length 
$d \geq 2(\Delta-
(\Delta/5+\delta/2)) = 
 8\Delta/5 - \delta$.

\emph{Case 2.} Suppose that exactly one pair of adjacent relay pins is connected. Then, within the red box, the pin paths are contained in the blue region shown, up to symmetry, in Figure~\ref{ringcases}(c). Arguing as in Case~1, we obtain the lower bound $d-1$, where
 $d \geq (2\Delta- 
(\Delta/5+\delta/2)-\Delta/5) = 
 8\Delta/5 - \delta/2$.

\emph{Case 3.} Suppose that two  pairs of adjacent relay pins are connected.
Then, within the red box, the corresponding pin paths are contained in the blue region shown, up to symmetry, in Figure~\ref{ringcases}(d). Again, by arguing as in Case~1, we obtain the lower bound $d-1$, where
 $d \geq (2\Delta- 2\Delta/5) = 
 8\Delta/5$. 
 \finito

Thus, we only have to consider the coupling between the dual path $q'$ and a single primal pin path. Lemma~\ref{singlecoupling} handles the key case in which the primal path connects a relay pin to its corresponding outer pin and $q'$ enters the ring through the side containing the relay pin and exits through the side containing the outer pin. Lemma~\ref{singlecouplingbounday} handles all remaining cases.

\begin{lemma}[Main single-path coupling]\label{singlecoupling}
Consider a primal path $p$ connecting a relay pin $r$ to its corresponding outer pin $o$, and suppose that $p$ has length at most $7\Delta/5$. If $q'$ enters the ring through the side containing $r$ and exits through the side containing $o$, then the number 
$Q'$ of edges of $q'$ whose dual edges lie in the ring and are not contained in $p$ is at least $2\Delta/5-3\delta-1$.
\end{lemma}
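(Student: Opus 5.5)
Set up coordinates on the side of the ring containing $r$: take $r$ and $o$ on a common vertical line, say $r$ below $o$, with $\operatorname{depth}(r)=d\ge\Delta-\delta$. On this side the ring is the horizontal strip between the relay box (height of $r$) and the red box boundary, at height roughly $d-(\Delta/5+\delta)$ above $r$. Its vertical width is therefore $h\ge 4\Delta/5-2\delta$ (Figure~\ref{ringcases}(a)). The plan is a potential/projection argument. Each edge of $q'$ whose dual lies in the ring and in $p$ is \emph{coupled}; every other ring edge of $q'$ counts toward $Q'$. Since $q'$ enters the ring through the $r$-side and exits through the $o$-side, its ring portion must make net vertical progress of at least $h$ across the strip. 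I will bound how much of that progress the coupled edges can provide.

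First I will use the length budget of $p$. Because $|p|\le 7\Delta/5$ and $d(r,o)=d\ge\Delta-\delta$, the excess of $p$ over a shortest (vertical) path is at most $7\Delta/5-d\le 2\Delta/5+\delta$. Hence $p$ has at most $(2\Delta/5+\delta)/2$ downward steps and at most $\Delta/5+\delta/2$ horizontal steps in each direction. The key geometric fact is about duality: a dual edge crossing a primal edge is perpendicular to it. So a vertical dual edge of $q'$ that is coupled must be dual to a horizontal edge of $p$. Vertical progress of $q'$ coupled to $p$ is thus bounded by the total number of horizontal edges of $p$, which is at most $2\Delta/5+\delta$. Vertical edges of $p$ can only couple to horizontal edges of $q'$, which give no vertical progress.

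Then I count. The ring portion of $q'$ has at least $h$ vertical edges (upward minus downward is at least $h$, up to $\pm1$ for the boundary crossing). At most $2\Delta/5+\delta$ of these are coupled. So at least
\[
h-(2\Delta/5+\delta)-1\ \ge\ 4\Delta/5-2\delta-2\Delta/5-\delta-1\ =\ 2\Delta/5-3\delta-1
\]
vertical ring edges of $q'$ are uncoupled, which gives $Q'\ge 2\Delta/5-3\delta-1$ as claimed.

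The main obstacle is making the "crossing the strip" claim rigorous, including the $-1$ boundary effect. $q'$ has exactly one node in the relay box and exactly one node outside the red box, but its ring portion could wander to other sides of the ring. I will handle this by restricting attention to the maximal subpath of $q'$ inside the ring and projecting onto the vertical coordinate. The entering side is the $r$-side, so the first ring node is adjacent to the relay box on that side, at height $\approx d-\operatorname{depth}$-offset. The last node is adjacent to the red box exterior on the $o$-side. Their vertical separation is at least $h-1$ after accounting for the half-integer offset between primal and dual lattices. Any excursion toward other sides only adds vertical edges, and the coupled-edge bound via horizontal edges of $p$ is global, so it does not depend on where those edges occur.
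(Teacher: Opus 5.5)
Your proposal is correct and is essentially the paper's own argument rotated by $90^\circ$. The paper takes $r,o$ horizontally aligned, lower-bounds the number of $q'$-edges moving in the $r\to o$ direction by the ring width $\Delta_{r,o}-(\Delta/5+\delta)$, subtracts one for a possible boundary edge, and subtracts the number of perpendicular edges of $p$, which the length budget bounds by $\|p\|_1-\Delta_{r,o}$; this is exactly your perpendicularity-plus-length-budget count, giving the same bound $2\Delta/5-3\delta-1$.
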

{\em Proof.}
Assume without loss of generality that $r$ and $o$ are horizontally aligned.

The key observation is that $p$ cannot move far in the vertical direction: it must travel horizontally from $r$ to $o$, while its total length is bounded. On the other hand, $q'$ must travel horizontally from the side of the ring containing $r$ to the exterior of the red box through the side containing $o$, and therefore contains many right-moving edges. Such edges can be dual to only  the upward- or downward-moving edges of $p$. We lower-bound $Q'$ by subtracting the total number of vertical edges of $p$ from the number of right-moving edges of $q'$ whose duals lie in the ring.

Orient $p$ from $r$ to $o$, and orient $q'$ from its endpoint in the relay box to its endpoint outside the red box. Recall that $q'$ has exactly one endpoint in the relay box and exactly one endpoint outside the red box; see Figure~\ref{maincasef}.

{
\renewcommand{\galscale}{0.24}
\begin{figure}[!htbp]\centering
\sfg{localCase7.pdf}\hfilli\hfilli\hfilli
\sfg{localCase8.pdf}
\caption{\textbf{Single-path coupling.}
Part (a) 
illustrates  the setting in Lemma~\ref{singlecoupling}. Part (b) shows
a primal path of the same length as in (a) that achieves maximum coupling.}
\label{maincasef}
\end{figure}
}

Let $N_{\rightarrow}(p)$, $N_{\leftarrow}(p)$, $N_{\uparrow}(p)$, and $N_{\downarrow}(p)$ denote the numbers of right-, left-, upward-, and downward-moving edges of $p$, respectively. Define $N_{\rightarrow}(q')$, $N_{\leftarrow}(q')$, $N_{\uparrow}(q')$, and $N_{\downarrow}(q')$ analogously. Let $\Delta_{r,o}$ denote the distance between $r$ and $o$. Then
\begin{eqnarray}
N_{\rightarrow}(p)-N_{\leftarrow}(p) &=& \Delta_{r,o},\label{eqr1}\\
N_{\rightarrow}(q')-N_{\leftarrow}(q')&\geq&
\Delta_{r,o}-(\Delta/5+\delta),\label{eqr2}\\
\Delta_{r,o}&\geq&\Delta-\delta.\label{eqr3}
\end{eqnarray}

The length of $p$ is
$N_{\uparrow}(p)+N_{\downarrow}(p)+N_{\leftarrow}(p)+N_{\rightarrow}(p)=\|p\|_1$.
Using~\eqref{eqr1}, we obtain
\begin{equation}\label{eqr4}
N_{\uparrow}(p)+N_{\downarrow}(p)
=\|p\|_1-\Delta_{r,o}-2N_{\leftarrow}(p).
\end{equation}
By the hypothesis of the lemma,
\begin{equation}\label{eqr5}
\|p\|_1\leq 7\Delta/5.
\end{equation}

At most one right-moving edge of $q'$ has its dual edge outside the ring, namely, possibly, the last edge of $q'$. Hence, at least $N_{\rightarrow}(q')-1$ right-moving edges of $q'$ have dual edges in the ring. Therefore,

\begin{eqnarray*}
Q'
&\geq&
N_{\rightarrow}(q')-1-
\bigl(N_{\uparrow}(p)+N_{\downarrow}(p)\bigr)\\
&\geq&
\Delta_{r,o}-(\Delta/5+\delta)+N_{\leftarrow}(q')-1
-\|p\|_1+\Delta_{r,o}+2N_{\leftarrow}(p)
\qquad\mbox{(by~\eqref{eqr2} and~\eqref{eqr4})}\\
&=&
2\Delta_{r,o}-(\Delta/5+\|p\|_1)-\delta-1
+N_{\leftarrow}(q')+2N_{\leftarrow}(p)\\
&\geq&
2(\Delta-\delta)-(\Delta/5+7\Delta/5)-\delta-1
\qquad\mbox{(by~\eqref{eqr3}, \eqref{eqr5}, and
$N_{\leftarrow}(p),N_{\leftarrow}(q')\geq0$)}\\
&=&
2\Delta/5-3\delta-1.
\end{eqnarray*}
\finito

\begin{lemma}
[Remaining single-path cases]\label{singlecouplingbounday}
Consider a primal path $p$ incident to a relay pin $r$ and satisfying condition~(1C) in the definition of a canonical join. If $p$ and $q'$ do not satisfy the hypotheses of Lemma~\ref{singlecoupling}, then the number of edges of $q'$ whose dual edges lie in the ring and are not contained in $p$ is at least $3\Delta/5-5\delta/2-1$.
\end{lemma}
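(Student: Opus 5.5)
We split into cases according to which kind of pin path $p$ is and which sides of the ring $q'$ crosses. The basic tool is the one used in Figure~\ref{ringcases}. A path satisfying condition~(1C) stays within a thin neighborhood of a shortest path between its endpoints. Specifically, a relay--outer or relay--nucleus path deviates by at most $\Delta/5+\delta/2$, and a relay--relay path deviates by at most $\Delta/5$. So, inside the ring, $p$ is confined to a blue strip: the strip runs along the side of the ring containing $r$, or along the side containing the partner relay pin, or along the straight corridor toward the nucleus or outer pin. The ring has width at least $4\Delta/5-2\delta$ on every side. We will exhibit a long stretch of $q'$ inside the ring that avoids this strip, and count its edges. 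Each edge of $q'$ is dual to at most one primal edge, so any edge of $q'$ lying outside the strip has a dual edge not in $p$.

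\emph{Case A: $q'$ crosses the ring through a side that does not contain $r$.} In this case $p$ meets that side only near its corner, or not at all, within the deviation $\Delta/5+\delta/2$ of the strip. The path $q'$ must cross the full width of the ring on that side. Crossing the ring requires at least $4\Delta/5-2\delta$ steps in the ring, and the portion that can lie in the strip is at most $\Delta/5+\delta/2$. So at least $4\Delta/5-2\delta-(\Delta/5+\delta/2)-1=3\Delta/5-5\delta/2-1$ of the edges of $q'$ are uncoupled to $p$. The $-1$ accounts for the boundary edge.

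\emph{Case B: $q'$ crosses through the side containing $r$, but $p$ is not a relay--outer path.} Here $p$ runs from $r$ inward toward the nucleus, or along the relay box to another relay pin. In either case $p$ lies inside the relay box, apart from a deviation of at most $\Delta/5+\delta/2$ into the ring. Therefore the part of the ring at distance more than $\Delta/5+\delta/2$ from the relay box contains no edge of $p$. The path $q'$ must traverse this part, which has width at least $\operatorname{depth}(r)-(\Delta/5+\delta)-(\Delta/5+\delta/2)\ge 3\Delta/5-5\delta/2$. This again gives the bound after subtracting one boundary edge.

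\emph{Case C: $p$ is a relay--outer path but $q'$ exits the red box through a side other than the one containing $o$.} Then $q'$ has to leave the corridor of $p$. That corridor has width $2(\Delta/5+\delta/2)$ around the segment $ro$. After leaving it, $q'$ still has to travel either the remaining ring depth, or along the ring to a different side, and both distances exceed $3\Delta/5-5\delta/2$. The step I expect to be hardest is bookkeeping this geometry uniformly across cases. In particular, corner cases need care: $q'$ may enter through the side containing $r$ near a corner and then exit through an adjacent side. For those I would take the minimum over the two measured distances, namely the ring depth beyond the strip and the distance along the ring to the other side. I would verify that both are at least $3\Delta/5-5\delta/2$, using $\operatorname{depth}(r)\ge\Delta-\delta$ and the red-box radius $9\Delta/5-\delta$.
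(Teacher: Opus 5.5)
Your overall strategy matches the paper's: use the length bounds of condition~(1C) to confine $p$ inside the red box to a thin region, then find a stretch of $q'$ that must lie outside it. The gap is in Case~A when $p$ is a relay--outer path.

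If $q'$ enters the ring through a side of the relay box not containing $r$, nothing forces it to cross the ring on that side. It can move laterally into the part of the ring on the side of $r$ and reach the corridor of $p$. It can then run inside that corridor through the whole depth of the ring and leave the red box through the side containing $o$. In this scenario your count (``the portion that can lie in the strip is at most $\Delta/5+\delta/2$'') is false, because the corridor spans the full ring depth on that side. The configuration is not rescued by Case~C either, since $q'$ does exit through the side of $o$.

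The missing ingredient is the lateral distance, which is exactly the paper's first subcase. The relay pin $r$ is aligned with the midpoint outer pin, and every relay depth is at most $\Delta$. So $r$ is at distance at least $\Delta$ from the corners of the relay box. Hence the corridor is at distance at least $\Delta-(\Delta/5+\delta/2)=4\Delta/5-\delta/2$ from every side of the relay box not containing $r$. The path $q'$ must cover this distance before it first couples with $p$. If it never couples with $p$, the ring width alone gives $4\Delta/5-2\delta-1$, which the paper treats first. Organizing the proof by which hypothesis of Lemma~\ref{singlecoupling} fails makes the decomposition clean:
\begin{itemize}
\item[$\bullet$] $q'$ does not enter through the side of $r$;
\item[$\bullet$] $q'$ does not exit through the side of $o$;
\item[$\bullet$] $p$ is not a relay--outer path.
\end{itemize}

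Two smaller corrections. First, your claim in Case~A that $p$ meets another side of the ring only near a corner is false for a path between adjacent relay pins. Such a path can hug the entire half-side of the relay box containing the partner pin. Your bound survives there only because such paths, like relay--nucleus paths, stay within $\Delta/5+\delta/2$ of the relay box. So your Case~B radial argument applies whichever side $q'$ uses, provided you take the minimum ring width over all four sides, which is at least $4\Delta/5-2\delta$, rather than the width on the side of $r$.

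Second, phrase every bound as a distance that $q'$ must traverse outside the strip. This distance runs from its entry point or its last coupling to the strip or to the red-box boundary. It should not be phrased as a bound on how many edges of $q'$ lie inside the strip, since that number is not controlled by the strip's width.
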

{\em Proof.}
First, suppose that no edge of $q'$ whose dual edge lies in the ring is
contained in $p$. Since $q'$ crosses the ring from the relay box to the
exterior of the red box, it contains at least
$4\Delta/5-2\delta-1$ such edges, as the width of the ring is at least
$4\Delta/5-2\delta$. Hence, the desired bound follows.
We may therefore assume that $p$ and $q'$ couple within the ring.

We argue that the failure of the hypotheses of Lemma~\ref{singlecoupling} implies that $q'$ must travel a substantial distance without coupling to $p$, either upon entering the ring or upon leaving it.

Assume first that $p$ connects $r$ to its corresponding outer pin. Then the portion of $p$ contained in the red box is contained in the blue rectangle shown, up to symmetry, in Figure~\ref{ring2f}(a).

{
\renewcommand{\galscale}{0.4}
\begin{figure}[!htbp]\centering
\sfg{localCase6.pdf}\hfill 
\sfg{localCase4.pdf}\hfill
\sfg{localCase5.pdf}
\caption{\textbf{Cases in the proof of Lemma~\ref{singlecouplingbounday}.}} 
\label{ring2f}
\end{figure}
}

If $q'$ does not enter the ring through the side containing $r$, then it must traverse a distance of at least $d-1$ without coupling to $p$, where $d$ is the minimum distance from the intersection of the blue rectangle  with the ring to a node on a side of the ring not containing $r$. A shortest such path is shown in red in Figure~\ref{ring2f}(a), and has length
$d\ge \Delta-(\Delta/5+\delta/2)=4\Delta/5-\delta/2$.

If $q'$ does not exit the ring through the side containing $o$, then it must traverse the ring without coupling to $p$, as shown in Figure~\ref{ring2f}(b). This gives the bound $d-1$, where
$d\ge \Delta+(4\Delta/5-2\delta)-(\Delta/5+\delta/2)=8\Delta/5-5\delta/2$.

Finally, suppose that $p$ does not connect $r$ to its corresponding outer pin. Then $p$ is contained in the blue rectangle shown, up to symmetry, in Figure~\ref{ring2f}(c). Arguing as above, we obtain the bound $d-1$, where
$d\ge (4\Delta/5-2\delta)-(\Delta/5+\delta/2)=3\Delta/5-5\delta/2$.
\finito

\subsubsection{Completing the proof of   Lemma~\ref{localred}}

In this section, we prove  Lemma~\ref{localred} using
Lemmas~\ref{multicoupling}, \ref{singlecoupling}, and~\ref{singlecouplingbounday}.

Let $q$ be a dual path connecting a dual-lattice node in the nucleus box to a node outside the red box, and let  $q'$ be a subpath of $q$ with exactly one node in the relay box and exactly  one node outside the red box.  Let $Q$ denote the number of edges of $q$ whose dual edges lie in the red box and are \emph{not contained} in any path incident to a relay pin.
Similarly, let $Q'$ denote the number of edges of $q'$ whose dual edges lie in the ring and are \emph{not contained} in any path incident to a relay pin. Thus, $Q\geq Q'$. 
For each relay pin $r$ of the gadget, let $C_r$ denote the number of edges of $q'$ whose dual edges lie in the ring and are \emph{contained} in the path incident to $r$.

We distinguish the following three cases: 
\begin{itemize}
\item[$\bullet$] \emph{Case~1.} There exist two distinct relay pins $r_1\neq r_2$ such that $C_{r_1}\neq 0$ and $C_{r_2}\neq 0$.

\item[$\bullet$] \emph{Case~2.} 
 There is a unique relay pin $r$ for which $C_r\neq 0$,
and $r$ is connected to its corresponding outer pin $o$ by a path
of length at most $7\Delta/5$. 
Moreover, $q'$ enters the ring through the side containing $r$ and exits through the side containing $o$.

\item[$\bullet$] \emph{Case~3.} All remaining cases.
\end{itemize}
In Case~1, Lemma~\ref{multicoupling} implies that $Q'\geq 8\Delta/5 -
 \delta-1$.  
In Case~2, Lemma~\ref{singlecoupling} implies that
 $Q'\geq 2\Delta/5-3\delta-1$. 
In Case~3, Lemma~\ref{singlecouplingbounday} implies that
$Q' \geq 3\Delta/5-5\delta/2-1$.  
Since $Q\ge Q'$, the lemma follows.

\end{document}